\numberwithin{equation}{section}
\theoremstyle{plain}
\newtheorem{thm}{Theorem}[section]
\newtheorem{lem}[thm]{Lemma}
\newtheorem{prop}[thm]{Proposition}
\newtheorem{defn}[thm]{Definition}
\newtheorem{remark}[thm]{Remark}
\newtheorem{example}[thm]{Example}
\def\@makefnmark{\hbox{\@textsuperscript{\normalfont(\@thefnmark)}}}
\renewcommand{\P}{\mathbb{P}}
\newcommand{\Q}{\mathbb{Q}}
\newcommand{\mcA}{\mathcal{A}}
\newcommand{\mcC}{\mathcal{C}}
\newcommand{\mcL}{\mathcal{L}}
\newcommand{\mcLC}{\mathcal{L}^\mathbf{C}}
\newcommand{\mcR}{\mathcal{R}}
\newcommand{\mcN}{\mathcal{N}}
\newcommand{\mcD}{\mathcal{D}}
\newcommand{\bfc}{\mathbf{c}}
\newcommand{\bfC}{\mathbf{C}}
\newcommand{\bfe}{\mathbf{e}}
\newcommand{\bfa}{\mathbf{a}}
\newcommand{\bfL}{\mathbf{L}}
\newcommand{\bfS}{\mathbf{S}}
\newcommand{\bfW}{\mathbf{W}}
\newcommand{\bmpi}{\bm{\pi}}
\newcommand{\whbmpi}{{\widehat{\bmpi}}}
\newcommand{\bmmu}{\bm{\mu}}
\newcommand{\1}{\bm{1}}
\newcommand{\bfSigma}{\mathbf{\Sigma}}
\newcommand{\bfZ}{\mathbf{Z}}
\newcommand{\bfD}{\mathbf{D}}
\newcommand{\whbfZ}{\widehat{\bfZ}}
\newcommand{\whbfC}{\widehat{\bfC}}
\newcommand{\whC}{\widehat{C}}
\newcommand{\whX}{\widehat{X}}
\newcommand\de{\mathrm{d}}
\newcommand{\E}{\mathbb{E}}
\newcommand{\R}{\mathbb{R}}
\newcommand{\N}{\mathbb{N}}
\newcommand{\diag}{\text{diag}}
\newcommand{\less}{<}
\newcommand{\g}{>}
\DeclareMathOperator*{\argmax}{arg\,max} 
\newcommand{\Var}{\mathrm{Var}}
\begin{document}
\author[1]{Katia Colaneri, \orcidlink{0000-0003-3933-3788} \thanks{Corresponding author: katia.colaneri@uniroma2.it}}
\author[2]{Alessandra Cretarola, \orcidlink{0000-0003-1324-9342} \thanks{alessandra.cretarola@unich.it}}
\author[1]{Edoardo Lombardo, \orcidlink{0009-0009-1296-9891}\thanks{edoardo.lombardo@uniroma2.it}}
\author[3]{Daniele Mancinelli, \orcidlink{0000-0003-3854-2414}\thanks{daniele.mancinelli@polimi.it}}
\affil[1]{Department of Economics and Finance, University of Rome Tor Vergata.}
\affil[2]{Department of Economic Studies, University ``G. D'Annunzio" of Chieti-Pescara.}
\affil[3]{Department of Mathematics, Politecnico di Milano.}
\title{Carbon-Sensitive Fund Construction and Hedging for Green Unit-Linked Life Insurance}

\date{}
\maketitle
\vspace{-0.5cm}
\begin{abstract}
\noindent We study the problem of  hedging unit linked life insurance policies whose benefits depend on an investment fund that incorporates environmental criteria in its selection process. Offering these products poses
two key challenges: constructing a green investment fund and developing a hedging strategy for policies written on that fund. We address these two problems separately. First, we design a portfolio selection rule driven by firms’ carbon intensity that endogenously selects assets and avoids ad hoc pre-screens based on ESG scores. The effectiveness of our new portfolio selection method is tested using real market data. Second, we consider an insurance company issuing unit linked policies written on this fund. Such contracts are exposed to market, carbon, and mortality risk, which the insurance company seeks to hedge. Due to market incompleteness, we address the hedging problem via a quadratic approach aimed at minimizing the variance of the hedging costs. Finally, we also make a numerical analysis to assess the performance of the hedging strategy. For our simulation study, we use an efficient weak second-order scheme that allows for variance reduction.
\end{abstract}
\textbf{Keywords:} Sustainable investments; Unit linked; Carbon intensity; Risk-minimization. \\
\textbf{JEL classification:} C61, G11, G22.\\
\textbf{AMS classification:} 49L12, 60J76, 91B16, 91G20.

\section{Introduction}
Climate change has emerged as a major source of systemic risk, with far-reaching implications for financial institutions and, in particular, for the insurance industry. Recent empirical evidence highlights how climate-related factors increasingly affect asset prices and portfolio allocation decisions (\citet{hartzmark2019investors, lagerkvist2020preferences, anquetin2022scopes}). In response, institutional investors have progressively incorporated environmental criteria into their investment strategies. For example, \cite{peng2024optimal} reports that the Chinese Government Pension Investment Fund has allocated $163$ trillion yen to passive ESG index products, while the California Public Employees’ Retirement System follows a ``social change investment” approach aligned with ESG principles. This trend has extended to insurance companies, especially in the design of \emph{insurance-based investment products} (IBIPs). Among these, \emph{unit-linked life insurance policies} with sustainability features have experienced rapid growth, driven both by investor demand and regulatory frameworks such as the Sustainable Finance Disclosure Regulation (SFDR). In particular, \cite{eiopa2023costs} reported a $24$\% increase in the number of IBIPs classified under Articles $8$ and $9$ of SFDR. Moreover, a study carried out by \cite{ivass2024report}, based on a sample of $106$ IBIPs offered by $18$ insurance companies, found that $92$\% of unit linked life insurance policies are classified under Article $8$ of the SFDR.\\
From an actuarial viewpoint, the introduction of environmental considerations into unit-linked contracts raises new modeling and risk management challenges. The payoff of these policies depends on the performance of an underlying investment fund, while the timing of payments is contingent on the policyholder’s lifetime. When the underlying fund incorporates environmental characteristics -- such as exposure to firms’ carbon emissions -- the insurer is simultaneously exposed to financial risk, mortality risk, and carbon risk. Since the latter two sources of risk are not spanned by traded assets, the resulting market is incomplete, and perfect replication of liabilities is not attainable . This calls for the development of hedging strategies that explicitly account for residual, non-hedgeable risks.\\
The actuarial literature provides well-established tools for dealing with hedging in incomplete markets. In particular, quadratic hedging criteria such as mean variance hedging and risk minimization have been widely and successfully applied within the life insurance context, especially for hedging products like unit-linked policies (see, e.g., \citet{moller1998risk, schweizer2001, moller2001risk,  vandaele2008locally, ceci2015hedging, biagini2016risk, ceci2017unit, biagini2017risk}). Under these approaches, one can derive optimal hedging strategies that minimize the variance of the hedging costs arising from market incompleteness, mainly due to mortality risk. However, the integration of climate-related risk factors—such as carbon emissions—into these frameworks remains largely unexplored.\\
At the same time, a growing strand of financial literature has focused on portfolio selection under sustainability constraints. Early contributions, such as \citet{andersson2016hedging}, propose excluding high-carbon assets to reduce portfolio exposure while maintaining low tracking error. \citet{bolton2022net} extend this approach by incorporating dynamic decarbonization constraints aligned with climate targets. Alternative methodologies impose sustainability constraints directly within the optimization problem (\citet{LeGuenedalRoncalli2023, de2023esg}), while preference-based approaches embed ESG considerations into investors’ utility functions (\citet{pastor2021sustainable, pedersen2021responsible}). We discuss the comparison between our approach and this literature more in detail in Section \ref{sec:comparison}. Despite these advances, the existing literature largely focuses on asset management and does not address the implications for insurance products whose liabilities depend on such portfolios.\\
This paper contributes to bridging this gap by developing a unified framework that combines carbon-sensitive portfolio construction with the hedging of unit-linked life insurance contracts. We adopt the perspective of an insurance company issuing policies whose benefits depend on a fund that incorporates carbon intensity into its investment decisions. First, we propose a portfolio selection methodology in which carbon risk is introduced endogenously through a penalization of terminal wealth, allowing for a flexible trade-off between financial performance and environmental impact. Second, we analyze the hedging problem faced by the insurer in an incomplete market characterized by the coexistence of financial, mortality, and carbon risks. To this end, we employ a quadratic hedging criterion and derive risk-minimizing strategies for standard life insurance contracts, including pure endowment, term insurance, and endowment policies.\\
Our approach extends the classical actuarial framework for unit-linked contracts by incorporating climate-related risk as an additional source of incompleteness. In doing so, it provides a tractable setting for assessing the interaction between sustainability considerations and insurance risk management, and contributes to the emerging literature at the intersection of actuarial science and sustainable finance.\\

The remainder of the paper is organized as follows. Section \ref{sec:setup} lays out the market model and its main assumptions and derives the optimal carbon-penalised investment portfolio, which provides the underlying of the unit linked contract. Section \ref{sec:examples} presents several plausible dynamics for the carbon intensity process and the corresponding characterisation of the value function. Section \ref{sect:EMPIRICAL_FIRST_PART} reports the numerical analysis of the optimal investment fund, combining a static empirical study based on real data with a dynamic simulation exercise. In Section \ref{sect:locally_risk_minimizing}, we shift to the insurance company’s perspective and study risk minimizing hedging strategies for green unit-linked life insurance policies in an incomplete market. Section \ref{sec:hedging} is devoted to the numerical implementation of such hedging strategies. Finally, Section \ref{sect:conclusions} concludes.
\section{A carbon-penalised investment fund}\label{sec:setup}
The hedging problem for unit-linked life insurance contracts is intrinsically linked to the dynamics of the underlying investment fund. Since contract payoffs are written on the value of a reference portfolio, the insurer’s liability is directly driven by the portfolio selection rule, making asset allocation an key component of the overall risk management problem. In this paper, the underlying fund is constructed by explicitly incorporating firms’ carbon aversion into the investment decision. This introduces an additional source of risk (i.e. carbon risk) which, together with financial and mortality risks, contributes to market incompleteness. As some of these sources of uncertainty are not spanned by traded assets, perfect replication of the insurance liability is not feasible. Consequently, the residual risk faced by the insurer depends not only on the hedging strategy, but also on the structural properties of the underlying portfolio. Therefore it is natural to address both the problem of the fund construction and the problem of hedging within a unified framework.\\
To do this, we first model the investment problem within in a continuous-time financial market, where portfolio selection is formulated as a stochastic control problem with a carbon-dependent penalisation of terminal wealth. The optimal portfolio obtained in this setting defines the dynamics of the reference fund underlying the unit-linked contract. This construction provides the basis for the subsequent analysis of risk-minimizing hedging strategies in the combined financial–insurance market, carried in Sections \ref{sect:locally_risk_minimizing} and \ref{sec:hedging}.
\subsection{The financial market}
In this section we define the modeling framework for the construction of the investment fund. We let $\left(\Omega^M,\mathcal{F},\P^M\right)$ be a fixed probability space and $T$ a finite time horizon coinciding with the terminal time of an investment. We also introduce a $\P^M$-complete and right-continuous filtration $\mathbb{F}=\left\lbrace\mathcal{F}_t\right\rbrace_{t\in[0,T]}$, and we assume that all processes below are $\mathbb{F}$-adapted. We consider a financial market model consisting of $d$ stocks with price processes $\bfS=\left\lbrace\bfS_t\right\rbrace_{t\in[0,T]}$ in $\mathbb{R}^d_+$, and one risk-free asset $S_0=\{S^0_t\}_{t\in[0,T]}$, that are traded continuously on $[0,T]$. The dynamics of the risk-free asset are given by 
\begin{equation}\label{eq:cash_account}
\de S^0_t=rS^0_t\de t,\quad S^0_0=1, 
\end{equation}
where $r\in\mathbb{R}_{+}$ denotes the constant risk-free interest rate. The price dynamics of the vector of risky assets $\bfS$ are given by
\begin{equation}\label{eq:risk_assets_dyn}
\de\bfS_t=\diag(\bfS_t)\left(\bmmu\de t+\bfSigma\de\bfZ_t\right),\quad\bfS_0\in\mathbb{R}^{+},
\end{equation}
where $\bmmu\in\mathbb{R}^d$ is the vector of constant drift rates, $\bfZ=\left\lbrace\bfZ_t\right\rbrace_{t\in[0,T]}$ is a standard $d$-dimensional $\mathbb{F}$-Brownian motion with independent components, and $\bfSigma\bfSigma^\top$ is the variance-covariance matrix of log-returns. In particular, $\bfSigma=\bfL\bfD$ where $\bfD=\diag\left(\sigma_1,\dots,\sigma_d\right)$ with $\sigma_i>0$ for all $i=1,\dots,d$, and $\bfL$ is the lower triangular matrix obtained through Cholesky decomposition of the correlation matrix $\bm{\rho}=\left(\rho_{i,j}\right)_{i,j=1,\dots,d}$, so that $\bm{\rho}=\bfL\bfL^\top$.\\
Stocks are assumed to be issued by firms with different levels of carbon emissions, measured by carbon intensity. A firm’s carbon intensity is defined as the ratio between the total greenhouse gas emissions
in metric tonnes of CO$_2$ and total revenues (in USD millions). Let $\bfC=(C_1,\dots,C_d)^\top$, where each component $C_i=\{C_{i,t}\}_{t\in[0,T]}$ denotes the carbon intensity process of the $i$-th firm. We assume that $\bfC$ is a $d$-dimensional Markov process with mutually independent components taking values in $\mcD$. The 
specific choice of $\mcD$ depends on the selected model for the carbon-intensity 
process; in the examples considered below, $\mcD$ is either a finite subset of 
$\mathbb R^d$, $\mathbb R^d$, or $\mathbb R_+^d$. Moreover, we assume that $C_i$ is independent of $\bfZ$, for every $i=1,\dots, d$.
\begin{remark}
The assumption that $\bfC$ has mutually independent components reflects the idea that firm-level carbon intensity evolves largely as an idiosyncratic process, shaped by heterogeneous technologies, sector-specific constraints, and individual decarbonization strategies. Moreover, the independence between $C_i$ and $\bfZ$ for every $i=1,\dots,d$ is supported by empirical findings from \cite{bolton2021investors}, which shows that stock performance is significantly related to the absolute level of firm emissions, but not to carbon intensity. In particular, carbon intensity does not appear to be priced by the market, indicating that it is largely uncorrelated with the drivers of financial returns. 
\end{remark}

\subsection{Optimal carbon-penalised investment portfolio}\label{sec:optimal_inv}
Next, our goal is to construct an investment fund that accounts for carbon risk. Let $X^{\bmpi}=\left\lbrace X^{\bmpi}_t\right\rbrace_{t\in[0,T]}$ be the value of a self-financing portfolio and let $\bmpi=(\pi_{1},\dots,\pi_{d})^\top$, where $\pi_i=\{\pi_{i,t}\}_{t\in[0,T]}$, denote fraction of the wealth invested in the risky assets $\bfS$. Consequently, the percentage of wealth invested in the riskless asset is $1-\bmpi_t^\top\1$, for every $t\in[0,T]$,  where $\1=(1,\ldots,1)^\top \in \R^d$. We introduce now the suitable set of strategies.
\begin{defn} 
A $\mathbb{F}$-admissible investment portfolio is a self-financing, $\mathbb{F}$-predictable vector of weights $\bmpi$ such that 
\begin{equation}\label{eq:adm_cond}
\mathbb{E}^{\mathbb{P}^M}\left[\int_0^T\|\bmpi_s\|^2\de s\right]\less\infty.
\end{equation}
The set of $\mathbb{F}$-admissible investment portfolios is denoted by $\mcA$.
\end{defn}
For every $\bmpi\in\mcA$, the dynamics of the associated investment portfolio $X^{\bmpi}$ is given by
\begin{equation}
\dfrac{\de X_t^{\bmpi}}{X_t^{\bmpi}}=\left[r+\bmpi_t^\top(\bmmu-r\1)\right]\de t+\bmpi_t^\top\bfSigma\de \bfZ_t,\quad X_0^{\bmpi}=x_0.
\end{equation}
The objective is to maximise the expected utility of terminal wealth under a carbon-penalised criterion, which discourages investment in assets with high carbon intensity. Such penalisation is applied to the terminal value of the investment portfolio, and it is assumed to be proportional to the riskiness of stocks, measured according to their realized volatility. The idea of penalisation is inspired by  \cite{rogers2013optimal}, under a slightly different definition (see Section \ref{sec:comparison}). The carbon-penalised investment portfolio value at maturity is given by 
\begin{equation}
\tilde{X}_T^{\bmpi}=X^{\bmpi}_T\exp\left(-\dfrac{1}{2}\int_0^T\bmpi_s^\top\left(\bfe(s,\bfC_s)\odot\bfD\bfD^\top\right)\bmpi_s\de s\right),
\end{equation}
where $\odot$ denotes the elementwise (Hadamard) product, and  $\bfe\left(t,\bfC_t\right)=\text{diag}\left(\varepsilon_1(t,C_{1,t}),\dots,\varepsilon_d(t,C_{d,t})\right)$ denotes the diagonal matrix whose diagonal contains carbon aversion coefficients associated with each risky asset in the investment portfolio. In particular, the carbon aversion of the $i$-th stock is given by 
\begin{equation}\label{eq:penalisation}
\varepsilon_i(t,C_{i,t})=\alpha_i(t) C_{i,t}\ind[C_{i,t}>0],\quad t\in[0,T],
\end{equation}
where $\alpha_i(\cdot):[0,T]\to[0,+\infty)$ for every $i=1,\dots,d$. We assume that 
\begin{equation}\label{eq:int_on_epsilon}
\E^{\P^M}\left[\int_0^T\alpha_i(s)C_{i,s}\ind[C_{i,s}>0]\de s\right]<\infty,
\end{equation}
for every $i=1,\dots,d$. Specifically, $\alpha_i$ captures the time-varying environmental preferences with respect to the $i$-th firm, reflecting how strongly the carbon emissions of that firm are penalized in the investment decision. A higher $\alpha_i$ corresponds to a greater carbon aversion with respect to the $i$-th asset and thus a stronger shift toward low-emission investments. The proposed penalisation scheme is not necessarily asset-specific. For instance, one may choose to assign the same penalisation to all assets, or adopt a sector-based approach where the same $\alpha$ is applied to all stocks within a given sector. Moreover, $\alpha_i$ is time-varying, allowing the penalisation scheme to evolve consistently with long-term environmental targets. In particular, it can be aligned with exogenous sustainability goals, such as regulatory guidelines, enabling the investment strategy to dynamically adjust its sensitivity to carbon risk as these objectives become more (or less) binding over time.
\begin{remark}\label{remark:negative_CI}
Recent contributions in the literature acknowledge the possibility that firms may report net-negative carbon emissions for instance due to the deployment of carbon removal technologies, such as Direct Air Capture (DAC) or Bioenergy with Carbon Capture and Storage (BECCS); the implementation of nature-based solutions, including afforestation, reforestation; the purchase of high-quality certified offsets that correspond to verifiable and additional carbon removals (see, e.g., \cite{bhatia2025emission} and \cite{verbist2025carbon}). Motivated by this evidence, our model allows the carbon intensity process $C_{i,t}$  to take negative values, and no penalisation is applied when this occurs (see equation \eqref{eq:penalisation}). From a modeling perspective, one may also extend our study to include incentives for stocks with negative carbon intensity. However, such an approach may be subject to criticism, as it could result in greenwashing practices.  
\end{remark}
It follows from It\^{o}’s formula
that the dynamics of $\tilde{X}^{\bmpi}=\{ \tilde{X}^{\bmpi}_t\}_{t\in[0,T]}$ is given by
\begin{equation}\label{eq:carbon_pen_wealth}
\dfrac{\de\tilde{X}_t^{\bmpi}}{\tilde{X}_t^{\bmpi}}=\left[r+\bmpi_t^\top\left(\bmmu- r\1\right)-\dfrac{1}{2}\bmpi_t^\top\left(\bfe(t,\bfC_t)\odot\bfD\bfD^\top\right)\bmpi_t\right]\de t+\bmpi_t^\top\bfSigma\de\bfZ_t,\quad\tilde{X}_0^{\bmpi}=x_0.
\end{equation}
Let $U$ be a CRRA utility function. We then formulate the following optimization problem
\begin{align}\label{eq:OPT_PROBLEM_CRRA_UTILITY}
\mbox{Maximise }\E^{\P^M}_{t,x,\bfc}[U(\tilde{X}^{\bmpi}_T)],\mbox{ over all }\bmpi\in\mcA,
\end{align}
where 
\begin{equation}
U(x)=\begin{cases}
\dfrac{x^{1-\delta}}{1-\delta},&\delta\in\left(0,1\right)\cup\left(1,+\infty\right),\\
\log(x),&\delta=1,
\end{cases}
\end{equation}
with $\delta$ being the risk aversion parameter. Here, $\E_{t,x,\bfc}^{\P^M}$ denotes the conditional expectation under $\P^M$ given $\tilde{X}_t=x$ and $\bfC_t=\bfc$. The value function of the optimization problem
\eqref{eq:OPT_PROBLEM_CRRA_UTILITY}, is given by
\begin{equation}\label{eq:value_function}
v\left(t,x,\bfc\right):=\sup_{\bmpi\in\mcA}\E^{\P^M}_{t,x,\bfc}[U(\tilde{X}_T^{\bmpi})].
\end{equation}
\subsubsection{Comparison with existing literature}\label{sec:comparison}
The proposed penalisation incorporates sustainability directly into preferences by increasing her effective risk aversion to stocks with high carbon intensity. Unlike \cite{rogers2013optimal}, this adjustment does not act on the variance-covariance matrix. The reason is that negative correlations among stocks with high carbon intensity would attenuate the penalisation, thereby misrepresenting the true risk preferences. To avoid such distortions, the penalisation relies solely on each asset’s realized variance. Moreover, in contrast to the approaches such as \cite{bolton2022net} and \cite{LeGuenedalRoncalli2023}), we do not impose a binding sustainability constraint on the investment portfolio. Instead, our method enables a trade-off between carbon intensity and market risk: a stock with high carbon intensity may still be held in the portfolio if its low volatility and large return sufficiently offset its carbon risk.\\
The proposed carbon penalisation admits two complementary interpretations. First, it can be viewed as a proportional cost applied to positions in carbon-intensive assets. Under this interpretation, the portfolio selection problem reflects a trade-off between a higher risk premium and the reputational or regulatory cost of holding assets with high carbon intensity that may prevent the fund from being classified as “green”. Second, the penalisation can be viewed as an endogenous adjustment of the investor’s risk aversion. In this case, the optimization problem can be recast  as  one with stochastic risk aversion, where the effective risk aversion coefficient is the sum of the utility-based parameter and the carbon penalisation term. As a result, exposure to carbon-intensive assets is naturally reduced. The implications of this mechanism for optimal strategies are discussed in details in Example \ref{example_2stocks} and Section \ref{sect:empirical_findings} below.\\
Moreover, our carbon penalisation mechanism is also related to the ESG preference frameworks of \cite{pastor2021sustainable} and \cite{pedersen2021responsible}. In \cite{pastor2021sustainable}, sustainability enters investor utility through an additive non pecuniary term depending on portfolio holdings, whereas in \cite{pedersen2021responsible} it enters through an additional reward term linked to the ESG score of the aggregate portfolio, so that the investor chooses along an enlarged mean-variance frontier that also accounts for sustainability. In our framework, instead, sustainability enters through a multiplicative exponential penalisation of terminal value of the investment portfolio, whose exponent can be interpreted as the cumulative carbon disutility generated by the trading strategy. Under logarithmic utility ($\delta=1$), equation \eqref{eq:OPT_PROBLEM_CRRA_UTILITY} becomes
\begin{equation}
\mathbb{E}^{\mathbb{P}^M}_{t,x,\bfc}\left[\log\left(X^{\bmpi}_T\right)-\dfrac{1}{2}\int_0^T\bmpi_s^\top\left(\bfe(s,\bfC_s)\odot\bfD\bfD^\top\right)\bmpi_s\de s\right],
\end{equation}
so that the carbon penalisation appears as an additive preference wedge in the objective function, in line with \cite{pastor2021sustainable}. For general CRRA preferences, this structure extends naturally to a nonlinear utility setting, in which the same carbon term acts as an exponential discount factor on utility. Hence, our contribution generalises, to some extent, the preference based approaches of \cite{pastor2021sustainable} and \cite{pedersen2021responsible} by introducing a penalisation that is both dynamic and state dependent, as it rely on the whole trading path, firms' carbon intensity, and asset specific variance.
\subsection{Solution to the optimisation problem}
The problem \eqref{eq:OPT_PROBLEM_CRRA_UTILITY} is solved by employing the dynamic programming principle. We denote by $\mcL$ the infinitesimal generator of the process $(\tilde{X}^{\bmpi},\bfC)$, that is
\begin{align}
\mcL F(t,x,\bfc)=&x\left[r+\bmpi^\top\left(\bmmu- r\1\right)-\dfrac{1}{2}\bmpi^\top\left(\bfe(t,\bfc)\odot\bfD\bfD^\top\right)\bmpi\right]\partial_{x}F(t,x,\bfc)\\
&+\frac{x^2}{2}\bmpi^\top\bfSigma\bfSigma^\top\bmpi \partial^2_{x}F(t,x,\bfc)+\mcLC F(t,x,\bfc),
\end{align}
for every sufficiently regular function $F$ and any constant control $\bmpi\in\mathbb{R}^d$, where $\mcLC$ denotes the infinitesimal generator of $\bfC$. We consider the following Hamilton-Jacobi-Bellman (HJB) equation
\begin{equation}\label{eq:HJB}
\begin{cases}
\displaystyle\sup_{\bmpi\in\R^{d}}\mcL v(t,x,\bfc)+\partial_tv(t,x,\bfc)=0, &(t,x,\bfc)\in[0,T)\times\R_+\times\mcD,\\
v(T,x,\bfc)=U(x), &(x,\bfc)\in\mathbb\R_+\times\mcD.
\end{cases}
\end{equation}

In the sequel, we prove that the value function, defined in equation \eqref{eq:value_function}, solves the equation \eqref{eq:HJB}. We begin our analysis with a verification result. In the discussion of the verification theorem we assume quite general dynamics for the carbon intensity process covering several interesting examples, discussed in Section \ref{sec:examples}. In particular, we assume that $\bfC$ has a semimartingale decomposition given by 
\begin{equation}
\bfC_t= \bfC_0 +\int_0^t \bm\Gamma_s \de s + \bm{M}_t,
\end{equation}
where $\{\bm\Gamma_t\}_{t \in [0,T]}$ is an $\R^d$-valued $\mathbb F$-predictable process such that $\E^{\P^M}[\int_0^T\|\bm\Gamma_s\| {\rm d} s] < \infty$, and $\{\bm{M}_t\}_{t \in [0,T]}$ is a $d$-dimensional square integrable $\mathbb{F}$-martingale which may have a continuous and a discontinuous parts, e.g. $\bm{M}_t=\bm{M}^{cont}_t + \bm{M}^{disc}_t$. 
To avoid technicalities we assume that possible jumps  of  $\mathbf{C}$ are bounded. 

\begin{thm}[Verification Theorem]\label{thm:ver_thm}
Suppose that $w(t,x, \bfc)$ is a classical solution of the HJB equation \eqref{eq:HJB} and that $|w(t, x,\bfc)|\le\bar{k}\left(1+|x|+|x|^{1-\delta}\right)$, for some constant $\bar{k}>0$.
Then:
\begin{itemize}
\item[(i)] $w(t,x,\bfc)\ge v(t, x, \bfc)$ for all $0\le t\le T$, $(x, \bfc) \in \R_+ \times \mathcal D$; 
\item[(ii)]  if there exists a measurable map $\bmpi^\star:[0,T]\times \mathcal D\to\mathbb R^d$ such that,
for every $(t,x,\bm c)\in[0,T)\times\mathbb R_+\times \mathcal D$, the value
$\bmpi^\star(t,\bm c)$ attains the supremum over $\bmpi\in\mathbb R^d$ in the HJB equation \eqref{eq:HJB}, and if the feedback strategy $
\bmpi_s^\star=\bmpi^\star(s,\bm C_s)$, for every $s\in[t,T]$,
is admissible, then
$w(t,x,\bm c)=v(t,x,\bm c)$. In particular, $\bmpi^\star$ is an optimal portfolio strategy.
\end{itemize}
\end{thm}
\begin{proof}
See Section \ref{sect:proof_of_ver_thm} in Appendix. 
\end{proof}
\begin{thm}\label{thm:general_thm}
Let $\varphi(t,\bfc)$ be the unique classical solution to the Cauchy problem
\begin{equation}\label{eq:PDE}
\begin{cases}
\partial_t\varphi(t,\bfc)+\mcLC\varphi(t,\bfc)+H(t,\bfc)\varphi(t,\bfc)=f(t,\bfc),&(t,\bfc)\in[0,T)\times\mcD,\\
\varphi(T,\bfc)=\ind[\delta\in(0,1)\cup(1,+\infty)],&\bfc\in\mcD,
\end{cases}
\end{equation}
where the functions $H(t,\bfc)$ and $f(t,\bfc)$ are defined as follows: 
\begin{align}
H(t,\bfc)&:=(1-\delta)\left[r+\dfrac{1}{2}\left(\bmmu-r\1\right)^\top \left(\delta\bfSigma\bfSigma^\top+\bfe(t,\bfc)\odot\bfD\bfD^\top\right)^{-1}\left(\bmmu-r\1\right)\right],\\
f(t,\bfc)&:=-\left[r+\dfrac{1}{2}\left(\bmmu-r\1\right)^\top\left(\bfSigma\bfSigma^\top+\bfe(t,\bfc)\odot\bfD\bfD^\top\right)^{-1}\left(\bmmu-r\1\right)\right]\ind[\delta=1].
\end{align}
Then, 
the optimal investment strategy is given by
\begin{equation}\label{eq:sol_CRRA}
\bmpi^\star(t,\bfc) =\left(\delta\bfSigma\bfSigma^\top+\bfe(t,\bfc)\odot\bfD \bfD^\top\right)^{-1} \left(\bmmu-r\1\right),
\end{equation}
and the value function satisfies
\begin{equation}
v(t,x,\bfc) =
\begin{cases}
\dfrac{x^{1-\delta}}{1-\delta}\varphi(t,\bfc), &\delta\in(0,1)\cup(1,+\infty),\\
\log(x)+\varphi(t,\bfc),&\delta=1.
\end{cases}
\end{equation}
\end{thm}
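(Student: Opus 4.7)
The plan is to apply the Verification Theorem (Theorem~\ref{thm:ver_thm}) to the separable ansatz $w(t,x,\bfc)=\frac{x^{1-\delta}}{1-\delta}\varphi(t,\bfc)$ when $\delta\in(0,1)\cup(1,+\infty)$ and $w(t,x,\bfc)=\log(x)+\varphi(t,\bfc)$ when $\delta=1$. The terminal condition $w(T,x,\bfc)=U(x)$ forces $\varphi(T,\bfc)=1$ in the power case and $\varphi(T,\bfc)=0$ in the logarithmic case, matching the boundary data of \eqref{eq:PDE}.

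First, I would compute $\partial_t w$, $\partial_x w$, $\partial_x^2 w$ and $\mcLC w$ from the ansatz, plug them into the HJB equation \eqref{eq:HJB}, and carry out the pointwise maximization over $\bmpi$. In both regimes, after factoring out the common $x$-prefactor (and $\varphi$ in the power case), the inner problem reduces to maximizing the strictly concave quadratic $\bmpi^\top(\bmmu-r\1)-\tfrac{1}{2}\bmpi^\top A(t,\bfc)\bmpi$, where $A(t,\bfc):=\delta\bfSigma\bfSigma^\top+\bfe(t,\bfc)\odot\bfD\bfD^\top$ is positive definite because $\bfe(t,\bfc)\ge 0$. The unique unconstrained maximizer is $\bmpi^\star(t,\bfc)=A(t,\bfc)^{-1}(\bmmu-r\1)$; substituting back and rearranging recovers exactly the PDE \eqref{eq:PDE} with the stated $H$ and $f$. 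In the power case, this factorization relies on $\varphi>0$, a property I would establish a posteriori from the Feynman--Kac representation of \eqref{eq:PDE}.

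Next I would verify that $\bmpi^\star$ is admissible. Since $\bfe(t,\bfc)\odot\bfD\bfD^\top\succeq 0$, we have $A(t,\bfc)\succeq\delta\bfSigma\bfSigma^\top$ in the L\"owner order, hence $\|A(t,\bfc)^{-1}\|\le\delta^{-1}\|(\bfSigma\bfSigma^\top)^{-1}\|$ uniformly in $(t,\bfc)$. This yields a uniform bound on $|\bmpi^\star(t,\bfc)|$, so choosing $\Xi$ strictly larger than this bound makes $\bmpi^\star$ lie in the interior of $[-\Xi,\Xi]^d$; $\mathbb F$-predictability follows because $\bmpi^\star$ is a deterministic function of $(t,\bfC_{t-})$ evaluated along the Markov process $\bfC$.

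Finally, I would check the remaining hypotheses of Theorem~\ref{thm:ver_thm}. Classicality of $w$ is inherited from the assumed classical solvability of \eqref{eq:PDE}, so only the growth bound $|w(t,x,\bfc)|\le \bar k(1+|x|+|x|^{1-\delta})$ remains to verify, and this reduces to boundedness of $\varphi$ on $[0,T]\times\mcD$. Both $H$ and $f$ are bounded in absolute value by constants depending only on $r$, $\bmmu$, and $\bfSigma$, thanks to the uniform inequality $A(t,\bfc)^{-1}\preceq \delta^{-1}(\bfSigma\bfSigma^\top)^{-1}$, so the Feynman--Kac representation of \eqref{eq:PDE} (using the integrability condition \eqref{eq:int_on_epsilon}) delivers both boundedness and, via the positive terminal datum and exponential structure, positivity of $\varphi$ in the power case. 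I expect the main obstacle to be precisely this boundedness/positivity step, since positivity of $\varphi$ is what prevents the pointwise HJB maximization from collapsing to a corner on $[-\Xi,\Xi]^d$; once it is in hand, Theorem~\ref{thm:ver_thm} delivers both $v=w$ and the optimality of $\bmpi^\star$.
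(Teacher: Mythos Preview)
Your proposal is correct and follows essentially the same route as the paper's proof: separable ansatz, reduction of the HJB to the concave quadratic $\Phi^{\bmpi}(t,\bfc)=\bmpi^\top(\bmmu-r\1)-\tfrac{1}{2}\bmpi^\top A(t,\bfc)\bmpi$, first-order condition for $\bmpi^\star$, a spectral bound showing $\bmpi^\star$ lands in the interior of $[-\Xi,\Xi]^d$, and then an appeal to the Verification Theorem. Your treatment is in fact slightly more careful than the paper's, which simply stipulates $\varphi>0$ as part of the ansatz and asserts the growth bound without argument; your Feynman--Kac justification of boundedness and positivity via the uniform bound $A(t,\bfc)^{-1}\preceq \delta^{-1}(\bfSigma\bfSigma^\top)^{-1}$ fills exactly those gaps.
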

\begin{proof}
See Section \ref{sect:proof_general_thm} in Appendix.
\end{proof}
\begin{example}\label{example_2stocks}
To analyze the optimal investment strategy, we consider a case in which only two stocks, $S_1$ and $S_2$, are traded in the market, where $S_1$ is characterized by a positive carbon intensity $(C_1>0)$ and $S_2$ has zero carbon intensity $(C_2=0)$. Hence, equation \eqref{eq:sol_CRRA} reads as
\begin{align}
\label{eq:alloc_on_pi_1}\pi^\star_1(t,c_1)&=\dfrac{\mu_1-r}{\sigma_1^2\left[\delta\left(1-\rho^2\right)+\alpha_1c_1\right]}-\dfrac{\rho\left(\mu_2-r\right)}{\sigma_1\sigma_2\left[\delta\left(1-\rho^2\right)+\alpha_1c_1\right]},\\
\label{eq:alloc_on_pi_2}\pi^\star_2(t,c_1)&=-\dfrac{\rho\left(\mu_1-r\right)}{\sigma_1\sigma_2\left[\delta\left(1-\rho^2\right)+\alpha_1c_1\right]}+\dfrac{\left(\delta+\alpha_1c_1\right)\left(\mu_2-r\right)}{\delta\sigma_2^2\left[\delta\left(1-\rho^2\right)+\alpha_1c_1\right]}.
\end{align}
for every $(t,c_1)\in[0,T]\times\mathbb{R}_+$. When the carbon penalisation parameter $\alpha$ is set to zero, we recover the classical myopic Merton solution, namely
\begin{align}
\pi^\star_1(t,c_1)&=\dfrac{1}{\delta}\dfrac{\mu_1-r}{\sigma_1^2(1-\rho^2)}-\dfrac{1}{\delta} \dfrac{\rho(\mu_2-r)}{\sigma_1\sigma_2(1-\rho^2)}, \\
\pi^\star_2(t,c_1)&=-\dfrac{1}{\delta}\dfrac{\rho(\mu_1-r)}{\sigma_1\sigma_2(1-\rho^2)}+\dfrac{1}{\delta}\dfrac{\mu_2-r}{\sigma_2^2(1-\rho^2)},
\end{align}
for every $(t,c_1)\in[0,T]\times\mathbb{R}_+$. In the limiting case, where $\alpha\to\infty$, $\pi_1^\star(t,c_1)=0$ and $\pi_2^\star(t,c_1)=\frac{1}{\delta}\dfrac{\mu_2-r}{\sigma_2^2}$, meaning that $S_1$ is fully divested and the entire risky allocation is shifted to $S_2$ with zero carbon intensity. It is also interesting to comment on the case in which the correlation $\rho$ between the two stocks is equal to zero. Under this specification, equations \eqref{eq:alloc_on_pi_1} and \eqref{eq:alloc_on_pi_2} become, respectively:
\begin{equation}
\pi^\star_1(t,c_1)=\dfrac{1}{\delta+\alpha_1c_1}\dfrac{\mu_1-r}{\sigma_1^2},\quad\pi^\star_2(t,c_1)=\dfrac{1}{\delta}\dfrac{\mu_2-r}{\sigma_2^2}.
\end{equation}
Consequently, any increase in $\alpha$ leads to a reduction in the exposure to the carbon-intensive asset, while the allocation to the zero-carbon emission asset remains unchanged, resulting in an increase in the exposure to the risk-free asset.
\end{example}
\section{Examples: models for carbon intensities and value functions}\label{sec:examples}
The limited data available on asset carbon intensity do not allow us to reliably specify a model for this process. Therefore, we discuss several plausible dynamics for the carbon intensity process and, for each case, provide the corresponding characterization of the value function.  
\subsection{Continuous time finite-state Markov chain}
As a first stylized specification, one may assume that the carbon intensity process is piecewise constant and subject only to occasional but significant changes. Under this interpretation, a natural modeling choice is to represent $\bfC$ as a continuous time finite state Markov chain. This should be understood as a regime-switching approximation, where the states represent carbon-intensity regimes or classes.
More precisely, we assume that \(\mathbf C\) has 
$\ell$ possible states, with state space $\mathcal{D}=\{\bfa_1, \dots, \bfa_\ell\}$ and $\bfa_k\in\mathbb{R}^d$ for $k\in\{1,\dots,\ell\}$. We denote by $Q=(q_{k,l})_{k,l\in\{1,\dots,\ell\}}$ the infinitesimal generator of $\bfC$, with $q_{k,l}\ge 0$ and $q_{k,k}=-\sum_{l\ne k}q_{k,l}$, and let $\Pi=(\Pi_1,\dots,\Pi_\ell)$ be its initial distribution. Hence, the infinitesimal generator of $\bfC$ is given by
\begin{equation}
\mcLC F(\bfa_k)=\sum_{l} q_{k, l} F(\bfa_l),
\end{equation}
for every function $F: \mathcal{D} \to \mathbb{R}$.
\begin{lem}\label{cor:MARKOV_CHAIN_P}
Let $\alpha_i(t)$ be continuous functions on $[0,T]$ for every $i=1,\dots,d$. Then, $\varphi_k(t):=\varphi(t,\bfa_k)\in\mathcal{C}^{1}([0,T])$ are the unique solutions to the system of ODEs
\begin{equation}\label{eq:ODE_system}
\partial_t\varphi_k(t)+\sum_{l}q_{k,l}\varphi_l(t)+H_k(t)\varphi_k(t)=f_k(t),
\end{equation}
with terminal conditions $\varphi_k(T)=\ind[\delta\in(0,1)\cup(1,+\infty)]$ for all $k\in\{1,\dots,\ell\}$, where $H_k(t):=H(t,\bfa_k)$ and $f_k(t):=f(t,\bfa_k)$.
\end{lem}
\begin{proof}
Equation \eqref{eq:ODE_system} defines a system of linear ODEs. The result then follows from \cite[Theorem 3.9]{teschl2012ordinary}.
\end{proof}

\subsection{Ornstein-Uhlenbeck process}
A natural continuous-time specification for carbon intensity is provided by a mean-reverting Ornstein-Uhlenbeck (OU) process. Indeed, the few available data points suggest that carbon intensity tends to fluctuate around a stable level. This behavior may be attributed to the fact that a substantial shift in a firm’s carbon intensity typically results from a technological shock, a change in production management, or the adoption of a green agenda aimed at lowering emissions. In particular, the implementation of a green agenda may even result in net negative emission levels (see Remark \ref{remark:negative_CI}). We therefore model the dynamics of carbon intensity as follows:
\begin{equation}\label{eq:OU_PROCESS}
\de C_{i,t}=\kappa_i\left(\bar{C}_i(t)-C_{i,t}\right)\de t+\lambda_i\de W_{i,t},\quad C_{i,0}=c_{i,0}, 
\end{equation}
where $\bar{C}_i(\cdot):[0,T]\to[0,+\infty)$ is a decreasing function representing the long-term carbon intensity level, $\kappa_i>0$ governs the strength of attraction toward the time-dependent target $\bar{C}_i$, and $\lambda_i$ is the volatility of the $i$-th carbon intensity, for every $i=1,\dots,d$. Specifically, the decreasing time-dependent target $\bar{C}_i(t)$ reflects the long-term reduction in carbon intensity that firms typically pursue through decarbonisation strategies in order to align with Net Zero emissions commitments. Such firms’ decarbonisation strategies lower their expected future carbon intensity and therefore attenuate the penalization in the optimal portfolio problem, which, for a given risk-return trade-off, may translate into higher optimal portfolio weights (see equation \eqref{eq:sol_CRRA}). Furthermore, the corresponding rate of attraction $\kappa_i$ around the target reflects how strongly the firm reacts to environmental pressures or incentives -- such as carbon regulations, investor scrutiny, or reputational concerns -- that drive decarbonization efforts over time. $\bfW=\left(W_{1,t},\dots,W_{d,t}\right)^\top$ is a vector of uncorrelated standard $\mathbb{F}$-Brownian motions independent of $\bfZ$, as required in order to satisfy the assumptions made on $\bfC$ in Section \ref{sec:setup}. Under this model specification, the infinitesimal generator of $\bfC_t$ is given by
\begin{equation}\label{eq:generatorOU}
\mcLC F(\bfc)=\sum_{i=1}^d\kappa_i\left(\bar{C}_i(t)-c_{i}\right)\partial_{c_i}F(\bfc)+\dfrac{1}{2}\sum_{i=1}^d\lambda_i^2\partial^2_{c_i}F(\bfc),
\end{equation}
for every $F\in\mathcal{C}^{2}(\mathcal{D})$, with $\mathcal{D}=\R^d$.
\begin{lem}\label{cor:OU_P}
Let $\mcLC$ as in equation \eqref{eq:generatorOU} with $\lambda_i>0$ for every $i\in{1, \dots, d}$, and let $\alpha_i(t),\,\bar{C}_i(t)$ be Lipschitz continuous functions on $[0,T]$ for every $i\in{1, \dots, d}$. Then, the problem \eqref{eq:PDE} has a unique solution $\varphi(t,\bfc)\in \mathcal{C}^{1,2}([0,T)\times\R^d) \cap \mathcal{C}([0,T]\times \R^d)$. 
\end{lem}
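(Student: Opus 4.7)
My approach is to construct $\varphi$ via a Feynman--Kac representation and then verify the claimed regularity and uniqueness. The key preliminary observation is that both the potential $H$ and the source $f$ in \eqref{eq:PDE} are bounded on $[0,T]\times\R^d$. Indeed, $\bfe(t,\bfc)\odot\bfD\bfD^{\top}=\diag(\varepsilon_i(t,c_i)\sigma_i^{2})_{i=1}^d$ is positive semi-definite because $\varepsilon_i\ge 0$, so the matrix $\delta\bfSigma\bfSigma^{\top}+\bfe(t,\bfc)\odot\bfD\bfD^{\top}$ has smallest eigenvalue bounded below by $\delta\,\lambda_{\min}(\bfSigma\bfSigma^{\top})>0$ uniformly in $(t,\bfc)$, hence its inverse has uniformly bounded spectral norm. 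Combined with the Lipschitz continuity of $\alpha_i(t)$ in $t$ and of the map $c\mapsto\max(c,0)$ in $c$, this yields that $H$ and $f$ are bounded and Lipschitz in $(t,\bfc)$, and in particular H\"{o}lder continuous to any order $\gamma\in(0,1)$.

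\textbf{Candidate solution.} I would define
\begin{equation*}
\varphi(t,\bfc):=\E^{\P^M}_{t,\bfc}\!\left[\ind[\delta\ne 1]\exp\!\left(\int_t^{T}H(s,\bfC_s)\de s\right)-\int_t^{T}f(s,\bfC_s)\exp\!\left(\int_t^{s}H(u,\bfC_u)\de u\right)\de s\right],
\end{equation*}
where $\{\bfC_s\}_{s\ge t}$ solves \eqref{eq:OU_PROCESS} with $\bfC_t=\bfc$. The boundedness of $H$ and $f$ together with compactness of $[0,T]$ makes $\varphi$ well defined and bounded on $[0,T]\times\R^d$.

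\textbf{Regularity.} Because each $C_i$ is an OU process, it admits a smooth, strictly positive Gaussian transition density $p_i(t,c_i;s,y_i)$ whose derivatives in $c_i$ of every order retain Gaussian decay in $y_i$. By the Markov property, one can rewrite $\varphi$ as an integral against these densities and differentiate under the expectation sign, justified by dominated convergence, to obtain $\varphi\in\mathcal{C}^{0,2}([0,T)\times\R^d)$. The time derivative then follows from the PDE itself, giving $\varphi\in\mathcal{C}^{1,2}([0,T)\times\R^d)$. Alternatively, classical Schauder theory for parabolic operators with H\"{o}lder continuous coefficients and non-degenerate constant-coefficient diffusion (see, e.g., Friedman, \emph{Partial Differential Equations of Parabolic Type}) can be invoked to reach the same conclusion. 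Continuity at $t=T$ is immediate since the terminal datum $\ind[\delta\ne 1]$ is independent of $\bfc$, the exponential factor tends to $1$, and the source integral vanishes as $t\uparrow T$.

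\textbf{Verification and uniqueness.} With $\varphi\in\mathcal{C}^{1,2}$, applying It\^{o}'s formula to $s\mapsto\exp(\int_t^{s}H(u,\bfC_u)\de u)\,\varphi(s,\bfC_s)-\int_t^{s}f(u,\bfC_u)\exp(\int_t^{u}H(r,\bfC_r)\de r)\de u$ and equating the drift to zero confirms that $\varphi$ satisfies \eqref{eq:PDE} in the classical sense. Conversely, any other classical solution $\psi$ of polynomial growth admits the same stochastic representation via the identical Itô argument, the local martingale term being a true martingale by Gaussian integrability of $\bfC$ and a standard localization. Since the representation is unique, $\psi=\varphi$. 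The main technical obstacle I anticipate is that $c\mapsto\max(c,0)$ is only Lipschitz, so Friedman's Schauder theorem does not apply verbatim; this is circumvented either by the direct differentiation-under-expectation argument above, which requires only continuity of $H,f$ together with smoothness of the OU density, or by a mollification of $\max(c,0)$ followed by a limit procedure exploiting uniform-in-approximation bounds on $H$, $f$ and interior Schauder estimates for the regularized problems.
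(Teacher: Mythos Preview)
Your proposal is correct and reaches the conclusion by a genuinely different route from the paper. Both arguments open with the same observation---that $H$ and $f$ are bounded and Lipschitz on $[0,T]\times\R^d$---and your spectral justification of this (lower bound $\delta\lambda_{\min}(\bfSigma\bfSigma^{\top})$ on the perturbed matrix, hence uniform bound on its inverse) is exactly the mechanism behind the paper's one-line assertion. From that point the paper simply cites Theorem~8.2.1 in Chapter~8 of Krylov, \emph{Lectures on Elliptic and Parabolic Equations in H\"older Spaces}, which immediately delivers existence and uniqueness of a classical solution for a uniformly parabolic operator with H\"older (here Lipschitz) zeroth-order and source terms and non-degenerate constant diffusion. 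You instead proceed probabilistically: you write down the Feynman--Kac candidate, exploit the explicit Gaussian OU transition density to differentiate under the expectation and obtain $\mathcal C^{1,2}$ regularity (or alternatively invoke Friedman's Schauder theory), and then close the loop with an It\^o argument for verification and uniqueness. Your approach is more self-contained and exposes the probabilistic structure that is reused later in the paper; the paper's is a one-line appeal to a standard PDE reference. One small caveat: your uniqueness step is stated for solutions of polynomial growth, which is the right class in practice (and is what Krylov's theorem yields as well), though the lemma as phrased does not make this restriction explicit.
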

\begin{proof}
Note that, under the assumption that $\alpha_i(t)$ is Lipschitz continuous on $[0,T]$ for every $i\in{1, \dots, d}$, it holds that $H(t,\bfc)$ and $f(t, \bfc)$ are Lipschitz continuous on $[0,T]\times \R^d$. Hence the result follows from  \cite[Theorem 8.2.1, Chapter 8]{krylov1996lectures}. 
\end{proof}
\subsection{Exponential of Ornstein-Uhlenbeck process}
Empirical observations on the released carbon intensities of assets listed in the S\&P $500$, which we use for the empirical analysis (see Section \ref{sect:EMPIRICAL_FIRST_PART}), suggest that these values may remain positive over time. Hence, one could retain the mean-reversion property introduced in equation \eqref{eq:OU_PROCESS} while ensuring that the carbon intensity process remains positive over time. In this case, a natural modeling choice is to assume that $C_{i,t}$ follows exponential of the OU process given by equation \eqref{eq:OU_PROCESS}. Specifically, its dynamics are given by
\begin{equation}
\dfrac{\de C_{i,t}}{C_{i,t}}=\left[\kappa_i\left(\bar{C}_i(t)-\ln C_{i,t}\right)+\dfrac{\lambda_i^2}{2}\right]\de t+\lambda_i\de W_{i,t},\quad C_{i,0}=c_{i,0}, 
\end{equation}
for every $i=1,\dots,d$. Hence, 
\begin{equation}\label{eq:generator_expOU}
\mcLC F(\bfc)=\sum_{i=1}^d c_i\left[\kappa_i\left(\bar C_i(t)-\ln c_{i}\right)+\dfrac{\lambda_i^2}{2}\right]\partial_{c_i}F(\bfc)+\dfrac{1}{2}\sum_{i=1}^d\lambda_i^2c_i^2\partial^2_{c_i}F(\bfc),
\end{equation}
for every $F\in\mathcal{C}^{2}(\mathcal{D})$, with $\mathcal{D}=\R^d_+$.
\begin{lem}\label{cor:EXP_OU_P}
Let $\mcLC$ as in equation \eqref{eq:generator_expOU} with $\lambda_i>0$ and let $\alpha_i(t)$, $\bar{C}_i(t)$ be Lipschitz continuous functions on $[0,T]$ for every $i\in{1, \dots, d}$. Then the problem \eqref{eq:PDE} has a unique solution $\varphi\in\mcC^{1,2}[0,T)\times\mathbb{R}^d_+)\cap\mcC([0,T]\times\mathbb{R}_+^d)$.
\end{lem}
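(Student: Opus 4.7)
The plan is to reduce Lemma \ref{cor:EXP_OU_P} to the OU case already established in Lemma \ref{cor:OU_P} by applying the diffeomorphism $\bm y=(\log c_1,\dots,\log c_d)$, which maps $\R^d_+$ bijectively onto $\R^d$. The key point is that, written in these new coordinates, the generator \eqref{eq:generator_expOU} becomes precisely the Ornstein--Uhlenbeck generator \eqref{eq:generatorOU}, so that the Cauchy problem \eqref{eq:PDE} for $\varphi$ on $[0,T)\times\R^d_+$ turns into a Cauchy problem on $[0,T)\times\R^d$ to which Lemma \ref{cor:OU_P} directly applies.

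Concretely, I would set $\psi(t,\bm y):=\varphi(t,e^{y_1},\dots,e^{y_d})$ and use the chain-rule identities $\partial_{c_i}\varphi=e^{-y_i}\partial_{y_i}\psi$ and $\partial^2_{c_i}\varphi=e^{-2y_i}(\partial^2_{y_i}\psi-\partial_{y_i}\psi)$, then substitute into \eqref{eq:generator_expOU}. The It\^o correction $\lambda_i^2/2$ sitting in the drift of $C_i$ cancels precisely against the $-\lambda_i^2/2\,\partial_{y_i}\psi$ contribution coming from the second-derivative term, so that $\mcLC\varphi(t,e^{\bm y})$ equals the OU generator from \eqref{eq:generatorOU} applied to $\psi(t,\bm y)$. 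Consequently \eqref{eq:PDE} is equivalent to the Cauchy problem on $[0,T)\times\R^d$ for $\psi$ with OU generator, terminal datum $\mathds{1}_{\{\delta\ne 1\}}$, and transformed coefficients $\tilde H(t,\bm y):=H(t,e^{\bm y})$ and $\tilde f(t,\bm y):=f(t,e^{\bm y})$.

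To apply Lemma \ref{cor:OU_P}, I must show that $\tilde H$ and $\tilde f$ are Lipschitz continuous on $[0,T]\times\R^d$. Set $M(t,\bm c):=\delta\bfSigma\bfSigma^\top+\bfe(t,\bm c)\odot\bfD\bfD^\top$ and differentiate the defining identity $M\bmpi^\star=\bmmu-r\1$ in $c_i$: this yields $\partial_{c_i}H=-\tfrac{1-\delta}{2}\alpha_i(t)\sigma_i^2(\pi^\star_i)^2$ and an analogous expression for $f$. Since $\partial_{y_i}\tilde H=c_i\,\partial_{c_i}H$, Lipschitz regularity reduces to a uniform bound on $c_i(\pi^\star_i)^2$. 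Writing $M=A+c_i\alpha_i(t)\sigma_i^2 E_{ii}$ with $A\ge\delta\bfSigma\bfSigma^\top$ in the positive semidefinite order and applying the Sherman--Morrison formula gives the representation
\[
\pi^\star_i(t,\bm c)=\frac{\bigl(A(t,\bm c)^{-1}(\bmmu-r\1)\bigr)_i}{1+c_i\,\alpha_i(t)\,\sigma_i^2\,\bigl(A(t,\bm c)^{-1}\bigr)_{ii}},
\]
which decays like $1/c_i$ as $c_i\to\infty$ provided $(A^{-1})_{ii}$ is bounded away from zero; combined with the uniform upper bound $\|A^{-1}\|\le 1/(\delta\lambda_{\min}(\bfSigma\bfSigma^\top))$ and the Lipschitz assumption on $\alpha_i(\cdot)$, this produces the desired Lipschitz regularity of $\tilde H$ and $\tilde f$.

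The main obstacle is the uniform estimate on $(A^{-1})_{ii}$: since $A$ itself depends on $(c_j)_{j\ne i}$ and its $(j,j)$ entries blow up as $c_j\to\infty$, one must argue via the Schur-complement decomposition of $A$ around the $i$-th coordinate, together with the positive semidefinite lower bound $A\ge\delta\bfSigma\bfSigma^\top$, to show that $(A^{-1})_{ii}$ remains bounded between two strictly positive constants uniformly in $(c_j)_{j\ne i}\in\R^{d-1}_+$. Once this estimate is in place, Lemma \ref{cor:OU_P} delivers a unique classical solution $\psi\in\mcC^{1,2}([0,T)\times\R^d)\cap\mcC([0,T]\times\R^d)$, and the pull-back $\varphi(t,\bm c):=\psi(t,\log c_1,\dots,\log c_d)$ inherits the regularity stated in the lemma and gives the unique classical solution of \eqref{eq:PDE} on $[0,T)\times\R^d_+$.
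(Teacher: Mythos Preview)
Your approach differs from the paper's. The paper simply remarks that the argument follows the same localization-plus-Feynman--Kac strategy used for the CIR case (Lemma~\ref{cor:CIR_P}): one establishes classical existence on each bounded rectangle $(1/R,R)^d$ via standard parabolic theory (the exp-OU generator is uniformly elliptic with smooth coefficients there), identifies the local solution with the probabilistic expression, and obtains uniqueness by letting the rectangles exhaust $\R^d_+$; no Feller-type condition is needed because $C_i=e^{Y_i}$ with $Y_i$ an OU process is automatically strictly positive. Your change of variables $y_i=\log c_i$, which indeed converts $\mcLC$ into the OU generator, is an elegant alternative that would reduce the lemma directly to Lemma~\ref{cor:OU_P}---provided the transformed coefficients $\tilde H,\tilde f$ are globally Lipschitz on $[0,T]\times\R^d$, as that lemma requires.

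There is, however, a genuine gap in your Lipschitz verification. You correctly reduce everything to a uniform bound on $c_i(\pi^\star_i)^2$, and your Sherman--Morrison formula together with the Schur-complement lower bound $(A^{-1})_{ii}\ge 1/(\delta(\bfSigma\bfSigma^\top)_{ii})$ does show that $\alpha_i(t)\,c_i(\pi^\star_i)^2$ is uniformly bounded. But $c_i(\pi^\star_i)^2$ itself is \emph{not} uniformly bounded unless $\alpha_i$ is bounded away from zero: if $\alpha_i(t_0)=0$ then your own formula gives $\pi^\star_i(t_0,\bfc)=(A^{-1}(\bmmu-r\1))_i$, independent of $c_i$, and $c_i(\pi^\star_i)^2\to\infty$. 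Concretely, for $d=1$ and $\alpha(t)=t$ one computes $\tilde H(0,y)-\tilde H(t,y)=\tfrac{1-\delta}{2}\,\tfrac{(\mu-r)^2}{\sigma^2}\,\tfrac{te^y}{\delta(\delta+te^y)}$, which does not tend to zero uniformly in $y$ as $t\downarrow 0$, so $\tilde H$ is not even uniformly continuous in $t$ and Lemma~\ref{cor:OU_P} cannot be invoked. The paper's localization argument is more robust precisely because it only needs local H\"older regularity of $H$ and $f$ on compact subsets of $[0,T]\times\R^d_+$, which holds without any positivity assumption on the $\alpha_i$. If you add the hypothesis $\inf_t\alpha_i(t)>0$ your argument goes through cleanly; otherwise you would have to combine the log transform with a localization step, at which point you are essentially reproducing the paper's proof in different coordinates.
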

\begin{proof}
The proof of this result follows from the same lines of the proof of Lemma \ref{cor:CIR_P}, see also Section \ref{sect:proof_CIR_P} in Appendix. 
\end{proof}
\subsection{Cox-Ingersoll-Ross process}\label{sect:CIR_case}
An alternative positive mean-reverting specification to the exponential Ornstein-Uhlenbeck process is provided by the Cox-Ingersoll-Ross (CIR) process, whose dynamics are given by
\begin{equation}\label{eq:CIR_PROCESS}
\de C_{i,t}=\kappa_i\left(\bar{C}_i(t)-C_{i,t}\right)\de t+\lambda_i\sqrt{C_{i,t}}\de W_{i,t},\quad C_{i,0}=c_{i,0}, 
\end{equation}
for every $i=1,\dots,d$, with $\bar C_i(\cdot):[0,T]\to[0,+\infty)$ and $\kappa_i,\,\lambda_i>0$. In this case, the infinitesimal generator of $\bfC_t$ reads as
\begin{equation}\label{eq:generatorCIR}
\mcLC F(\bfc)=\sum_{i=1}^d\kappa_i\left(\bar{C}_i(t)-c_{i}\right)\partial_{c_i}F(\bfc)+\dfrac{1}{2}\sum_{i=1}^d\lambda_i^2c_i\partial^2_{c_i}F(\bfc),
\end{equation}
for every $f\in\mathcal{C}^{2}(\mathcal{D})$, with $\mathcal{D}=\R_+^d$.
\begin{lem}\label{cor:CIR_P}
Let $\mcLC$ as in equation \eqref{eq:generatorCIR} and let $\alpha_i(t),\,\bar{C}_i(t)$ be Lipschitz continuous functions on $[0,T]$ for every $i\in{1, \dots, d}$. Assume that Feller conditions hold, that is, $2\kappa_i\inf_{t\in[0,T]}\bar{C}_i(t)\ge\lambda_i^2$ for all $i\in\{1,\ldots,d\}$. Then the problem \eqref{eq:PDE} has a unique solution $\varphi\in\mcC^{1,2}[0,T)\times\mathbb{R}^d_+)\cap\mcC([0,T]\times\mathbb{R}_+^d)$.
\end{lem}
\begin{proof}
See Section \ref{sect:proof_CIR_P} in Appendix.
\end{proof}
\section{Static empirical results and dynamic numerical analysis on the optimal investment fund}\label{sect:EMPIRICAL_FIRST_PART}
Because of the limited availability of carbon intensity data, we distinguish between two different types of numerical analysis. In Section \ref{sect:empirical_findings}, we conduct a static empirical analysis in which carbon intensity is not modeled dynamically, but is fixed at its most recent available value for each stock. In Section \ref{sect:dynamic}, we instead present an illustrative simulation exercise in which carbon intensity is assumed to evolve according to a CIR process specified in equation \eqref{eq:CIR_PROCESS}, in order to show how the carbon penalized-portfolio allocation mechanism operates over time.
\subsection{Numerical analysis based on real data}\label{sect:empirical_findings}
In this section, we provide a static empirical analysis of the optimal carbon-penalised portfolio. The empirical implementation is based on a sample of $d=34$ stocks from the S\&P 500 index, and the stock-price dynamics in equation \eqref{eq:risk_assets_dyn} are calibrated using daily closing prices over the period from January $1$, $2015$ to December $31$, $2024$, sourced from the Morningstar database. The resulting market parameters are then used to compute the optimal carbon-penalised strategy $\bmpi^\star$ at $t=0$. Owing to the limited availability of carbon intensity data, we do not attempt to calibrate $\bfC$; instead, for each stock we take its most recent available carbon intensity observation, released on December $31$, $2024$.\\
Moreover, we take a constant and homogeneous penalty for each stock in the portfolio,  that is $\alpha_i(t)=\alpha$ for every $t\in[0,T]$, and every $i=1,\dots,d$.\\

\begin{figure}[htbp]
\centering
\includegraphics[width=0.40\linewidth]{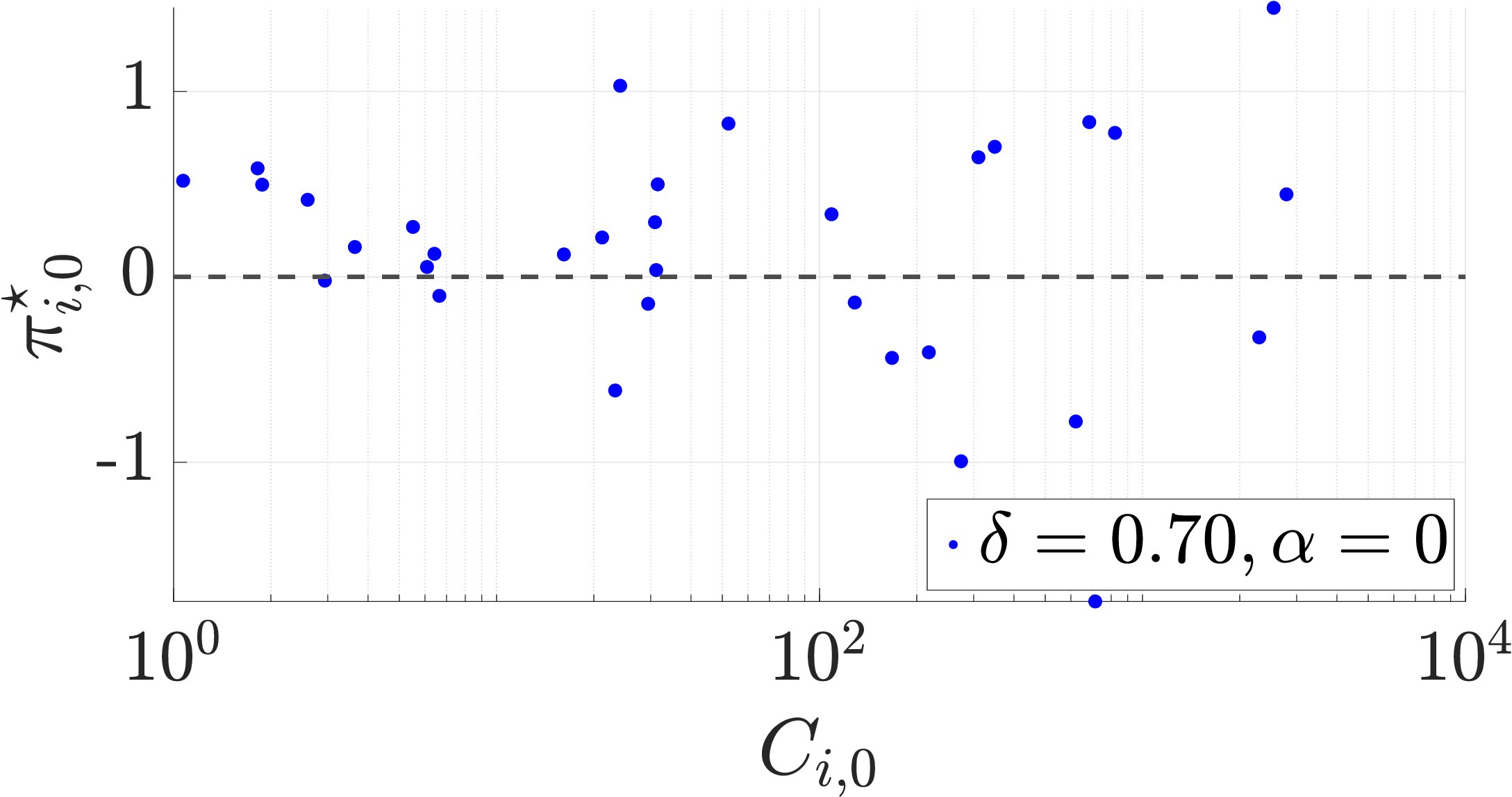}
\hspace{0.02\linewidth}
\includegraphics[width=0.40\linewidth]{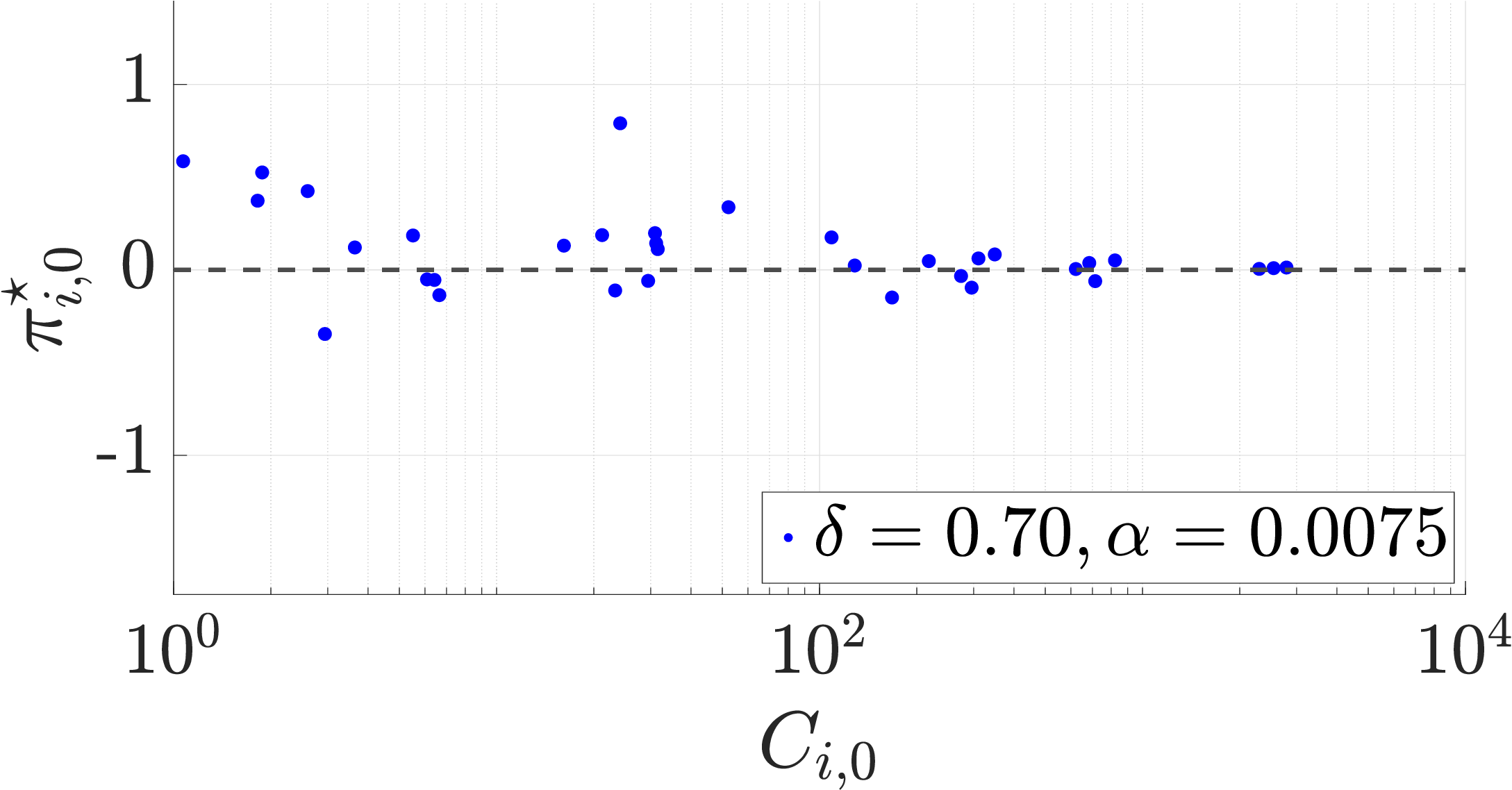}
\vspace{0.5cm}
\includegraphics[width=0.40\linewidth]{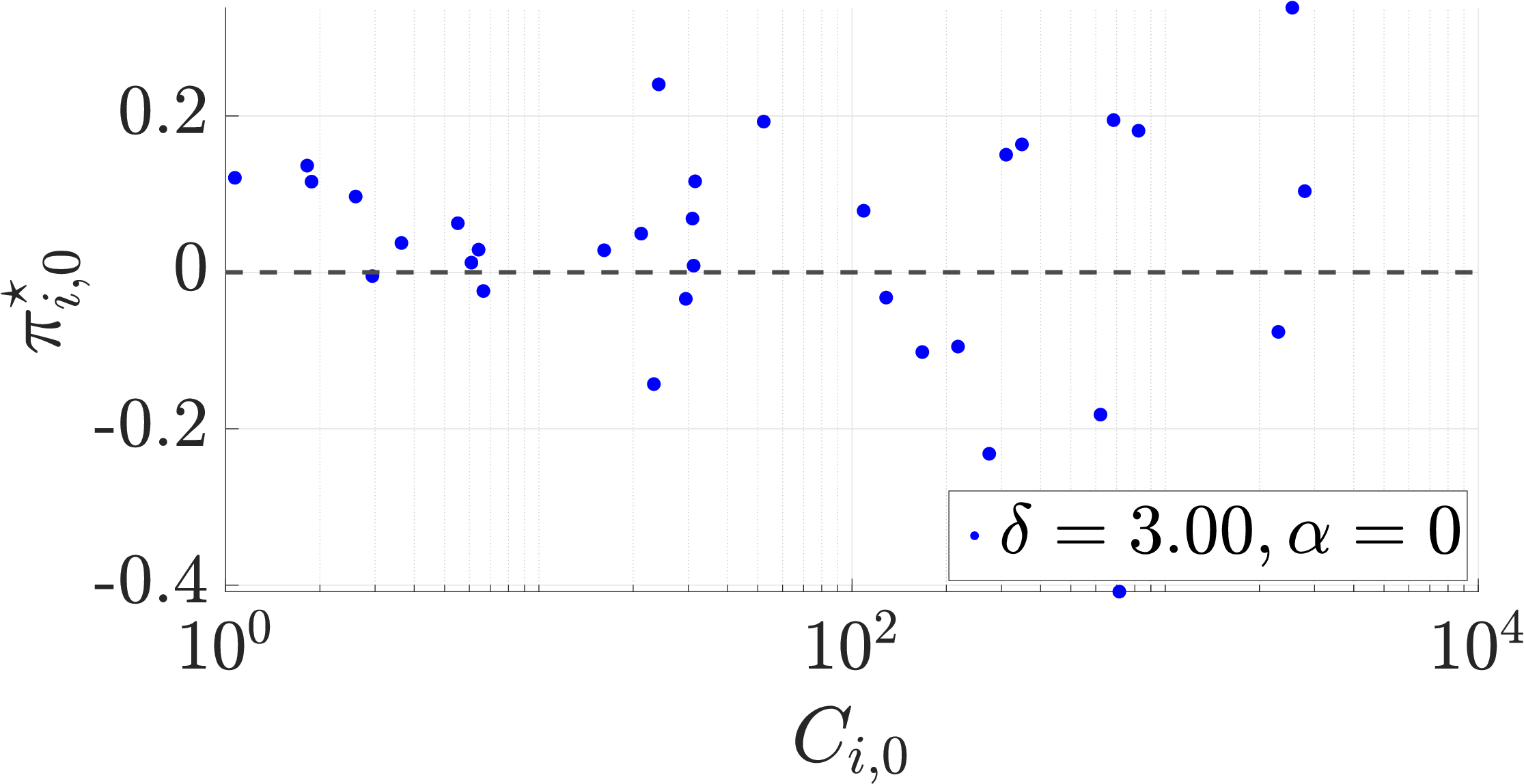}
\hspace{0.02\linewidth}
\includegraphics[width=0.40\linewidth]{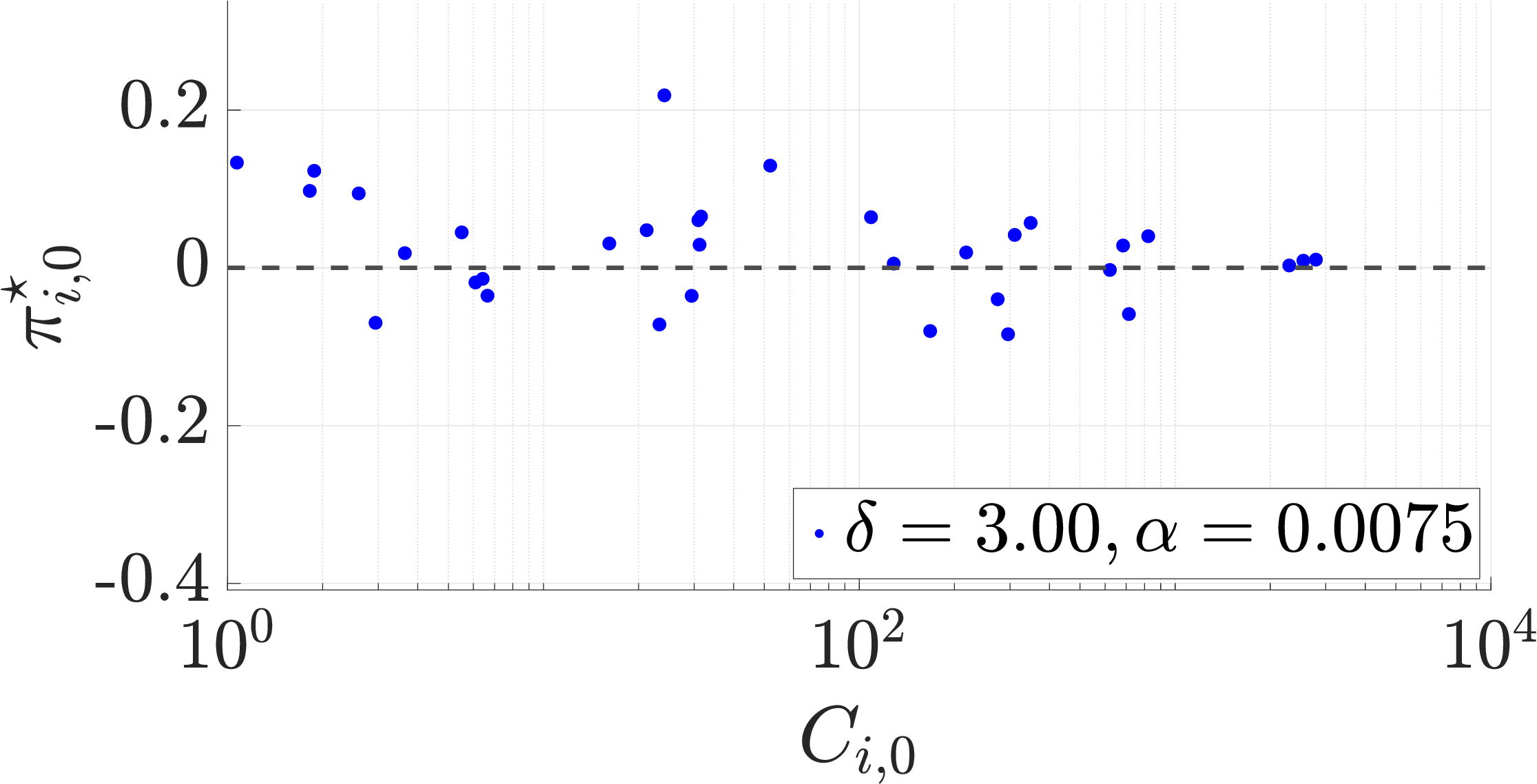}
\caption{Scatter plots displaying the carbon intensity of the $i$-th stock ($x$-axis on logarithmic scale) and the optimal portfolio weights ($y$-axis) for different levels of $\delta$ and $\alpha$.}\label{fig:sens_analysis_optimal_weights}
\end{figure}
Scatter plots in Figure \ref{fig:sens_analysis_optimal_weights} illustrate the relationship between the carbon intensity $C_{i,0}$ on the $x$-axis and the corresponding optimal portfolio weight $\pi^\star_{i,0}$ on the $y$-axis of the $i$-th stock for every $i=1,\dots,d$, for different levels of the  risk aversion $\delta$ and carbon penalisation $\alpha$. Each panel corresponds to a specific combination of $\delta\in\{0.7,\,3\}$ and $\alpha\in\{0,\,0.0075\}$, allowing for a comparative analysis of how increasing values of carbon aversion affect the allocation strategy. The results clearly show that, as the carbon penalisation $\alpha$ increases, the optimal weights of carbon-intensive stocks decrease in absolute value, regardless of the level of the risk aversion parameter $\delta$.
Increasing values of $\delta$ do not modify the impact of carbon penalisation; instead, they reduce the investment portfolio's exposure to risky assets, which in turn leads to lower leverage.
\begin{figure}[htbp]
\centering
\includegraphics[width=0.45\linewidth]{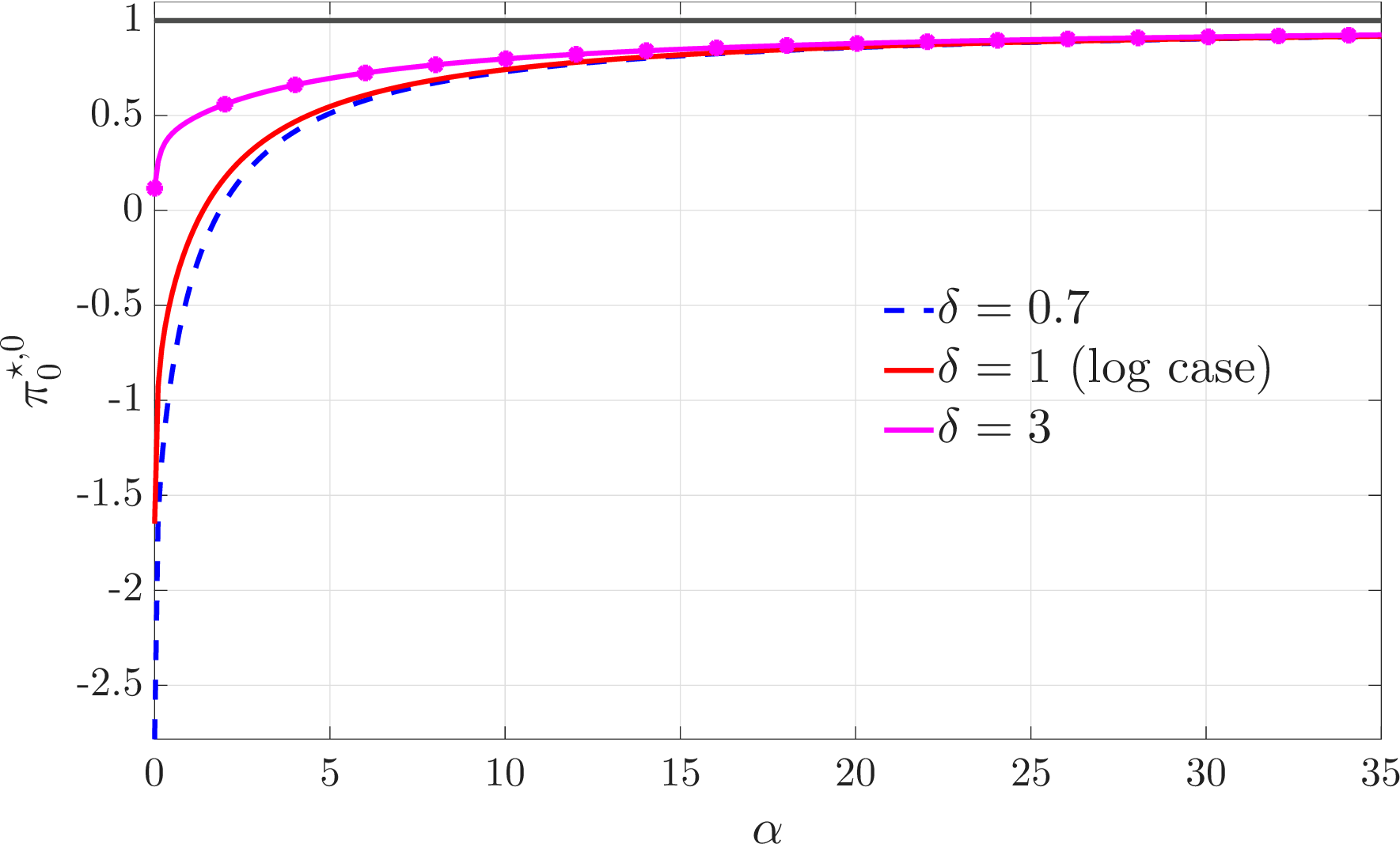}
\hfill 
\includegraphics[width=0.45\linewidth]{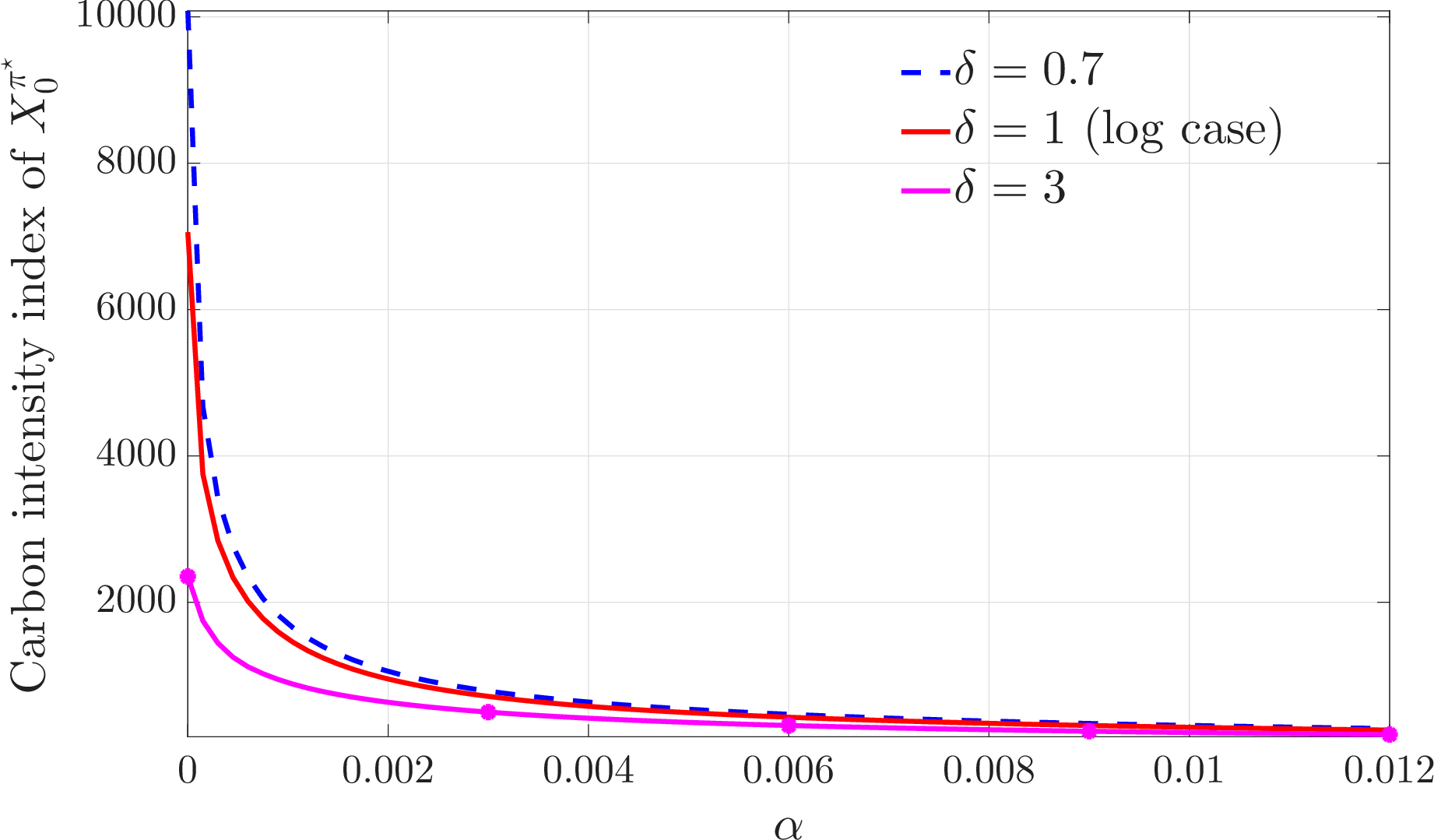}
\caption{Optimal portfolio weight in the bank account, $\pi^{0,\star}_0=1-\sum_{i=1}^d\pi^{\star}_{i,0}$, as a function of the carbon penalisation $\alpha$, for different levels of market risk aversion $\delta$ (left panel). Carbon intensity index of the optimal portfolio at $t=0$ as a function of carbon penalisation $\alpha$, for different levels of market risk aversion $\delta$. The portfolio carbon intensity index is computed as $\sum_{i=1}^d|\pi^\star_{i,0}|C_{i,0}$ (right panel).
\label{fig:opt_weight_B}}
\end{figure}
This is also confirmed by the left panel of Figure \ref{fig:opt_weight_B}. We denote by
$\pi^{0,\star}_0 := 1-\mathbf 1^\top \bmpi^\star_0$
the optimal weight invested in the bank account at time $t=0$. The left panel of Figure \ref{fig:opt_weight_B}
illustrates the relationship between carbon penalisation $\alpha$ and the corresponding optimal portfolio allocation to bank account $\pi_0^{0,\star}$. When $\alpha=0$, or takes small values, the portfolio may exhibit leverage, but only in case of low risk aversion or logarithmic preferences; for higher levels of risk aversion (e.g., $\delta=3$), no leverage is observed. As $\alpha$ increases, the investment in the bank account rises, implying that the overall exposure to risky assets $\1^\top\bmpi^{\star}_0$ decreases, regardless of the level of the risk aversion parameter $\delta$. If all assets have strictly positive carbon intensity, then in the limit as $\alpha \to \infty$ the risky allocation vanishes and $\pi_0^{0,\star}$ tends to $1$ implying that the entire investment portfolio is held in the bank account.\\

To further quantify the impact of $\alpha$ on the environmental sustainability of the investment portfolio, the right panel of Figure \ref{fig:opt_weight_B} reports the carbon intensity index of the optimal portfolio as a function of $\alpha$, for different levels of $\delta$. These plots confirm that the carbon intensity index decreases as $\alpha$ increases and that a very small penalisation is enough to considerably diminish the carbon intensity index of the portfolio. Moreover, consistent with the results shown in Figure \ref{fig:sens_analysis_optimal_weights}, lower levels of $\delta$ lead the strategy to take on greater exposure to risky assets and, for a given $\alpha$, result in a smaller reduction in carbon intensity.
\subsection{Numerical analysis based on simulated data}\label{sect:dynamic}
In this section, we complement the previous empirical exercise with an illustrative simulation study aimed at analyzing the dynamics of the optimal portfolio weights $\bmpi^\star$. Since the available carbon intensity data do not allow for a reliable calibration, we consider a stylized setting in which the sustainable investment fund consists of four stocks, whose corresponding carbon intensity processes follow the CIR dynamics specified in equation \eqref{eq:CIR_PROCESS}, with a time-independent long-run mean, that is, $\bar{C}(t)=\bar{C}\g 0$ for every $t\in[0,T]$. The parameters of the four stocks and of the associated carbon intensity processes are reported in Table \ref{Market_parameters}. Moreover, we assume that $\delta=1$ and, for the sake of simplicity, as in Section \ref{sect:empirical_findings}, we adopt a constant and homogeneous penalty across all stocks in the portfolio, that is, $\alpha_i(t)=\alpha=0.0025$ for every $t\in[0,T]$ and every $i=1,\dots,d$.\\
\begin{table}[h]
\centering
\begin{subtable}{0.24\textwidth}
\centering
\begin{tabular}{ccc} 
\Xhline{1.2pt}
            & $\mu$     & $\sigma$      \\ 
\hline
$S_1$       & $0.25$ & $0.30$ \\    
$S_2$       & $0.15$ & $0.25$ \\ 
$S_3$       & $0.10$ & $0.20$\\
$S_4$       & $0.08$ & $0.16$\\
\Xhline{1.2pt}
\end{tabular}
\caption{Parameters for the stock prices.}
\label{Stocks_parameters}
\end{subtable}
\hfill
\begin{subtable}{0.33\textwidth}
\centering
\[\bm{\rho} = \begin{bmatrix}
1.00 & 0.44 & 0.39 & 0.32 \\
0.44 & 1.00 & 0.30 & 0.33 \\
0.39 & 0.30 & 1.00 & 0.31 \\
0.32 & 0.33 & 0.31 & 1.00
\end{bmatrix}
\]
\caption{Correlation matrix $\bm{\rho}$.}
\label{Rho_matrix}
\end{subtable}
\hfill
\begin{subtable}{0.38\textwidth}
\centering
\begin{tabular}{ccccc} 
\Xhline{1.2pt}
                       & $c$    & $\beta$& $\kappa$& $\lambda$       \\ 
\hline
$C_1$                  & $5000$ & $2500$ & $0.05$  & $3.0$   \\
$C_2$                  & $4000$ & $2000$ & $0.05$  & $3.0$\\ 
$C_3$                  & $3000$ & $1500$ & $0.05$  & $3.0$\\
$C_4$                  & $1000$ &  $500$ & $0.05$  & $3.0$\\
\Xhline{1.2pt}
\end{tabular}
\caption{Parameters for the carbon intensities.}
\label{CI_parameters}
\end{subtable}
\caption{Model parameters for the stocks, correlation matrix, and carbon intensities.}\label{Market_parameters}
\end{table}

The results are shown in Figure \ref{fig:dyn}, suggesting that our model successfully balances the risk–return trade-off of each stock with its corresponding carbon risk. In particular, stock $4$, which maintains the lowest carbon intensity throughout the investment horizon, is consistently assigned the highest weight, reflecting the model’s sensitivity to sustainability criteria. However, the proposed portfolio selection methodology does not reduce to a naive exclusion of carbon intensive assets. Indeed, although stock 1 has the highest carbon intensity, it is assigned higher optimal portfolio weights than stocks $2$ and $3$ throughout most of the investment horizon. This is justified by the fact that stock $1$ exhibits the highest market price of risk, indicating that its financial attractiveness is sufficient to outweigh its carbon risk. This outcome highlights that our penalisation mechanism does not rigidly exclude high carbon intensive assets, but instead adjusts allocations based on a balanced evaluation of both environmental and financial attributes.
\begin{figure}[H]
\centering
\includegraphics[width=0.49\linewidth]{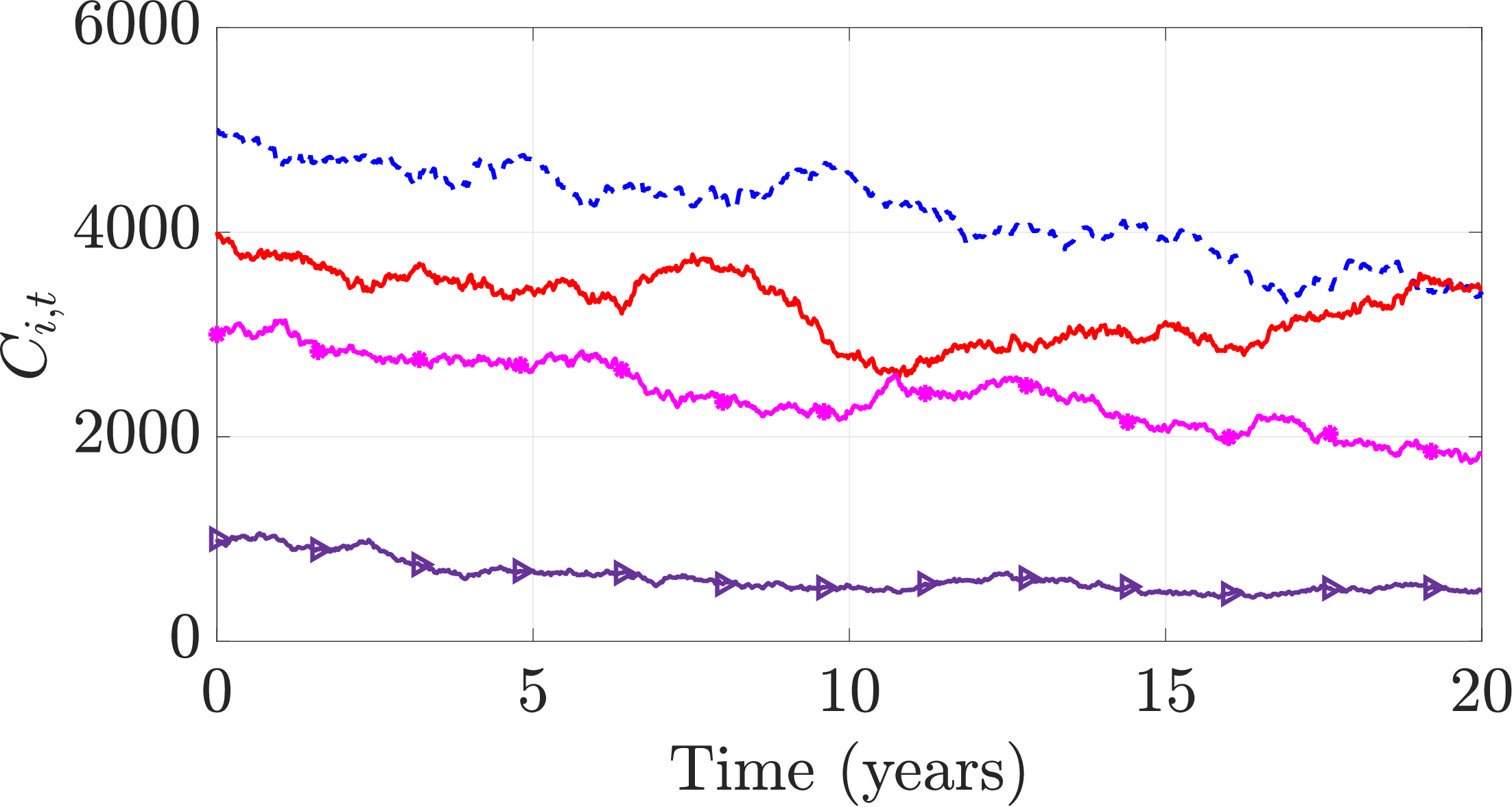}
\hfill 
\includegraphics[width=0.49\linewidth]{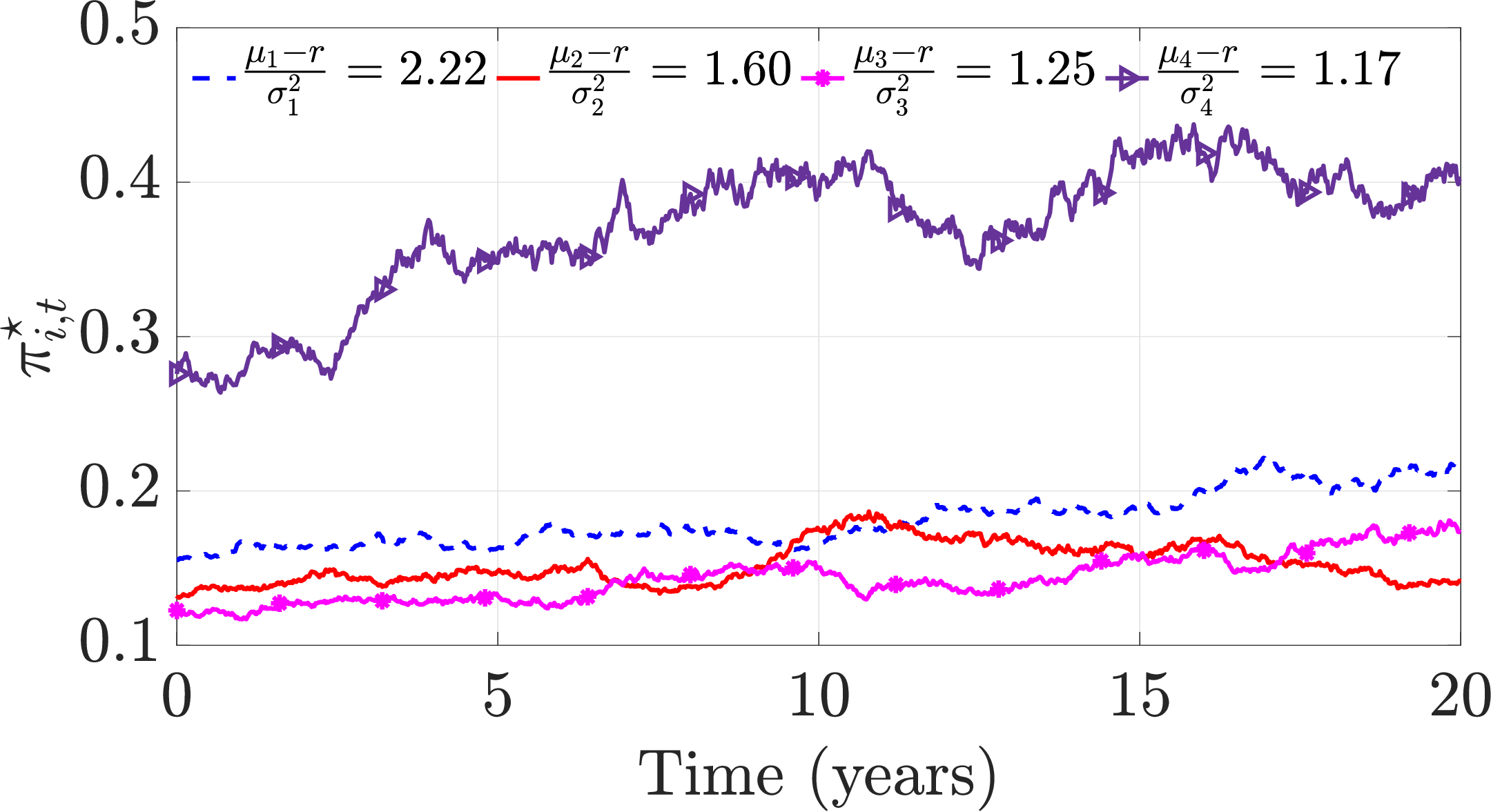}
\caption{The left panel shows the paths of the carbon intensity $\bfC$ of four stocks, simulated according to the CIR process (see Section \ref{sect:CIR_case}). The right panel shows the corresponding dynamics of the optimal portfolio weights $\bmpi^{\star}$.}
\label{fig:dyn}
\end{figure}
\section{Risk-minimizing strategies for unit linked policies under carbon risk}\label{sect:locally_risk_minimizing}
In the sequel, we consider an insurance company who proposes to its clients unit linked life insurance contracts with a sustainable label and seeks to hedge their payoffs. As usual in this type of life insurance policy, the value is linked to the performance of an underlying fund, and its maturity relates to the policyholder's remaining lifetime. 
We assume that the financial underlying of such policies is the investment fund derived in Section \ref{sec:optimal_inv}, and we analyse unit linked contract of pure endowment and term insurance types (this also provides results for endowment insurance contracts which are linear combinations of the two above).\\
The insurance company faces exposure to financial, carbon, and mortality risks. 
Since the carbon emissions of the firms and mortality risk do not correspond to tradeable assets, the underlying combined financial/insurance market is incomplete. Hence, the insurance company is unable to replicate perfectly the payoff of the policy via a self financing strategy. Part of the risk related to this contract remains unhedgeable, and generates a loss that the insurance company would like to properly \emph{minimize}. In this paper, we follow a quadratic approach and determine the strategies that minimize the residual cost (due to incompleteness) in $L^2$-sense.  

\subsection{A market for green unit linked contracts}
We consider the financial market of Section \ref{sec:setup}, and extend it by including the mortality risk of an insured individual. To do this, we consider a probability space $(\Omega^I, \mathcal H, \P^I)$. We consider an individual of age $\iota$, and let \(\tau\) be a nonnegative random variable representing their remaining lifetime having deterministic hazard rate $\{\gamma(t), t \ge 0\}$ such that 
for all $t\geq 0$,
\begin{align}
\P^I\left(\tau > t  \right) 
&=  \exp\left(- \int_0^{t} \gamma(s) \de s \right).
\end{align}
We set $H_t := \mathbbm{1}_{\{\tau \leq t\}}$, for every $t \geq 0$, and define the filtration $\mathbb{H} := \{\mathcal{H}_t\}_{ t \geq 0}$ generated by the process $H$, i.e.,
\begin{equation}\label{eq:filtrationH}
\mathcal{H}_t := \sigma\{ H_s, 0 \leq s \leq t \}, \quad \text{ for all } t \geq 0. 
\end{equation}
In the sequel, we assume that $\E^I\left[\int_0^T\gamma(s)\de s\right]<\infty$, then $\left\{H_t-\int_0^{t\wedge\tau} \gamma(s) \,\de s,\,t\geq 0\right\}$, is a $(\mathbb{H}, \P^I)$-martingale. Note that we can interpret $\P^I$ as the probability measure corresponding to the mortality table used by the insurance company. 
\begin{remark}
In this paper, we work under the assumption that the mortality intensity is deterministic, following, e.g. the Gompertz-Makeham law. The extension to stochastic intensity is an interesting addition to the framework, but it does not bring any additional financial insight, in particular in the case where the mortality and the financial market are independent. Therefore, we decided to opt for a simplified setting.       
\end{remark}
Next, we define the combined financial-insurance market, which is the natural framework for unit-linked life insurance contracts. We consider the  probability space $(\Omega, \mathcal{G}, \P)$ with the filtration $\mathbb{G}$ as our mathematical environment, where $\Omega=\Omega^M\times\Omega^I$, $\mathcal{G}=\mathcal{F}\vee \mathcal{H}$, $\P=\P^M\times  \P^I$ and  $\mathbb{G}=\left\lbrace\mathcal{G}_t\right\rbrace_{t\in[0,T]}$ given by $\mathcal G_t=\mathcal F_t \vee \mathcal H_t$, for every $t\in[0,T]$.
Note that, according to this construction, the insurance market is independent of the a priori given financial market. For the case where the financial and the insurance market are correlated, we refer to \cite{ceci2017unit}, where also partial information on the mortality intensity is assumed.\\ 
We assume that the insurance company issues a {green} unit linked life insurance policy, for a an individual with mortality intensity $\gamma$. Under this type of life insurance contract the value of the benefit depends on the investment fund, optimized in Section \ref{sec:optimal_inv}, and the time of payment depends on the death time of the individual. 

In this paper, we compute hedging strategies for two types of unit linked contracts, namely the pure endowment and the term insurance. Benefits of these contracts are illustrated below in Sections \ref{sec:pure} and \ref{sec:term}. In particular, we use similar arguments as in \cite{ceci2015hedging}, adapted to a market with an additional non-traded carbon intensity factor.\\

We consider the combined financial-insurance market which consists of $(d+1)$-tradeable assets and a unit linked contract. This market is incomplete for two reasons. First, in view of equation \eqref{eq_optimal_portf}, the variations of carbon emission levels $\mathbf C$ which appear in an integrated form, directly affect the payoff structure. This risk cannot be hedged by trading in $\mathbf S$. Second, mortality risk cannot be covered through market investments. 
\begin{remark}
Partial hedging of mortality risk can be obtained by investing in, e.g., longevity bonds. However, we opt against introducing these investment opportunities for the following reasons. Longevity bonds are not widely traded, and the market for them is extremely illiquid. This lack of depth creates pricing uncertainty and high transaction costs, which undermines their usefulness in dynamic hedging strategies. In practice, it becomes difficult to adjust or unwind positions, making them unreliable for continuous risk management in a stochastic setting.
Moreover, longevity bonds are typically linked to general population mortality indices, such as national life tables. However, insurance companies often underwrite lives from more specific, healthier subsets (e.g., individuals with medical underwriting or higher socio-economic status). This mismatch creates basis risk, meaning that changes in the bond's value may not accurately reflect the longevity risk exposure of the insurance company’s actual portfolio.
\end{remark}

For an illustration, in this section we assume that the carbon intensity process follows the dynamics in equation \eqref{eq:CIR_PROCESS}. However the same computations could be applied to any of the three possible model for the carbon intensity discussed in examples in Section \ref{sec:examples}, using their own semimartingale decomposition. \\

In the sequel, since $\mathbf{\Sigma}$ is invertible we define $\bm{\theta}:= \mathbf{\Sigma}^{-1}\left(\bm{\mu}- r\1 \right)$, and the probability measure $\Q^M$, equivalent to $\P^M$, with the density 
\begin{equation}\label{eq:mmm^M}
\frac{\de \Q^M}{\de \P^M}\bigg|_{\mathcal F_t}= \exp\left(-\bm{\theta}^\top \bm{Z}_t - \frac{1}{2} \|\bm{\theta}\|^2t\right)=:L_t, \quad t \in [0,T].
\end{equation}
We let $\whbfZ=\{\whbfZ_t\}_{t\in[0,T]}$, given by $\whbfZ_t=\bfZ_t+\bm{\theta} t, \quad t\in[0,T]$. 
Then $\whbfZ$ is a  $(\mathbb F, \Q^M)$-Brownian motion and the vector of discounted risky asset price processes $\widetilde{\mathbf S}=\{\widetilde{\mathbf S}_t\}_{t \in [0,T]}$, with $\widetilde{\mathbf S}_t:=e^{-rt}\mathbf S_t$, is a $(\mathbb F,\Q^M)$-martingale. Hence, $\Q^M$ is an equivalent martingale measure for the discounted stock-price process
$\widetilde{\mathbf S}$ in the filtration $\mathbb F$. It is unique when the stock-price market is
considered with respect to the filtration generated by the traded financial risk only. In the 
filtration $\mathbb F$, which also contains the non-traded carbon-intensity information, equivalent martingale measures are in general not unique.\\
In view of the independence between the financial market, the carbon intensity and the mortality we define the probability measure ${\mathbb Q}^*$ on \((\Omega,\mathcal G)\) as
\begin{equation}\label{eq:mmm}
\frac{\de{\mathbb Q}^*}{\de \P}\bigg|_{\mathcal G_t}= L_t,\quad t\in[0,T].
\end{equation}
It is easy to verify that $L$ is a square-integrable $(\mathbb G, \P)$-martingale; moreover $\Q^*$ is a martingale measure for the combined financial-actuarial market and $\whbfZ$ is a $(\mathbb G, {\mathbb Q}^*)$-Brownian motion. 
In particular, since $\Q^*$ changes only the drift of the traded Brownian risk and leaves the non-traded carbon and mortality risks unchanged, it corresponds to the so-called {\em minimal martingale measure} (see, e.g., \cite{follmer2010minimal}), i.e. that equivalent martingale measure such that any $(\mathbb G, \P)$ martingale orthogonal to the martingale part of $\bm{S}$ is a $(\mathbb G, \Q^*)$-martingale.

\begin{remark}
Because of market incompleteness, the martingale condition for the discounted stock-price process only fixes the change of measure
associated with the traded Brownian risk. This component is given by $\bm{\theta}= \mathbf{\Sigma}^{-1}\left(\bm{\mu}- r\1 \right)$. The remaining degrees of freedom concern the non-traded sources of risk. Under the CIR
specification \eqref{eq:CIR_PROCESS} adopted in this section, where in particular $\bfW=\left(W_{1,t},\dots,W_{d,t}\right)^\top$ denotes the Brownian motion driving
the carbon-intensity process, a generic equivalent martingale measure $\Q$ may be represented by
a density of the form
\begin{equation}
L^{\mathbb{Q}}_t=\frac{\de\mathbb Q}{\de \P}\bigg|_{\mathcal G_t}=\mathcal{E}\left(-\int_0^\cdot\bm{\theta}_s^\top\de\bfZ_s+\int_0^\cdot\bm{\psi}_s^{\mathbb{Q} \top}\de\bfW_s+\int_0^\cdot \varphi^{\Q}_s(\de H_s-(1-H_s)\gamma(s)\de s)\right)_t,\quad t\in [0,T],
\end{equation}
where the $\mathbb G$-predictable processes $\bm{\psi}^{\mathbb Q}=\{\bm{\psi}_t^{\mathbb Q}\}_{t \in [0,T]}$ and $\varphi^{\Q}=\{\varphi_t^{\Q}\}_{t\in [0,T]}$, with $\varphi_t^\Q>-1$, for each $t \in [0,T]$, represent possible changes of measure for carbon-intensity risk and mortality risk,
respectively, and are assumed to satisfy the usual integrability conditions ensuring that \(L^\Q\)
is a true $(\mathbb G, \mathbb P)$-martingale. Under $\Q$, the mortality intensity is given by
$\gamma^{\Q}=(1+\varphi^{\Q})\gamma$. Here, $\mathcal{E}(Y)$ denotes the Doléans-Dade exponential of a $(\mathbb{G}, \mathbb{P})$-semimartingale $Y$.
In view of the fact that $\P^I$ is the probability measure used by the insurance company to estimate the mortality of the insured individual (i.e., it corresponds to a specific mortality table used by the insurance company), we may also assume that it already incorporates the mortality premium, that is $\varphi^{\Q}=0$. Note that,  because of independence between the measure $\P^I$ and the financial market, this assumption does not affect the computation of the hedging strategy from a technical perspective. In particular, what changes is that $\gamma^{\Q}$ would replace $\gamma$. Moreover, consistently with the measure ${\mathbb Q}^*$ defined above, we leave the carbon-intensity risk unchanged by taking \(\bm{\psi}^\mathbb{Q}\equiv0\).
\end{remark}
The theoretical derivation in Sections \ref{sec:pure} and \ref{sec:term} is presented for a single policyholder, consistently
with the construction above. The extension to portfolios of independent policies follows by aggregation
of the individual contracts and is used in the numerical analysis of Section \ref{sec:hedging}. Before deriving the explicit formulas for the insurance contracts, we recall the quadratic hedging criterion that we apply to this setting.

\subsection{The quadratic hedging criterion}
Suppose that the insurance company issues a unit linked life insurance contract with the maturity $T$ and the payoff given by a $\mathcal{G}_T$ measurable random variable $G_T$. The insurer seeks to hedge such a contract using available financial instruments. Because of market incompleteness,  we adopt the {\em local risk-minimization} approach to determine a hedging strategy,  not necessarily self-financing, that replicates the payoff at maturity and minimizes the strategy costs in a suitable way  (see \cite{schweizer2001} for further details). The first step is to introduce the class of all admissible hedging strategies.
\begin{defn}
The space $\Theta(\mathbb G)$ consists of all $\mathbb R^d$-valued $\mathbb G$-predictable processes ${\bm{\eta}}=\{{\bm{\eta}}_t\}_{t \in [0,T]}$ satisfying the following integrability condition:
\begin{equation}
\mathbb E^{\P}\left[\int_0^T\left(\|{\bm \eta}_u^\top {\rm diag}(\widetilde{\mathbf{S}}_u)\mathbf{\Sigma}\|^2 +|{\bm \eta}_u^\top {\rm diag}(\widetilde{\mathbf{S}}_u)(\bm{\mu}-r\1)|\right)\de u\right]<\infty.
\end{equation}
\end{defn}
\begin{defn}
An admissible strategy is a pair $(\bm \eta, \zeta)$, where $\bm \eta \in \Theta(\mathbb G)$ and ${\zeta}=\{{ \zeta}_t\}_{t \in [0,T]}$ is a $\mathbb R$-valued $\mathbb G$-adapted process such that the associated discounted value process $V(\bm \eta, \zeta)=\bm \eta^\top \widetilde{\mathbf{S}} + \zeta$ is right-continuous and square-integrable, i.e., $ V_t(\bm \eta, \zeta)\in L^2(\mathcal G_t,\mathbb P)$ for every $t \in [0,T]$.
\end{defn}
Here, the processes $\bm \eta$ and $ \zeta$ represent, respectively, the units of all risky and riskless assets held in the portfolio. For any admissible hedging strategy $(\bm \eta, \zeta)$, we can define the associated {\em cost process} $C(\bm \eta, \zeta)=\{C_t(\bm \eta, \zeta)\}_{t \in [0,T]}$, which is the $\mathbb R$-valued $\mathbb G$-adapted process given by
\begin{equation}
C_t(\bm \eta, \zeta)=V_t(\bm \eta, \zeta) - \int_0^t \bm \eta_u^\top \de \widetilde{\mathbf S}_u, \quad t \in [0,T].
\end{equation}
Although admissible strategies that replicate the payoff at maturity, i.e., $V_T(\bm \eta, \zeta)=G^{PE}$, will in general not be self-financing, it turns out that good admissible strategies are self-financing \emph{on average} in the  following sense.
\begin{defn}
An admissible strategy $(\bm \eta, \zeta)$ is called {\em mean-self-financing} if the associated cost process $C(\bm \eta, \zeta)$ is a $(\mathbb G,\mathbb P)$-martingale.
\end{defn}
Following the idea of \cite{schweizer2001}, we now introduce the concept of pseudo-optimal strategy in this framework.
\begin{defn}
Let $G_T\in L^2(\mathcal{G}_T, \mathbb P)$ be a random (discounted) payoff. An admissible strategy $(\bm \eta, \zeta)$, such that $V_T(\bm \eta, \zeta) = G_T$ $\mathbb P$-a.s., is called {\em pseudo-optimal} for $G_T$ if and only if $(\bm \eta, \zeta)$ is mean-self-financing and the $(\mathbb G, \mathbb P)$-martingale $C(\bm \eta, \zeta)$ is strongly orthogonal to the $\mathbb{P}$-martingale part of $\widetilde{\mathbf{S}}$.
\end{defn}
Note that, in our setup, to be strongly orthogonal to the martingale part of \( \widetilde{\mathbf{S}}\) is equivalent to be strongly orthogonal to $\bfZ$.
The key result for finding pseudo-optimal strategies is the F\"ollmer–Schweizer decomposition.
\begin{defn}
A random (discounted) payoff \(G_T\in L^2(\mathcal{G}_T, \mathbb P)\) admits the F\"ollmer–Schweizer decomposition with respect to \( \widetilde{\mathbf{S}}\) if there exist a random variable $G_0\in L^2(\mathcal{G}_0, \mathbb P)$, a process $\bm \eta^G \in\Theta(\mathbb G)$, and a square-integrable $(\mathbb G,\mathbb P)$-martingale $O=\{O_t\}_{t \in [0,T]}$ with $O_0=0$ strongly orthogonal to $\bf Z$, such that
\begin{equation}\label{eq:FS_decomp}
G_T =G_0 + \int_0^T\bm \eta_t^{G,\top} \de \widetilde{\mathbf{S}}_t + O_T, \quad \mathbb P\mbox{-}{\rm a.s.}
\end{equation}
\end{defn}
\begin{prop}\label{prop:FS}
A random (discounted) payoff \(G_T\in L^2(\mathcal{G}_T,\mathbb P)\) admits a unique pseudo-optimal strategy $(\bm{\eta}^{\star}, \zeta^{\star})$, with $V_T(\bm{\eta}^{\star}, \zeta^{\star})=G_T$ $\mathbb P$-a.s., if and only if $G_T$ admits the F\"ollmer-Schweizer decomposition \eqref{eq:FS_decomp}.
The strategy $(\bm{\eta}^{\star}, \zeta^{\star})$ is explicitly given by
\begin{equation}
\bm \eta_t^{\star}=\bm{\eta}_t^{G}, \quad t \in [0,T],
\end{equation}
with minimal cost
\begin{equation}
C_t(\bm{\eta}^{\star}, \zeta^{\star})=G_0 + O_t, \quad t \in [0,T],
\end{equation}
and the corresponding discounted value process is 
\begin{equation}
V_t(\bm{\eta}^{\star}, \zeta^{\star})= G_0 + \int_0^t  \bm \eta_u^{G,\top} \de \widetilde{\mathbf S}_u + O_t, \quad t \in [0,T],
\end{equation}
so that $\zeta_t^{\star}=V_t(\bm{\eta}^{\star}, \zeta^{\star})-\bm{\eta}_t^{\star,\top}\widetilde{\mathbf S}_t$, for every $t \in [0,T]$.
\end{prop}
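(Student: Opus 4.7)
The plan is to establish the equivalence in two directions and then leverage the uniqueness of the Föllmer–Schweizer decomposition. Throughout, the crucial observation is that, since $\widetilde{\mathbf{S}}$ is continuous with dynamics
\begin{equation}
\de\widetilde{\mathbf{S}}_t=\diag(\widetilde{\mathbf{S}}_t)\bigl[(\bmmu-r\1)\de t+\bfSigma\de\bfZ_t\bigr],
\end{equation}
its $(\mathbb{G},\P)$-martingale part is driven solely by $\bfZ$, so strong orthogonality of a square-integrable $(\mathbb G, \P)$-martingale to the martingale part of $\widetilde{\mathbf{S}}$ is equivalent to strong orthogonality to $\bfZ$. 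I would state this equivalence once at the start and use it freely.

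For the implication ``FS decomposition $\Rightarrow$ pseudo-optimal strategy'', I would start from \eqref{eq:FS_decomp} and define $\bm\eta_t^\star:=\bm\eta_t^G$ and
\begin{equation}
V_t(\bm\eta^\star,\zeta^\star):=G_0+\int_0^t\bm\eta_u^{G,\top}\de\widetilde{\mathbf S}_u+O_t,\qquad \zeta_t^\star:=V_t(\bm\eta^\star,\zeta^\star)-\bm\eta_t^{\star,\top}\widetilde{\mathbf S}_t.
\end{equation}
I would then verify each item in the definition of pseudo-optimality: admissibility follows from $\bm\eta^G\in\Theta(\mathbb G)$ together with square-integrability of $O$ and $G_0$, which yields $V_t\in L^2(\mathcal G_t,\P)$ by the Itô isometry applied under the measure change \eqref{eq:mmm} (so that $\int_0^\cdot\bm\eta^{G,\top}\de\widetilde{\mathbf S}$ is a $(\mathbb G,\Q^*)$-martingale) combined with the $L^2$-boundedness of the density $L$. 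The replication property $V_T(\bm\eta^\star,\zeta^\star)=G_T$ is immediate from \eqref{eq:FS_decomp}. For the cost, direct computation gives $C_t(\bm\eta^\star,\zeta^\star)=G_0+O_t$, which is a $(\mathbb G,\P)$-martingale (mean-self-financing) and, by construction, strongly orthogonal to $\bfZ$, hence to the martingale part of $\widetilde{\mathbf S}$.

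For the converse, if $(\bm\eta^\star,\zeta^\star)$ is pseudo-optimal with $V_T(\bm\eta^\star,\zeta^\star)=G_T$, I would set $G_0:=C_0(\bm\eta^\star,\zeta^\star)=V_0(\bm\eta^\star,\zeta^\star)$, $\bm\eta^G:=\bm\eta^\star$, and $O_t:=C_t(\bm\eta^\star,\zeta^\star)-C_0(\bm\eta^\star,\zeta^\star)$. Unwinding the definition of the cost process yields
\begin{equation}
G_T=V_T(\bm\eta^\star,\zeta^\star)=C_T(\bm\eta^\star,\zeta^\star)+\int_0^T\bm\eta_t^{\star,\top}\de\widetilde{\mathbf S}_t=G_0+\int_0^T\bm\eta_t^{G,\top}\de\widetilde{\mathbf S}_t+O_T,
\end{equation}
and the defining properties of pseudo-optimality (mean-self-financing plus strong orthogonality of $C$ to the martingale part of $\widetilde{\mathbf S}$) translate exactly into the requirements that $O$ is a square-integrable $(\mathbb G,\P)$-martingale starting at $0$ and strongly orthogonal to $\bfZ$, while $\bm\eta^G\in\Theta(\mathbb G)$ follows from admissibility of $(\bm\eta^\star,\zeta^\star)$.

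Uniqueness of the pseudo-optimal strategy then reduces to uniqueness of the Föllmer–Schweizer decomposition, which I would invoke as a standard fact (see \cite{schweizer2001}): two such decompositions $(G_0,\bm\eta^G,O)$ and $(G_0',\bm\eta^{G,\prime},O')$ would make $\int_0^\cdot(\bm\eta^G-\bm\eta^{G,\prime})^\top\de\widetilde{\mathbf S}+(O-O')$ vanish at $T$, and taking the $(\mathbb G,\P)$-quadratic variation against $\bfZ$ (using orthogonality of $O-O'$) forces $\bm\eta^G=\bm\eta^{G,\prime}$, whence $O=O'$ and $G_0=G_0'$. The main technical delicacy I expect is the admissibility/square-integrability bookkeeping in the forward direction, specifically justifying that the stochastic integral $\int_0^\cdot\bm\eta^{G,\top}\de\widetilde{\mathbf S}$ is square-integrable under $\P$ rather than just under $\Q^*$; this is exactly where the explicit integrability condition built into the definition of $\Theta(\mathbb G)$ is used.
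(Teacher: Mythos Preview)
Your argument is correct and is essentially the standard proof; note, however, that the paper does not actually prove Proposition~\ref{prop:FS} but treats it as a known result from \cite{schweizer2001}, so there is no paper proof to compare against. Your two-direction argument plus the uniqueness step via quadratic covariation against $\bfZ$ is exactly the route taken in that reference, and your identification of the $L^2(\P)$-integrability of $\int_0^\cdot\bm\eta^{G,\top}\de\widetilde{\mathbf S}$ as the only delicate point is accurate.
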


\subsection{Hedging of a pure endowment contract}\label{sec:pure}
We illustrate in this section the methodology for the computation of the risk minimizing strategy for a unit linked pure endowment contract, and we briefly discuss the extension to the unit linked term insurance in the next section, postponing all the computations to the Appendix.

\begin{remark}[Single constract vs portfolio of policies]
We remark that it is not difficult to generalize the following methodology to the case of a portfolio of independent unit linked life insurance contracts (both of  pure endowment and term insurance types). Such generalization will follow using only the property of the linearity of the expectation and therefore in the sequel we provide theoretical results for the case of a single contract, and we discuss the effect of diversification for a portfolio of contracts in the numerical part. 
\end{remark}

We let $T$ be the maturity of the unit linked pure endowment contract with discounted payoff 
\begin{equation} \label{eq:payoff_PE_portf}
G^{PE}=  e^{-rT}\phi(X^{\bmpi^{\star}}_T)\mathds{1}_{\{\tau > T\}}, 
\end{equation}
where $X^{\pi^\star}$ is given by
\begin{equation}\label{eq_optimal_portf}
\dfrac{\de X_t^{\bmpi^{\star}}}{ X_t^{\bmpi^{\star}}}=\left[r+\bm{\pi}_t^{\star,\top}\left(\bm{\mu}-r\1\right)\right]\de t+\bm{\pi}_t^{\star,\top}\bfSigma\de\bfZ_t,\quad X_0^{\bmpi^{\star}}=x_0,
\end{equation}
with $\bmpi^\star_t=\bmpi^\star(t,\mathbf{C}_t)$ as in equation \eqref{eq:sol_CRRA}. Here, $X^{\bmpi^{\star}}$ is
the value of the optimized fund (see  Theorem \ref{thm:general_thm}). The function  $\phi:\mathbb{R}^+ \to \mathbb{R}^+ $ defines the benefit structure, that is, $\phi(X_T^{\bmpi^{\star}})$
provides the amount to be paid at maturity $T$, if the policyholder is still alive at time $T$. We assume that $\phi(X_T^{\bmpi^{\star}}) \in L^2(\mathcal F_T,\P^M)$, and hence $G^{PE} \in L^2(\mathcal G_T, \P)$.
A typical example of benefit is $\phi(x)=\min(\max(x, k), \bar{k})$, which includes a minimum guarantee $k>0$ and a maximum amount $\bar{k}>0$ for the policyholder.   

Now, we compute the pseudo-optimal strategy $(\bm \eta^\star,\zeta^\star)$ for the pure endowment contract, with  payoff $G^{PE}$, using its 
its F\"ollmer-Schweizer decomposition and  Proposition \ref{prop:FS}. We let $\Theta(\mathbb{F})$ be the space of all $\R^d$-valued $\mathbb F$-predictable processes $\bm \beta=\{\bm \beta_t\}_{t \in [0,T]}$ satisfying the following integrability condition
\begin{equation}
\mathbb E^{\P^M}\left[\int_0^T\left(\|{\bm \beta}_u^\top {\rm diag}(\widetilde{\mathbf{S}}_u)\mathbf{\Sigma}\|^2 +|{\bm \beta}_u^\top {\rm diag}(\widetilde{\mathbf{S}}_u)(\bm{\mu}-r\1)|\right)\de u\right]<\infty.
\end{equation}
We assume that the discounted payoff ${ e^{-rT}}\phi(X_T^{\bmpi^{\star}})$ admits the Föllmer–Schweizer decomposition with respect to $\widetilde{\mathbf S}$ and $\mathbb{F}$, i.e.,
\begin{equation}\label{eq:FS_phi_i}
{ e^{-rT}}\phi(X_T^{\bmpi^{\star}}) = U_0 + \int_0^T \bm{\beta}_t^{\top} \, \de \widetilde \bfS_t + A_T \quad \mathbb{P}^M\text{-a.s.}, 
\end{equation}
where $U_0 \in L^2(\mathcal{F}_0, \mathbb{P}^M)$, $\bm{\beta} \in \Theta(\mathbb{F})$, and $A = \{A_t\}_{t \in [0,T]}$ is a square-integrable $(\mathbb{F}, \mathbb{P}^M)$-martingale with $A_0 = 0$, strongly orthogonal to the $\mathbb{P}^M$-martingale part of $\widetilde{\mathbf{S}}$. Since $\widetilde{\mathbf{S}}$ is a continuous process and satisfies the {\em structure condition} (see, e.g. \cite{ansel2006unicite}), then, the F\"ollmer–Schweizer decomposition of $G^{PE}$ with respect to $\widetilde{\mathbf S}$ and $\mathbb{F}$ coincides with its Galtchouk-Kunita-Watanabe decomposition under the measure $\mathbb{Q}^M$ (see, e.g., \cite[page 552-553]{schweizer2001}), which identifies the minimal martingale measure in the financial market. Using the fact that $\Q^*=\Q^M\times \P^I$, we can take the conditional expectation  of $G^{PE}$ with respect to ${\mathcal{G}}_t$ under the minimal martingale measure (for the combined market) $\mathbb{Q}^*$, and we get
\begin{align}
\E^{\Q^*}\left[G^{PE} \mid  {\mathcal{G}}_t \right] 
&= \E^{\Q^*}\left[ e^{-rT}\phi(X_T^{\bmpi^{\star}}) \ind[\tau>T] \Big{|} {\mathcal{G}}_t \right]\\
\label{eq:value}&=\E^{\Q^M}\left[  e^{-rT}\phi(X_T^{\bmpi^{\star}})\mid {\mathcal{F}}_t \right] \ind[\tau>t] e^{-\int_t^T\gamma(s) \de s} ,\quad t\in[0,T], 
\end{align}
where in equation \eqref{eq:value} we have used the independence between the financial market and the insured individuals. We now define the processes $B=\{B_t\}_{t\in [0,T]}$ and $U=\{U_t\}_{t\in [0,T]}$ as follows:
\begin{align}
B_t &:= \ind[\tau>t] e^{-\int_t^T\gamma(s) \de s}, \label{eq:B} \\
U_t &:= \E^{\Q^M}[ { e^{-rT}}\phi(X_T^{\bmpi^{\star}}) \mid \mathcal{F}_t] =\E^{\Q^M} \left[ U_0 + \int_0^T \bm{\beta}_u^{\top} \, \de\widetilde \bfS_u + A_T \, \Big| \, \mathcal{F}_t \right] \\&= U_0 + \int_0^t \bm{\beta}_u^{\top} \, \de\widetilde \bfS_u + A_t, \label{eq:U} 
\end{align}
for each $t \in [0,T]$, where for \eqref{eq:U} we relied on decomposition \eqref{eq:FS_phi_i} and the martingale-preserving property of the minimal martingale measure. Using integration by parts, and the fact that
\(\{H_t-\int_0^t(1-H_s)\gamma(s)\,\de s\}_{t \in [0,T]}
\)
is a \((\mathbb H,\P^I)\)-martingale, the process $B$ in equation \eqref{eq:B} admits the representation
\begin{equation}\label{eq:B_integral}
\de B_t
=
-\exp\left(-\int_t^T\gamma(u)\,\de u\right)
\bigl(\de H_t-(1-H_t)\gamma(t)\,\de t\bigr),
\qquad
B_0=\exp\left(-\int_0^T\gamma(u)\,\de u\right).
\end{equation}
\begin{prop}\label{prop:FS_decomp_PE}
The discounted pure endowment contract $G^{PE}$ admits the F\"ollmer-Schweizer decomposition given by
\begin{equation}
G^{PE} = G_0^{PE} + \int_0^T  B_{s-} \bm{\beta}_s^{\top}\, \de\widetilde \bfS_s + O_T^{PE} \quad  \mathbb{P}\text{-a.s.}, 
\end{equation}
where
\begin{equation}
G_0^{PE} = \E^{\Q^M} \left[ { e^{-rT}}\phi(X_T^{\bmpi^{\star}}) \mid \mathcal{F}_0\right]  \ \E^{\P^I}\left[\ind[\tau>T]\right]=  U_0   e^{-\int_0^T \gamma_s \de s},
\end{equation}
and
\begin{equation}
O_t^{PE}
=
\int_0^t B_{s-}\,\de A_s
-
\int_0^t U_{s-}e^{-\int_s^T\gamma(u)\,\de u}
\bigl(\de H_s-(1-H_s)\gamma(s)\,\de s\bigr),
\qquad t\in[0,T].
\end{equation}
where $B$ is given by \eqref{eq:B_integral},  $ \bm{\beta}$ is the integrand with respect to $\widetilde{\bfS}$ in the F\"ollmer-Schweizer decomposition of $\phi(X_T^{\bmpi^{\star}})$, see \eqref{eq:FS_phi_i}.
Then, the pseudo-optimal strategy $(\bm{\eta}^{\star}, \zeta^{\star})$ is given by
\begin{equation}
\bm{\eta}^{\star}_t = B_{t-} \bm{\beta}_t, \quad \zeta_t^{\star} = V_t(\bm{\eta}^{\star}, \zeta^{\star}) -  B_t \bm{\beta}^{\top}_t \widetilde \bfS_t,\quad t\in[0,T],
\end{equation}
and the optimal value process $V(\bm{\eta}^{\star}, \zeta^{\star})=\{V_t(\bm{\eta}^{\star}, \zeta^{\star})\}_{t \in [0,T]}$ is given by
\begin{equation}
V_t(\bm{\eta}^{\star}, \zeta^{\star}) = G_0^{PE} + \int_0^t B_{r-} \bm{\beta}_r^{\top} \de \widetilde \bfS_r + O_t^{PE},\quad t\in[0,T].
\end{equation}
\end{prop}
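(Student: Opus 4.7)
The plan is to construct the Föllmer--Schweizer decomposition of $G^{PE}$ summand by summand and then read off the pseudo-optimal strategy from Proposition~\ref{prop:FS}. The starting observation is that, by the definitions in \eqref{eq:B} and \eqref{eq:U}, we have $e^{-rT}\phi^{(i)}(X_T^{\bmpi^{\star}})=U_T^{(i)}$ and $\ind[\tau_i>T]=B_T^{(i)}$, so that
\begin{equation}
G^{PE}=\sum_{i=1}^{\ell} U_T^{(i)} B_T^{(i)}.
\end{equation}
Hence the problem reduces to decomposing each product $U^{(i)}B^{(i)}$ into a $\widetilde{\bfS}$-integral plus an orthogonal martingale, and summing.

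\textbf{Key steps.} First, I would apply Itô's product rule to $U^{(i)}B^{(i)}$. The process $U^{(i)}$ is continuous (it lives on the Brownian filtration $\mathbb{F}$ and, via \eqref{eq:U}, is the sum of a stochastic integral against the continuous process $\widetilde{\bfS}$ and the continuous orthogonal martingale $A^{(i)}$), while $B^{(i)}$ is a pure-jump process of finite variation driven only by $H^{(i)}$. The product structure $\P=\P^M\times\P^I$ renders $\mathbb{F}$ and $\mathbb{H}$ independent, so the quadratic covariation $[U^{(i)},B^{(i)}]$ vanishes, and
\begin{equation}
d\bigl(U_t^{(i)}B_t^{(i)}\bigr)=B_{t-}^{(i)}\,dU_t^{(i)}+U_{t-}^{(i)}\,dB_t^{(i)}.
\end{equation}
Second, I substitute \eqref{eq:U} for $dU^{(i)}$ and \eqref{eq:B_integral} for $dB^{(i)}$, integrate from $0$ to $T$, and sum over $i$. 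The $d\widetilde{\bfS}$-term delivers the integrand $\sum_{i=1}^{\ell} B_{-}^{(i)}\bm{\beta}^{(i)}$, the remainder collects into $O_T^{PE}$ in the stated form, and the initial value is $G_0^{PE}=\sum_{i=1}^{\ell}U_0^{(i)}B_0^{(i)}$ with $B_0^{(i)}=\E^{\P^I}[e^{-\int_0^T\gamma_s^{(i)}\de s}]$. Third, I would verify the three Föllmer--Schweizer requirements: (a) admissibility of the integrand, $\sum_{i}B_{-}^{(i)}\bm{\beta}^{(i)}\in\Theta(\mathbb{G})$, using $0\le B^{(i)}\le 1$ and $\bm{\beta}^{(i)}\in\Theta(\mathbb{F})$; (b) $O^{PE}$ is a square-integrable $(\mathbb{G},\P)$-martingale starting at zero, using the square-integrability of $A^{(i)}$, the integrability of $\xi^{(i)}$ from \eqref{eq:B_integral}, and boundedness of $B^{(i)}$, together with a standard localization for $U^{(i)}$; (c) strong orthogonality of $O^{PE}$ to $\bfZ$, which holds because each $A^{(i)}$ is orthogonal to $\bfZ$ by hypothesis and the compensated mortality martingales are orthogonal to $\bfZ$ by the independence of $\mathbb{F}$ and $\mathbb{H}$.

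\textbf{Main obstacle and conclusion.} Once this decomposition is in place, the explicit form of $(\bm{\eta}^{\star},\zeta^{\star})$ and the value process $V(\bm{\eta}^{\star},\zeta^{\star})$ follow immediately from Proposition~\ref{prop:FS} by reading off the integrand and the initial value. The main technical hurdle is the careful lifting of objects defined under $\P^M$ on $\mathbb{F}$ (the integrand $\bm{\beta}^{(i)}$ and the orthogonal martingale $A^{(i)}$) and under $\P^I$ on $\mathbb{H}$ (the martingale part of $B^{(i)}$) to a coherent semimartingale framework on $(\mathbb{G},\P)$; all these liftings, the vanishing of $[U^{(i)},B^{(i)}]$, and the strong orthogonality of $O^{PE}$ to $\bfZ$ rest on the product structure $\P=\P^M\times\P^I$ and must be invoked consistently. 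A secondary but important point is to confirm that $A^{(i)}$ remains a $\Q^M$-martingale, which is the defining property of the minimal martingale measure \eqref{eq:mmm^M} in the continuous $\widetilde{\bfS}$ setting and ensures that the decomposition \eqref{eq:U} is indeed a $(\mathbb{F},\Q^M)$-martingale representation.
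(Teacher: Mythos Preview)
Your proposal is correct and follows essentially the same route as the paper: apply It\^o's product rule to $U^{(i)}B^{(i)}$, use independence of the financial and insurance parts to kill the covariation, substitute \eqref{eq:U} and \eqref{eq:B_integral}, verify the F\"ollmer--Schweizer requirements (admissibility via $0\le B^{(i)}\le 1$ and $\bm\beta^{(i)}\in\Theta(\mathbb F)$, orthogonality to $\bfZ$), and read off the strategy from Proposition~\ref{prop:FS}. One small caveat: your claim that $U^{(i)}$ is continuous because $\mathbb F$ is a Brownian filtration is not quite accurate in the general setting of the paper (the carbon process $\bfC$ may have jumps, e.g.\ in the Markov-chain case), but this is harmless since you correctly invoke the product structure $\P=\P^M\times\P^I$ as the real reason $[U^{(i)},B^{(i)}]=0$.
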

\begin{proof}
See Section \ref{sect:proof_FS_decomp_PE} in Appendix.
\end{proof}
In the next result, we provide a characterization of the process $\bm{\beta}$. \begin{prop}\label{Prop:characterization_beta_B_PE}
Let $({\bm{\beta}},{\zeta})$ be the pseudo-optimal strategy for the payoff $\phi(X_T^{\bmpi^{\star}})$ with respect to $\mathbb{F}$. Then, the optimal discounted value process $\{{V}_t({\bm{\beta}},{\zeta})\}_{t \in [0,T]}$ is given by
\begin{equation}
V_t({\bm{\beta}},{\zeta}) = \E^{\Q^M}[e^{-rT} \phi(X_T^{\bmpi^{\star}}) \mid \mathcal{F}_t], \quad t \in [0,T],
\end{equation}
and we have that $\bm{\beta}_t =  X_t^{{\bm\pi}^\star}\,\frac{\partial F}{\partial x}(t, X_t^{\bmpi^{\star}}, \bfC_t) \bmpi^{\star, \top}_t \mathrm{diag}(\widetilde {\bfS}_t)^{-1} $, for every $t\in[0,T]$, where the function $F(t, x, c)$, is the unique solution of the backward equation  
\begin{align}\label{eq:pdeF}
{\frac{\partial F}{\partial t}(t,x,\bfc)}+
\frac{\partial F}{\partial x}(t,x,\bfc) x r + \frac{1}{2}\frac{\partial^2 F }{\partial x^2}(t,x,\bfc) x^2 \bmpi^{\star,\top} \bfSigma \bfSigma^\top \bmpi^\star + \mathcal{L}^\bfC F(t,x,\bfc)=0,
\end{align}
for all $(t,x,\bfc)\in[0,T)\times\mathbb{R}_+\times\mathcal{D},$ with the final condition 
$F(T, x, \bfc)=e^{-rT} \phi(x)$, for all $(x,\bfc)\in\mathbb{R}_+\times\mathcal{D}$.
\end{prop}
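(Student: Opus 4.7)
The plan is to derive the Föllmer–Schweizer decomposition of $e^{-rT}\phi^{(i)}(X_T^{\bmpi^\star})$ with respect to $\widetilde{\bfS}$ and $\mathbb{F}$ by exploiting the Markov structure of $(X^{\bmpi^\star},\bfC)$ under the minimal martingale measure $\Q^M$, and then to read off the pseudo-optimal strategy via Proposition \ref{prop:FS} applied to the single-payoff problem. Since $\widetilde{\bfS}$ is continuous and satisfies the structure condition, the $\mathbb{F}$-Föllmer–Schweizer decomposition of $e^{-rT}\phi^{(i)}(X_T^{\bmpi^\star})$ coincides with its Galtchouk–Kunita–Watanabe decomposition under $\Q^M$.

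First, I would observe that since $\bfC$ is independent of $\bfZ$ and the density of $\Q^M$ in \eqref{eq:mmm^M} is driven only by $\bfZ$, the process $\bfC$ retains its $\P^M$-dynamics (and hence its infinitesimal generator $\mcLC$) under $\Q^M$, while $X^{\bmpi^\star}$ evolves as
\begin{equation}
\frac{\de X_t^{\bmpi^\star}}{X_t^{\bmpi^\star}} = r\, \de t + \bmpi_t^{\star,\top}\bfSigma\, \de \whbfZ_t.
\end{equation}
In particular, $(X^{\bmpi^\star},\bfC)$ is Markov under $\Q^M$, so defining
\begin{equation}
F^{(i)}(t,x,\bfc):=\E^{\Q^M}\left[e^{-rT}\phi^{(i)}(X_T^{\bmpi^\star})\mid X_t^{\bmpi^\star}=x,\,\bfC_t=\bfc\right],
\end{equation}
one obtains that $F^{(i)}(t,X_t^{\bmpi^\star},\bfC_t)=V_t(\bm\beta^{(i)},\zeta^{(i)})$ is an $(\mathbb{F},\Q^M)$-martingale by the tower property. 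This already proves the first assertion of the proposition.

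Next, I would apply Itô's formula to $F^{(i)}(t,X_t^{\bmpi^\star},\bfC_t)$ under $\Q^M$, using the semimartingale decomposition of $\bfC$ introduced before Theorem \ref{thm:ver_thm}. The finite-variation part equals
\begin{equation}
\partial_t F^{(i)} + r\,x\,\partial_x F^{(i)} + \frac{1}{2}\,x^2\,\bmpi^{\star,\top}_t\bfSigma\bfSigma^\top\bmpi^\star_t\,\partial^2_{x}F^{(i)}+\mcLC F^{(i)},
\end{equation}
evaluated at $(t,X_t^{\bmpi^\star},\bfC_t)$, and must vanish because $F^{(i)}(t,X_t^{\bmpi^\star},\bfC_t)$ is a martingale. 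This yields the backward equation \eqref{eq:pdeF} with terminal condition $F^{(i)}(T,x,\bfc)=e^{-rT}\phi^{(i)}(x)$; existence and uniqueness of the classical solution, in each of the model specifications of Section \ref{sec:examples}, follow from the arguments of Lemmas \ref{cor:OU_P}--\ref{cor:MARKOV_CHAIN_P}. To identify $\bm\beta^{(i)}$, I would match the martingale parts: Itô's formula contributes the Brownian stochastic integral $X_t^{\bmpi^\star}\partial_x F^{(i)}\,\bmpi^{\star,\top}_t\bfSigma\,\de\whbfZ_t$, whereas $\de\widetilde{\bfS}_t=\mathrm{diag}(\widetilde{\bfS}_t)\bfSigma\,\de\whbfZ_t$ under $\Q^M$, so setting $\bm\beta_t^{(i)}=X_t^{\bmpi^\star}\partial_x F^{(i)}(t,X_t^{\bmpi^\star},\bfC_t)\,\bmpi^{\star,\top}_t\mathrm{diag}(\widetilde{\bfS}_t)^{-1}$ exactly reproduces this piece as $\bm\beta_t^{(i),\top}\de\widetilde{\bfS}_t$. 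The residual martingale, arising from the continuous and compensated-jump parts of $\bfC$, is by independence of $\bfC$ and $\bfZ$ strongly orthogonal to $\whbfZ$, and hence to the $\P^M$-martingale part of $\widetilde{\bfS}$; it plays the role of $A^{(i)}$ in \eqref{eq:FS_phi_i}.

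The main obstacle I anticipate is the rigorous verification of strong orthogonality and square-integrability of the residual martingale $A^{(i)}$: this requires controlling both the continuous Brownian contribution of $\bfC$ and the compensated jump integral of $F^{(i)}(s,X_s^{\bmpi^\star},\bfC_{s-}+\bm z)-F^{(i)}(s,X_s^{\bmpi^\star},\bfC_{s-})$, together with the integrability condition \eqref{eq:int_on_epsilon} and the compactness of the jump support $\mathcal{Z}$. A secondary technical point is to ensure that $F^{(i)}$ is smooth enough for Itô's formula to apply; this is again handled case-by-case by the regularity results of Lemmas \ref{cor:OU_P}--\ref{cor:MARKOV_CHAIN_P}. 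Once these checks are in place, Proposition \ref{prop:FS} closes the argument and delivers the stated characterization of $(\bm\beta^{(i)},\zeta^{(i)})$.
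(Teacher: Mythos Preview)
Your proposal is correct and follows essentially the same route as the paper: use the Markovianity of $(X^{\bmpi^\star},\bfC)$ under $\Q^M$ to write the value process as $F^{(i)}(t,X_t^{\bmpi^\star},\bfC_t)$, apply It\^o's formula, read off the backward PDE from the vanishing drift, and identify $\bm\beta^{(i)}$ from the $\whbfZ$-integral while the $\bfC$-driven martingale part furnishes $A^{(i)}$. The only cosmetic difference is that the paper first performs the GKW decomposition with respect to the discounted fund $Y^{\bmpi^\star}=e^{-r\cdot}X^{\bmpi^\star}$ to obtain a scalar integrand $\widetilde\beta^{(i)}=e^{rt}\partial_xF^{(i)}$ and then converts to $\widetilde{\bfS}$ via $\bm\beta^{(i)}_t=\widetilde\beta^{(i)}_tY_t^{\bmpi^\star}\bmpi_t^{\star,\top}\mathrm{diag}(\widetilde{\bfS}_t)^{-1}$, whereas you match the $\bfSigma\,\de\whbfZ$ term directly to $\de\widetilde{\bfS}$; both yield the same formula.
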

\begin{proof}
See Section \ref{Prop:proof_characterization_beta_B_PE} in Appendix.
\end{proof}

\subsection{Hedging of a Term Insurance contract}\label{sec:term}

Our second example is the term insurance contract, which is a mortality benefit that pays the agreed amount to the beneficiary at the time of insured death, provided that it occurs before maturity $T$. Its discounted payoff is given by
\begin{equation}\label{eq:TI_Port_Payoff}
G^{TI}= e^{-r \tau}\psi(\tau, X^{\bmpi^{\star}}_{\tau})\mathbf 1_{\{\tau \le T\}} = \int_0^T{ e^{-rs}}\psi(s, X^{{\bmpi}^\star}_s) \de H_s,
\end{equation}
for some function 
$\psi:[0, T]\times \mathbb{R}^+ \to \mathbb{R}^+ $ such that 
$\psi(t, X_t^{\bmpi^{\star}}) \in L^2(\mathcal F_t,\P^M)$, for each $t \in [0,T]$. Hence, the contract has  payoff $\psi(\tau, X^{\bmpi^{\star}}_{\tau})\mathbf 1_{\{\tau \le T\}} \in L^2(\mathcal G_{\tau}, \P)$. 

\begin{prop}\label{prop:TI}
Let for every $0\le t\le u\le T$ 
\begin{align}
B(u,t) &:= (1-H_t) \gamma(u) e^{-\int_t^u \gamma(s) \de s}. \label{eq:Bu_expl}
\end{align}
Assume that $e^{-ru} \psi(u, X^{\bmpi^{\star}}_{u})$  admits the F\"ollmer–Schweizer decomposition and let $\bm{\beta}(u)$ be the integrand in such decomposition (see equation $(D.3)$ in Section \ref{app:C_TI} in Appendix. The pseudo-optimal strategy $(\bm{\eta}^{\star}, \zeta^{\star})$ is given by:
\begin{align}
\bm{\eta}^{\star}_t &= \int_t^T  B_{t-}(u)\bm{\beta}_{t}(u)\,\de u, \quad  t \in [0, T], \\
\zeta^{\star}_t &= V_t(\bm{\eta}^{\star}, \zeta^{\star}) - \left( \int_t^T B_{t-}(u)\bm{\beta}_t(u)\,du \right)^\top \widetilde{\bfS}_t, \quad t \in [0, T],
\end{align}
and the optimal (discounted) value process $V(\bm{\eta}^{\star}, \zeta^{\star})$ satisfies:
\begin{align}
V_t(\bm{\eta}^{\star}, \zeta^{\star}) = G^{TI}_0 + \int_0^t \bm{\eta}^{\star,\top}_r\,\de \widetilde{\bfS}_r + O_t^{TI},\quad t\in[0,T], 
\end{align}
where $G^{TI}_0=\mathbb{E}^{\mathbb{Q}^*}[G^{TI}]$ and $\{O^{TI}_t\}_{t \in [0,T]}$ is a $(\mathbb{G}, \mathbb{Q}^*)$ martingale whose explicit expression is given in equation \eqref{eq:K1} of Section \ref{app:C_TI} in Appendix.
\end{prop}
The proof of this proposition is provided in Section \ref{app:C_TI} in Appendix.

\section{Numerical analysis of the hedging results}\label{sec:hedging}
In this section, we present an illustrative simulation study aimed at assessing the performance of the risk minimizing hedging strategy for green unit linked insurance contracts. In particular, we consider the three types of policies introduced in Section \ref{sect:locally_risk_minimizing} and begin by specifying the payoff functions that define their benefits. For the pure endowment contract, we assume that, conditional on survival up to time $T$, the policyholder receives at maturity a benefit equal to $\phi(X_T^{\bmpi^\star})$, where $\phi$ is defined by
\begin{equation}\label{eq:PE_PAYOFF}
\phi(x):=\min(\bar{k},\,\max(k,x)),
\end{equation}
with $k\geq 0$ denoting the guaranteed minimum amount and $\bar{k}\geq 0$ the maximum benefit. For the term insurance contract, we assume that, if death occurs before maturity at time $\tau\in[0,T]$, the insurance company pays at time $\tau$ the benefit $\psi(\tau,X_\tau^{\bmpi^\star})$, where the function $\psi$ is given by
\begin{equation}\label{eq:TI_PAYOFF}
\psi(t,x):=\min\bigl(\bar{k}(t),\max(k(t),x)\bigr), \quad (t,x)\in[0,T]\times\mathbb R_+.
\end{equation}
Here, $k(\cdot):[0,T]\to[0,+\infty)$ and $\bar{k}(\cdot):[0,T]\to[0,+\infty)$ are non-negative functions satisfying $k(t)\less \bar{k}(t)$ for every $t\in[0,T]$, and represent, respectively, the guaranteed minimum amount and the maximum possible payment. Throughout this section, we assume that $k(t)=xe^{rt}$ and $\bar{k}(t)=xe^{10rt}$ for every $t\in[0,T]$, where $x$ denotes the initial value of the underlying investment fund, normalized to one, that is, $x=1$, and the risk free interest rate is set equal to $r=0.05$.\\
The endowment insurance contract combines the previous two benefit structures. More precisely, if death occurs before maturity, at time $\tau\in[0,T)$, we assume that the policy pays at time $\tau$ the benefit $\tilde\psi(\tau,\,X_\tau^{\bmpi^\star})$ in equation \eqref{eq:disc_TI_PAYOFF}, which is a discounted version of the payoff defined in equation \eqref{eq:TI_PAYOFF}, namely
\begin{equation}\label{eq:disc_TI_PAYOFF}
\tilde\psi(t,x):=\varrho e^{-rt}\psi(t,x),\quad \varrho\in(0,1].
\end{equation}
If instead the policyholder survives until maturity, the survival benefit paid at time $T$ is given by $\tilde\phi(X_T^{\bmpi^\star})$ in equation \eqref{eq:disc_PE_PAYOFF}, which is a discounted version of the payoff in equation \eqref{eq:PE_PAYOFF}, namely
\begin{equation}\label{eq:disc_PE_PAYOFF}
\tilde\phi(x):=e^{-rT}\phi(x).
\end{equation}
Moreover, we assume that the sustainable investment fund underlying the policies consists of four stocks, whose carbon intensity processes follow the CIR dynamics introduced in equation \eqref{eq:CIR_PROCESS}. For simplicity, in this simulation exercise we assume a constant long run mean, that is, $\bar C(t)=\bar C\geq 0$ for every $t\in[0,T]$. The parameters of the four stocks and of their corresponding carbon intensity processes are reported in Table \ref{Market_parameters}.\\
We assume that the mortality intensity of the representative policyholder follows the Gompertz-Makeham law, namely
\begin{equation}\label{GM_intensity}
\gamma_t = \xi +\frac 1b \exp\left({\frac{\iota+t+m}{b}}\right),
\end{equation}
where $\iota\g 0$ is the initial age of the policyholder, $\xi\geq 0$ captures the age-independent component of mortality, while $b>0$ and $m\in\mathbb{R}$ determine, respectively, the shape and the level of the age-dependent component of mortality. The mortality parameters are reported in Table \ref{tab:mortality_intensity}. The remaining parameters are set as in Section \ref{sect:dynamic}: $\delta=1$ and $\alpha=0.0025$.
\begin{table}[H]
\centering
\begin{tabular}{cccc} 
\hline
$\xi$        & $b$          & $m$          & $\iota$\\ 
\hline
$0.0041959$  & $11.5818911$ & $79.6921211$ & $60$   \\
\hline
\end{tabular}
\caption{Parameters of the Gompertz-Makeham mortality intensity as in equation \eqref{GM_intensity} for the representative policyholder.}\label{tab:mortality_intensity}
\end{table}
To compute the hedging strategies for the three types of insurance contracts, we rely on the following numerical procedure. First, we simulate the carbon-intensity dynamics under the CIR specification by means of a multidimensional Ninomiya-Victoir weak second-order scheme, similarly to  \cite{alfonsi2024high} and \cite{AL2025SIFIN}). We then compute the optimal portfolio weights $\bmpi^\star$ of the underlying sustainable investment fund and, conditional on their realizations, simulate the fund dynamics; this allows us to avoid simulating all stock paths directly. Moreover, conditional on the realized path of $\bmpi^\star$, the logarithm of the discretized fund is Gaussian, with integrated variance approximated by trapezoidal rule. This conditionally Gaussian structure allows us to apply variance-reduction techniques in the pricing and hedging step, resulting in a substantial computational gain (see Table \ref{Table_VarReduc} in Appendix). Full details of the numerical procedure are reported in Section \ref{sec:numerics}.\\
\begin{figure}[h!]
\centering
\begin{subfigure}[h]{0.49\textwidth}
\centering
\includegraphics[width=\textwidth]{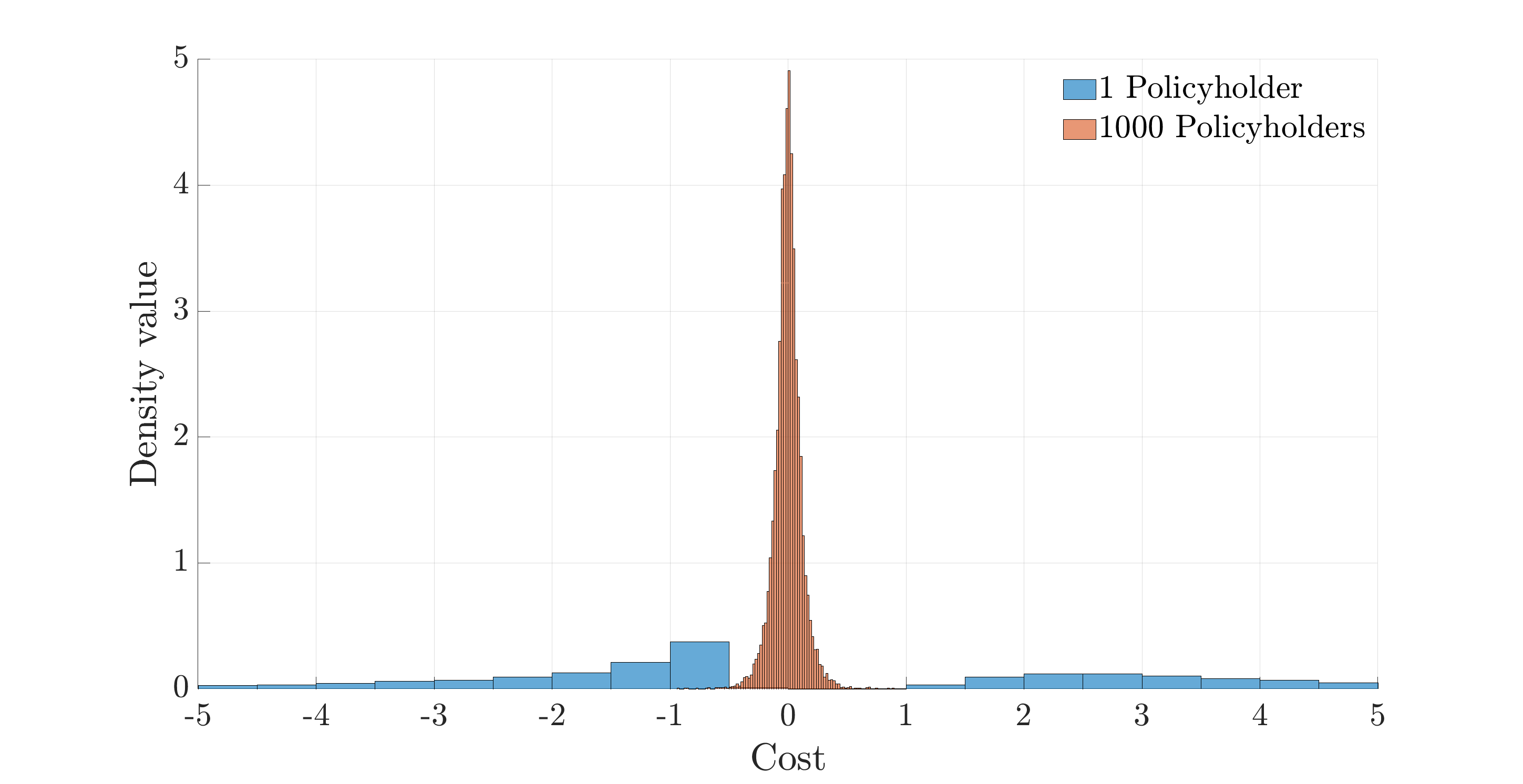}
\caption{Pure Endowment}
\label{fig:PE1vs1000}
\end{subfigure}
\hfill
\begin{subfigure}[h]{0.49\textwidth}
\centering
\includegraphics[width=\textwidth]{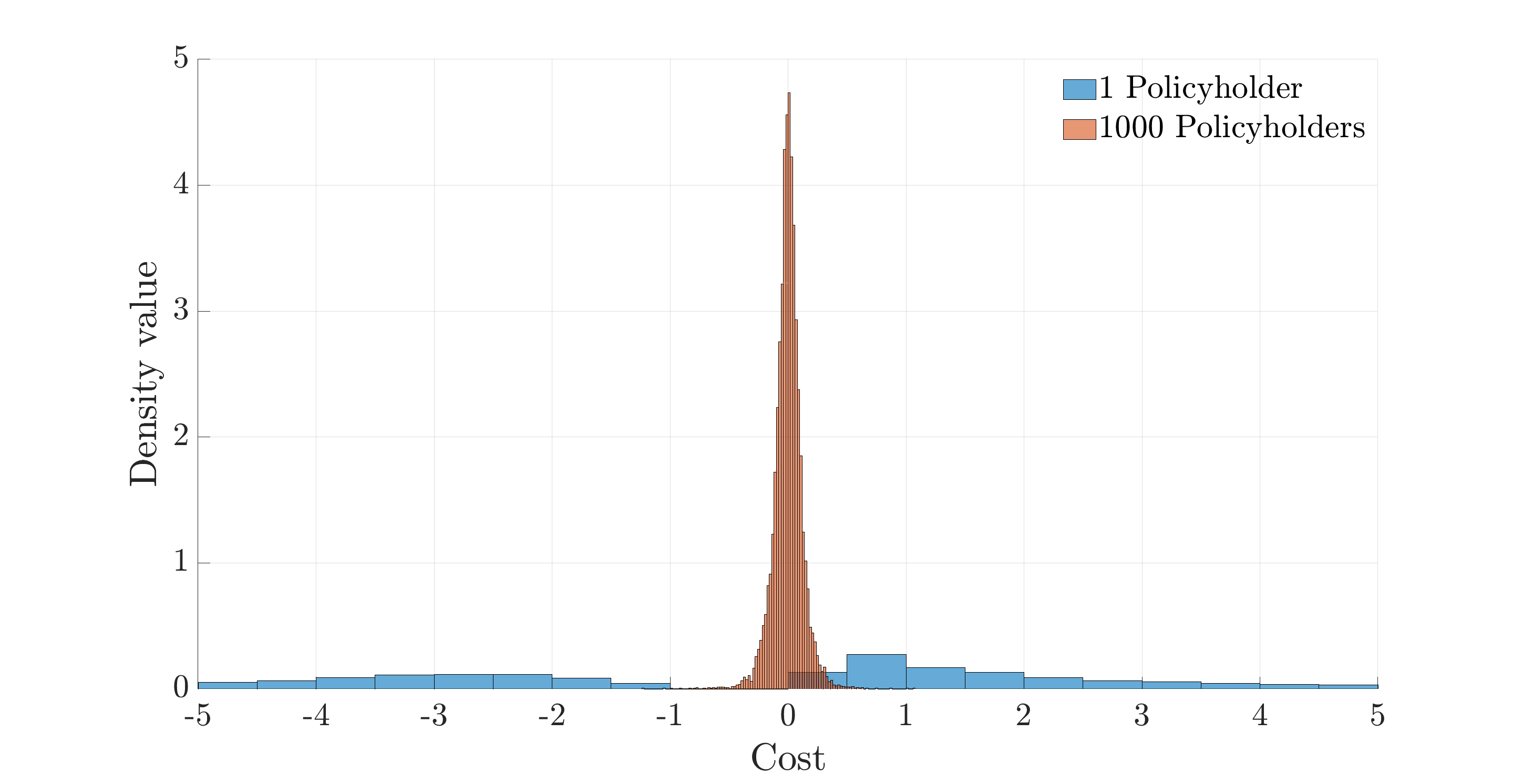}
\caption{Term Insurance} 
\label{fig:TI1vs1000}
\end{subfigure}
\hfill
\begin{subfigure}[h]{0.49\textwidth}
\centering
\includegraphics[width=\textwidth]{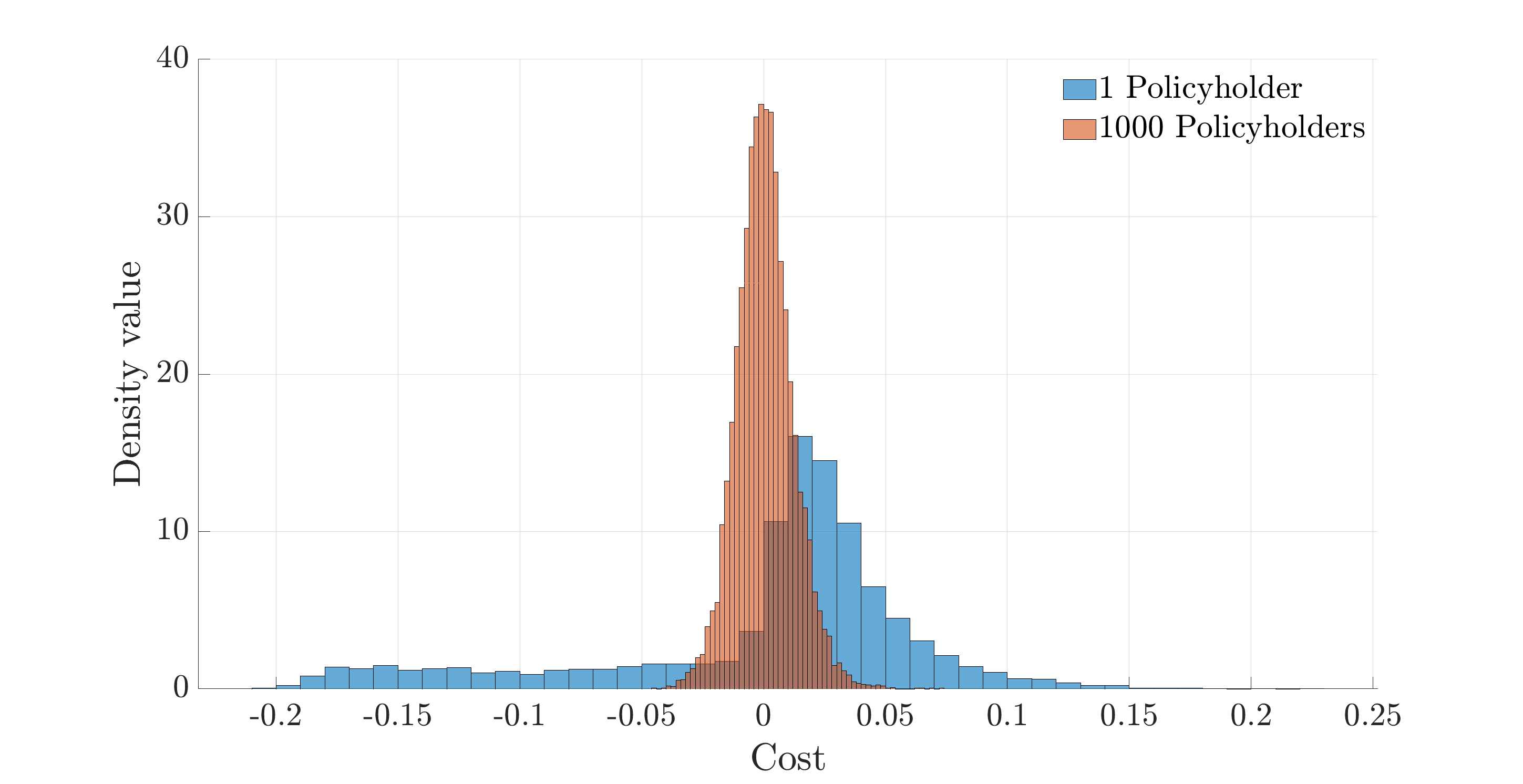}
\caption{Endowment Insurance} 
\label{fig:EI1vs1000}
\end{subfigure}
\caption{Cost of the continuous hedging: 1 Policyholder Vs Portfolio of 1000 Policyholders.}\label{fig:hedge_1vs1000} 
\end{figure}

The insurance payoffs that we aim to hedge are subject to three sources of risk, namely market risk, carbon risk and mortality risk. The primary risk that can be hedged via risk minimization using traditional financial instruments is market risk.
It is clear from the computations of the error-minimizing strategy that both carbon and mortality risk cannot be perfectly hedged, due to the lack of financial instruments that can be used for the scope, hence they enter in the strategy costs. From a mathematical perspective these two sources of risk contribute the orthogonal martingale.    
While unhedgeable in the conventional sense, carbon and mortality risk can be efficiently reduced through structural and strategic design choices. Specifically, carbon risk is mitigated ex-ante through the construction of the ad-hoc investment fund, by selecting assets that exhibit low carbon intensity. Mortality risk, on the other hand, can diversified when considering large portfolios of insurance policies, in view of the law of large numbers.
In that regard, Figure \ref{fig:hedge_1vs1000} illustrates the distribution of the hedging cost of continuously rebalanced hedging strategies for three types of unit linked contracts (pure endowment, term insurance, and endowment insurance) with a maturity $T=20$ years and the initial fund value $x=1$. Each panel corresponds to a different contract type and displays the distribution for a single policy and the distribution for a portfolio of $1000$ policies. In both cases, costs are small on average (for the pure endowment contract, single policyholder, the average loss is $-0.0171$ and for the case of $1000$ policyholders is $-0.0016$0). However, these plots confirm that the hedging is more efficient when considering large portfolios of insurance policies, for which we observe considerably less dispersion. For example, the standard deviation of the pure endowment and single policyholder is $3.8281$, whereas for the case of 1000 contracts is $0.1184$. All the results of the current section are obtained on 10000 simulations.\\

Figure \ref{fig:3_strategies} shows the distribution of the hedging cost for a homogeneous portfolio of $1000$ life insurance contract, with policyholder's initial age of 60 years. We analyse pure endowment on the upper left panel, term insurance on the upper right panel, and endowment insurance on the lower panel. Our goal is to compare three different strategies: the continuous hedging strategy via risk minimization, the static strategy, and no-hedging. Specifically, we refer to the static hedging case when the replicating portfolio is constructed at time $t=0$, and the optimal portfolio weights are not updated until maturity. No hedging corresponds to the case where the insurance company collects the policy premiums at time $t=0$ and deposits the entire amount in the bank account, using it to pay the policyholder benefits at maturity. 
We observe that no hedging is by far the most costly approach, with an average loss of $5.842$ and standard deviation of $2.84$ for the pure endowment case (note that the plot of the density function is truncated, and for example the $90\%$ quantile is $q_{90}=9.362$). The static hedging shows both profits and losses, although on average we observe losses of $2.194$ . This implies that static hedging is an improvement over the unhedged payoffs, but variability is still quite high, equal to $1.807$.
The dynamic risk minimizing strategy on the
other hand significantly reduces both the average loss, which is equal to $-0.0016$, and the variability, with a standard deviation $0.1184$.   
The behaviour of the strategies for the case of term insurance and endowment insurance is similar and the numbers are reported in Table \ref{Tab:averages}. 
\begin{table}[H]
\centering
\begin{tabular}{cccc} 
\Xhline{1.2pt}
             & \multicolumn{3}{c}{Pure Endowment}               \\ 
\hline
             & No Hedging & Static Hedging & Continuous Hedging  \\ 
\hline
Mean & $5.842$    &  $2.186$       & $-0.0016$                   \\
St. Dev.     & $2.84$     &  $1.807$       & $0.1184$                   \\ 
\Xhline{1.2pt}
             & \multicolumn{3}{c}{Term Insurance}               \\ 
\hline
             & No Hedging & Static Hedging & Continuous Hedging  \\ 
\hline
Mean & $1.863$   & $0.503$         & $-0.0004$\\
St. Dev.     & $0.184$   & $0.474$         & $0.121$ \\ 
\Xhline{1.2pt}
             & \multicolumn{3}{c}{Endowment Insurance}          \\ 
\hline
             & No Hedging & Static Hedging & Continuous Hedging  \\ 
\hline
Mean & $7.721$    & $2.708$        & $0.0006$                    \\
St. Dev.     & $2.846$    & $1.727$        & $0.012$ \\
\Xhline{1.2pt}
\end{tabular}
\caption{Mean and standard deviation of the distribution of the hedging cost for a portfolio of 1000 pure endowment policies (top panel), Term Insurance policies (middle panel), and Endowment Insurance policies (bottom panel) under no hedging, static hedging, and continuous hedging.}\label{Tab:averages}
\end{table}

\begin{figure}[h!]
\centering
\begin{subfigure}[h]{0.49\textwidth}
\centering
\includegraphics[width=\textwidth]{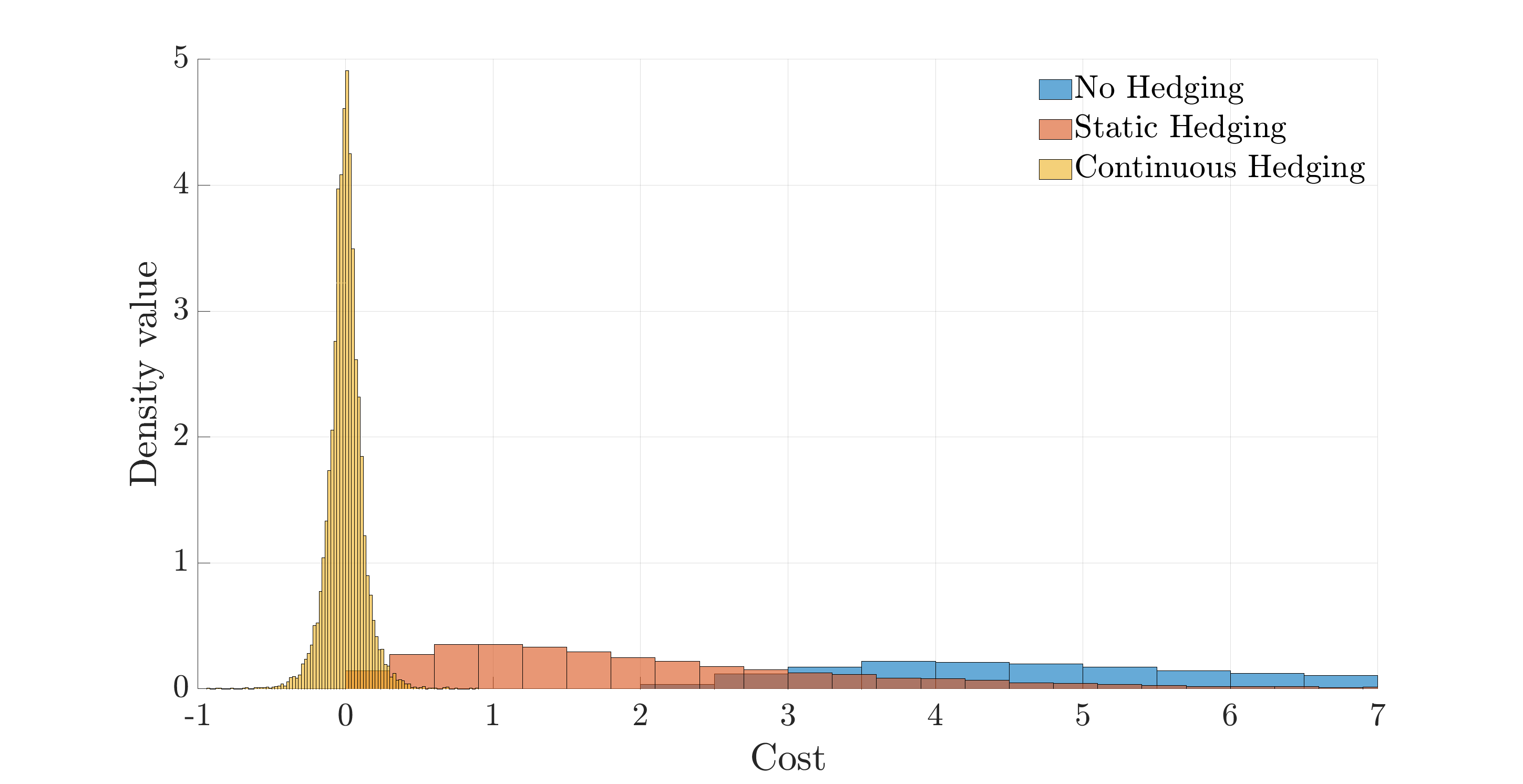}
\caption{Pure Endowment}
\label{fig:PE_3_strategies}
\end{subfigure}
\hfill
\begin{subfigure}[h]{0.49\textwidth}
\centering
\includegraphics[width=\textwidth]{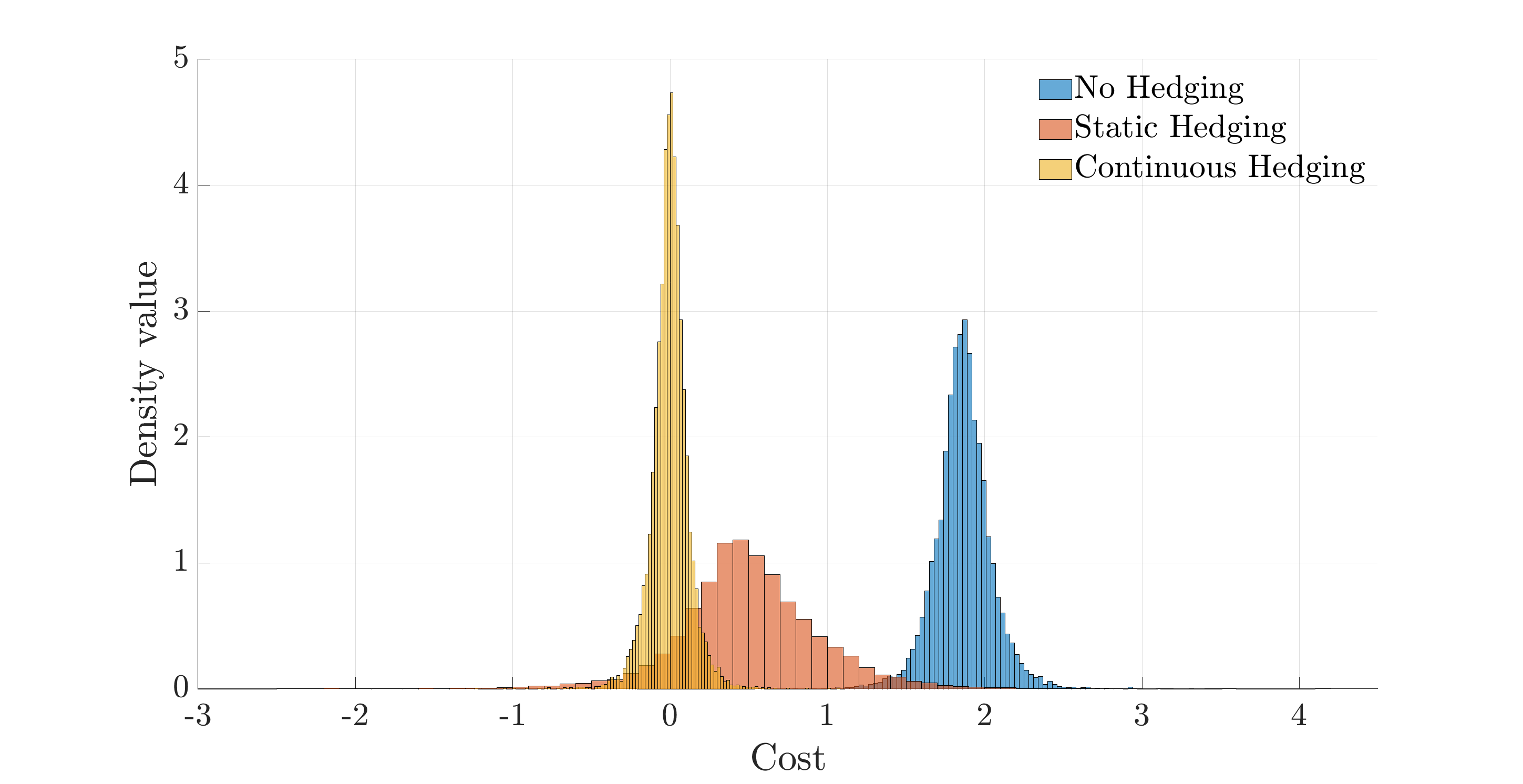}
\caption{Term Insurance} 
\label{fig:TI_3_strategies}
\end{subfigure}
\hfill
\begin{subfigure}[h]{0.49\textwidth}
\centering
\includegraphics[width=\textwidth]{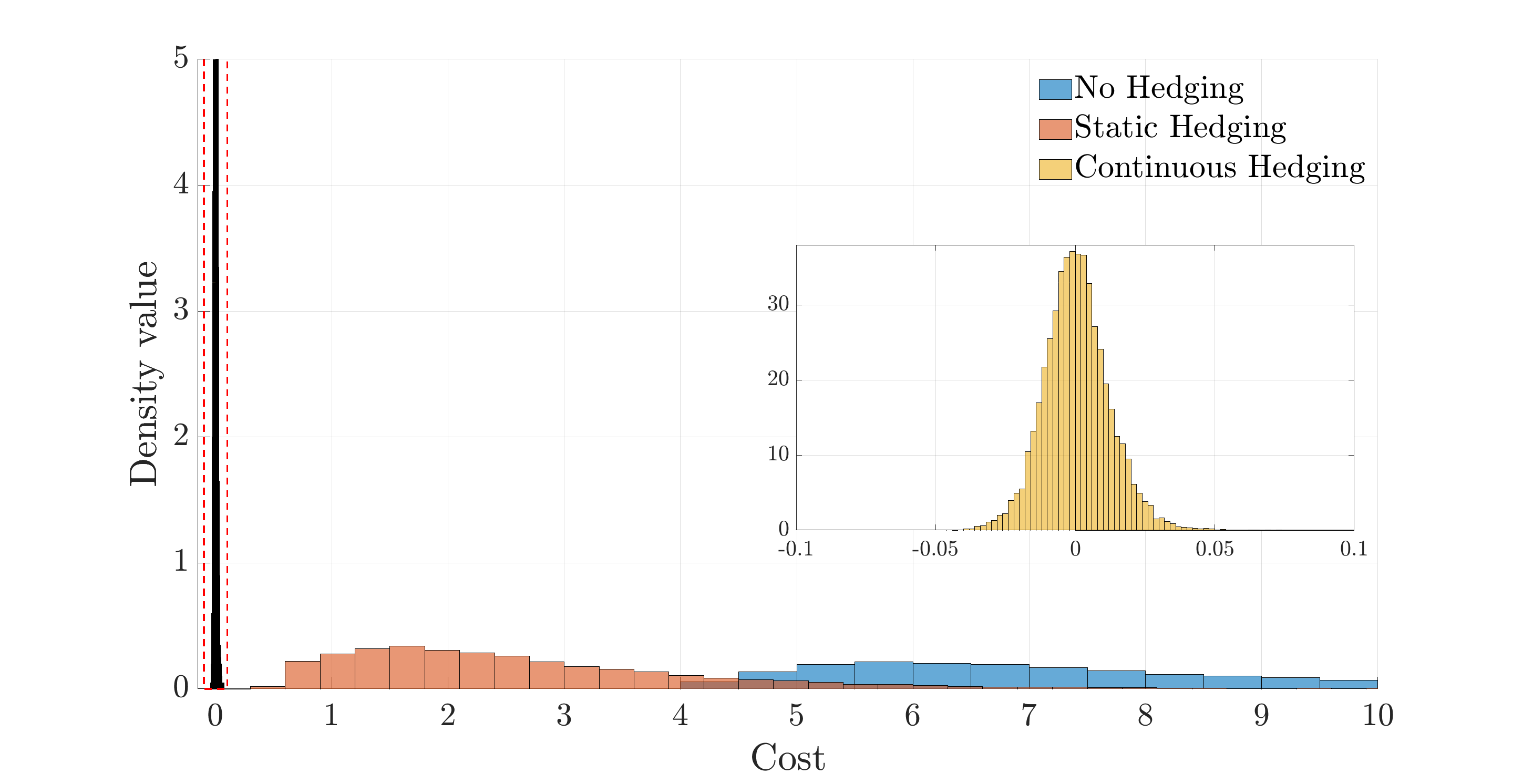}
\caption{Endowment Insurance} 
\label{fig:EI_3_strategies}
\end{subfigure}
\caption{Cost of different hedging strategies}\label{fig:3_strategies}
\end{figure}

We note that the analyses presented in Figures \ref{fig:hedge_1vs1000} and \ref{fig:3_strategies} are performed on policies written on homogeneous policyholders, all aged $\iota=60$ at time $t=0$. In contrast, Figure \ref{fig:HOMOvsHETERO} shows the distribution of the hedging cost for a portfolio of policies written on heterogeneous policyholders, whose initial ages range from $55$ to $65$ years. This picture shows similar outcomes for the two portfolios, revealing that the approach is robust across ages.  

\begin{figure}[h!]
\centering
\begin{subfigure}[h]{0.49\textwidth}
\centering
\includegraphics[width=\textwidth]{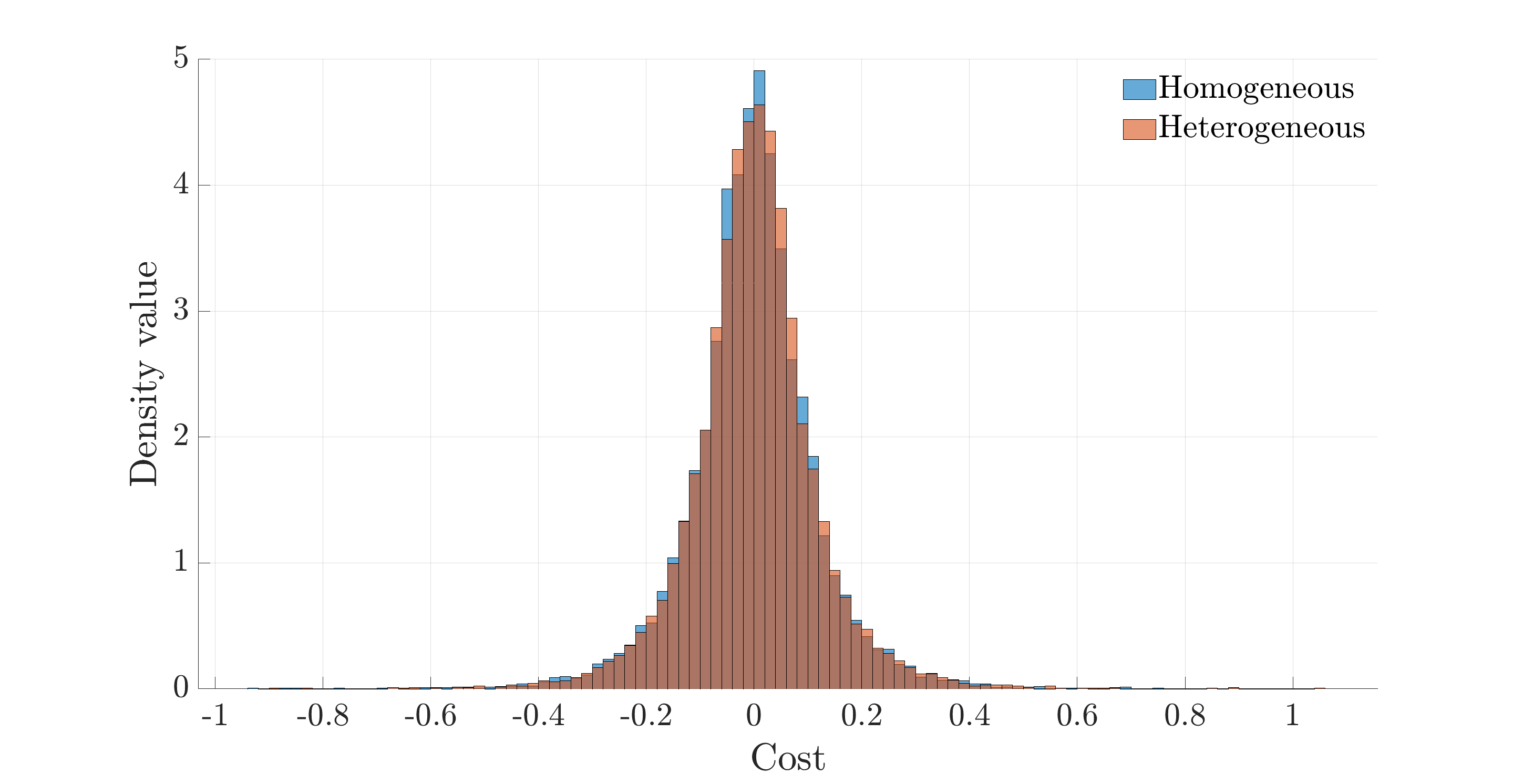}
\caption{Pure Endowment}
\label{fig:PE_HOMOvsHETERO}
\end{subfigure}
\hfill
\begin{subfigure}[h]{0.49\textwidth}
\centering
\includegraphics[width=\textwidth]{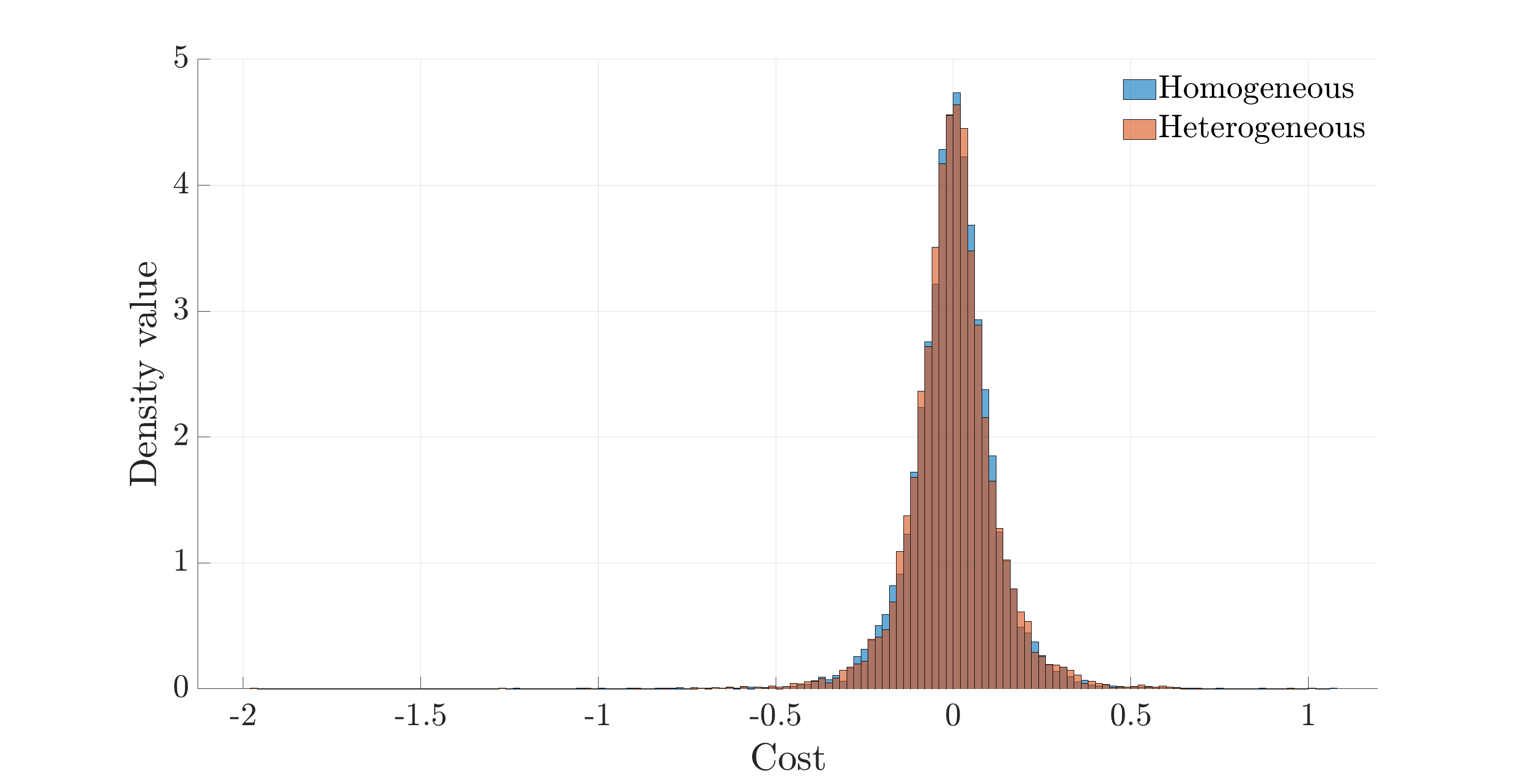}
\caption{Term Insurance} 
\label{fig:TI_HOMOvsHETERO}
\end{subfigure}
\hfill
\begin{subfigure}[h]{0.49\textwidth}
\centering
\includegraphics[width=\textwidth]{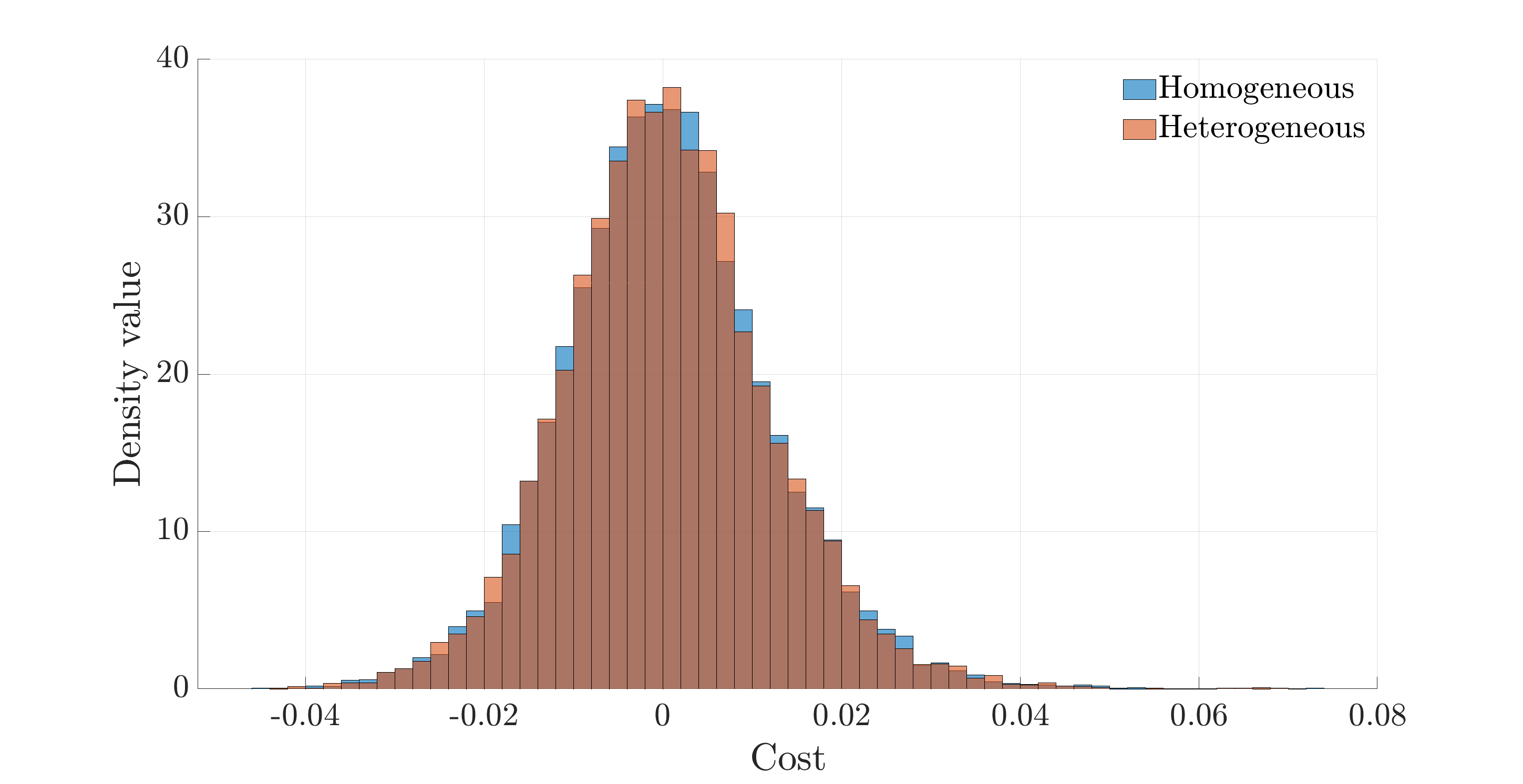}
\caption{Endowment Insurance} 
\label{fig:EI_HOMOvsHETERO}
\end{subfigure}
\caption{Cost of the continuous hedging: portfolio of 1000 policyholders of the same age of $60$ years (Homogeneous portfolio) Vs portfolio of 1000 policyholders with ages between $55$ and $65$ years (Heterogeneous portfolio).}\label{fig:HOMOvsHETERO}
\end{figure}

\section{Concluding remarks}\label{sect:conclusions}

In this paper, we develop a unified actuarial framework for the design and risk management of sustainable unit-linked life insurance products. The proposed setting jointly accounts for the construction of a carbon-efficient investment fund and the hedging of insurance liabilities written on such a fund. As a consequence, the insurer is exposed to multiple sources of risk, namely financial risk, mortality risk, and carbon risk, the latter acting as an additional non-hedgeable factor that contributes to market incompleteness.
On the asset side, we introduce a portfolio selection methodology in which carbon considerations are incorporated endogenously through a penalisation of terminal wealth. This approach allows for a flexible trade-off between financial performance and environmental impact, avoiding rigid exclusion rules and yielding portfolios that balance risk–return characteristics with carbon exposure. The associated stochastic control problem is solved via dynamic programming techniques under alternative specifications for the carbon-intensity process, including Ornstein–Uhlenbeck, CIR, and finite-state Markov dynamics. From a theoretical perspective, we establish a verification result and provide conditions ensuring the well-posedness of the value function. Empirical evidence based on S\&P 500 data indicates that even moderate levels of penalisation lead to a substantial reduction in portfolio carbon intensity.
On the liability side, we consider several standard unit-linked life insurance contracts and derive risk-minimising hedging strategies in an incomplete market framework. The hedging problem is addressed via quadratic criteria, leading to strategies that minimise the variance of the hedging costs. Numerical results show that the proposed dynamic hedging approach achieves a significant loss reduction compared to alternative methods. From a computational standpoint, the combination of a weak second-order discretisation scheme with conditioning-based variance reduction allows for an accurate and efficient implementation of the hedging strategy.
Overall, the framework highlights the strong interaction between asset allocation and liability hedging in the presence of sustainability constraints. 
Several extensions may be considered. In particular, introducing dependence between financial and insurance risks, e.g., through common stochastic factors as in \cite{ceci2022optimal}, would allow one to capture systemic shocks such as pandemics or climate-related catastrophes, and to assess their joint impact on asset returns and mortality risk. This would lead to more involved optimisation and hedging problems, that we leave for future research.

\section*{Acknowledgements and fundings}
The work of Katia Colaneri and Edoardo Lombardo has been partially funded by the European Union - Next Generation EU - Project PRIN 2022 [2022BEMMLZ - CUP E53D23005660006] with the title {\em Stochastic Control and Games and the Role of Information}. The work of Alessandra Cretarola has been partially funded by the European Union-Next Generation EU - Project PRIN [2022FPLY97 - CUP J53D23004540006] with the title {\em Modeling and Valuation of Financial Instruments for Climate and Energy Risk Mitigation}. This work has
been completed while Daniele Mancinelli was affiliated with University of Rome Tor Vergata and has been funded by European Union - Next Generation EU, Mission 4, Component 2 as part of the GRINS project - Growing Resilient, INclusive and Sustainable (PE0000018, CUP: E83C22004690001) - National Recovery and Resilience Plan (PNRR). The views and opinions expressed are solely those of the authors and do not necessarily reflect those of the European Union, nor can the European Union be held responsible for them. Katia Colaneri and Alessandra Cretarola are members of
Gruppo Nazionale per l’Analisi Matematica, la Probabilità e le loro Applicazioni (GNAMPA) of Istituto
Nazionale di Alta Matematica (INdAM).

\section*{Declaration of generative AI in scientific writing}
During the preparation of this work the authors used \textit{Writeful GPT} in the writing process in order to improve the readability and language of the manuscript. After using this tool, the authors reviewed and edited the content as needed and take full responsibility for the content of the published article.
\section*{Conflict of interest}
The authors declare no competing interests.
\bibliographystyle{apalike}
\bibliography{Biblio_1_corrected}

\newpage
\section*{Appendix}
\appendix
\numberwithin{equation}{section}
\renewcommand{\theequation}{\thesection\arabic{equation}}

\section{Introduction to Appendix}
This appendix complements the paper by providing technical details, proofs of the theoretical results, and a description of the numerical procedures used in the simulation study. The document is organised as follows. Section  \ref{sec:proofs} collects the proofs related to the stochastic control problem and the optimal investment strategy. Section \ref{sec:proofHJB} provides technical details on the existence and uniqueness of the solution of the Hamilton-Jacobi-Bellman equation when the carbon intensity process is modeled as a CIR process. Section \ref{sec:proofHedging} contains the derivation of the quadratic hedging strategies for pure endowment and term insurance contracts. Section \ref{sec:numerics} details the numerical procedures used to implement the model and perform the simulation study.

\section{Proofs of some technical results of Section \ref{sec:optimal_inv}}\label{sec:proofs}

\subsection{Proof of Theorem \ref{thm:ver_thm}}\label{sect:proof_of_ver_thm}
To begin we let discontinuous martingale of $\mathbf{C}$ be given by   
\begin{equation}
\de \bm{M}^{disc}_t = \int_{\mathcal Z} \bm{z} (m(\de t, \de \bm{z}) - \nu_t(\de \bm{z}) \de t),
\end{equation}
for some set $\mathcal Z \subset \mathbb{R}^d$, where $m([0,t)\times E)= \sum_{s \le t}\ind[\bfC_{s}-\bfC_{s-} \in E]$ counts the jumps of $\bfC$ in $[0,t]$ with sizes in the set $E \subset \mathcal{Z}$ and $\nu_t(\de \bm{z})$ is the compensator of $m(\de t, \de \bm{z})$. Recall that, to avoid excessive technicalities, we assume that $\mathcal{Z}$ is a compact subset of $\mathbb{R}^d$, that is $\nu_t(\mathcal{Z})<\bar{\nu}$ for some $\bar{\nu}<+\infty$, and for every $t \in [0,T]$. Consider the function $w(t,x, \bfc)$. Since this function is regular we apply It\^o formula  
\begin{align}
w(T, \tilde{X}^{\bmpi}_{T},\bfC_{T}) = & w(t, x,\bfc)+\int_t^{T} \left\{\partial_t w(s,\tilde{X}_{s}^{\bmpi},\bfC_{s})+\mcL w(s,\tilde{X}_{s}^{\bmpi},\bfC_{s}) \right\}\de s \\
&+ \int_t^{T} \partial_x w(s,\tilde X^{\bmpi}_s,\bfC_s)\tilde{X}^{\bmpi}_s\bmpi_s^\top\bfSigma\de\bfZ_s+\sum_{i=1}^{d}\int_t^{T}\partial_{c_i}w(s,\tilde{X}^{\bmpi}_s,\bfC_s)\de M^{i,cont}_s \\ 
& + \int_t^T \int_{\mathcal Z}\Delta_z w(s, \tilde{X}^{\bmpi}_s,\bfC_s)  (m(\de s, \de z) - \nu_s(\de z) \de s),
\end{align}
with $\Delta_z w(s, \tilde{X}^{\bmpi}_s,\bfC_s)=w(s, \tilde{X}^{\bmpi}_s,\bfC_{s-}+z) - w(s, \tilde{X}^{\bmpi}_s,\bfC_{s-})$, and where we intend $X^{\bmpi}_s, \bfC_s$ as the processes at time $s$, starting at time $t$ at values $x, \bfc$. 
We define the increasing sequence of stopping times as follows. 
We fix $t \ge 0$ and let, for every $u>t$
\begin{align}
I^1_u&:=\int_t^u  (\partial_x w(s,\tilde X^{\bmpi}_s,\bfC_s)\tilde{X}^{\bmpi}_s)^2\bmpi_s^\top\bfSigma \bfSigma^\top \bmpi_s \de s,\\
I^2_u&:= \sum_{i=1}^{d}  \int_t^u  \left(\partial_{c_i}w(s,\tilde{X}^{\bmpi}_s,\bfC_s)\right)^2\de \langle \bm{M}^{i,cont}\rangle_s,\\
I^3_u&:=\int_t^u \left(\sup_{\mathcal Z}\Delta_z w(s, \tilde{X}^{\bmpi}_s,\bfC_s)\right)^2   \nu_s(\mathcal Z) \de s;
\end{align}
then we define
\begin{equation}
\tau_n = \inf\left\{ u\ge t : I^1_u +  I^2_u + I^3_u\ge n\right\}.
\end{equation}
Then, we get that
\begin{align}
&w(T\wedge{\tau_n}, \tilde{X}^{\bmpi}_{T\wedge{\tau_n}},\bfC_{T\wedge{\tau_n}}) =w(t, x,\bfc)+\int_t^{T\wedge{\tau_n}} \left\{\partial_t w(s,\tilde{X}_{s}^{\bmpi},\bfC_{s})+\mcL w(s,\tilde{X}_{s}^{\bmpi},\bfC_{s}) \right\}\de s \\
&+ \int_t^{T\wedge{\tau_n}} \partial_x w(s,\tilde{X}^{\bmpi}_s,\bfC_s)\tilde{X}^{\bmpi}_s\bmpi_s^\top\bfSigma\de\bfZ_s+\sum_{i=1}^{d}\int_t^{T\wedge{\tau_n}}\partial_{c_i}w(s,\tilde{X}^{\bmpi}_s,\bfC_s)\de M^{i,cont}_s \\
& + \int_t^{T\wedge{\tau_n}} \int_{\mathcal Z}\Delta_z w(s, \tilde{X}^{\bmpi}_s,\bfC_s)  (m(\de s, \de z) - \nu_s(\de z) \de s).
\end{align}
Since $w$ satisfies the equation \eqref{eq:HJB}, we get that
\begin{align}
&w(T\wedge{\tau_n}, \tilde{X}^{\bmpi}_{T\wedge{\tau_n}},\bfC_{T\wedge{\tau_n}}) \le w(t, x,\bfc) \\
&+ \int_t^{T\wedge{\tau_n}} \partial_x w(s,\tilde X^{\bmpi}_s,\bfC_s)\tilde{X}^{\bmpi}_s\bmpi_s^\top\bfSigma\de\bfZ_s+\sum_{i=1}^{d}\int_t^{T\wedge{\tau_n}}\partial_{c_i}w(s,\tilde{X}^{\bmpi}_s,\bfC_s)\de M^{i,cont}_s \\ &+ \int_t^{T\wedge{\tau_n}} \int_{\mathcal Z}\Delta_z w(s, \tilde{X}^{\bmpi}_s,\bfC_s)  (m(\de s, \de z) - \nu_s(\de z) \de s).
\end{align}
By the admissibility condition \eqref{eq:adm_cond} on $\bmpi$ and the growth assumption on $w$, all stochastic integrals are true martingales up to the stopping time $\tau_n\wedge T$, with vanishing expected value.\\
Taking the expectation on both sides of the inequality and using the terminal condition of the Hamilton Jacobi Bellman equation,  we get that
\begin{equation}
w(t, x, \bfc) \ge \E^{\P^M}[w(T\wedge{\tau_n}, \tilde{X}^{\bmpi}_{T\wedge{\tau_n}},\bfC_{T\wedge{\tau_n}})].
\end{equation}
By the polynomial growth condition on $w$ and standard localization arguments, we can pass to the limit as $n \to \infty$, and using Lebesgue dominated convergence theorem we get that 
\begin{equation} w(t, x,\bfc) \ge \E^{\P^M}[U(\tilde{X}^{\bmpi}_{T})], 
\end{equation}
hence $w(t, x,\bfc)\ge v(t, x,\bfc)$. This shows point $(i)$ of the statement. 
Moreover, the equality holds for the maximiser $\bmpi^\star$, which concludes the proof. 
\subsection{Proof of Theorem \ref{thm:general_thm}}\label{sect:proof_general_thm}
Suppose that a classical solution $w$ of the Hamilton Jacobi Bellman equation \eqref{eq:HJB} can be written as
\begin{equation}\label{eq:ANSATZ}
w(t,x,\bfc)=
\begin{cases}
\dfrac{x^{1-\delta}}{1-\delta}\varphi(t,\bfc), & \delta\in(0,1)\cup(1,+\infty),\\
\log(x)+\varphi(t,\bfc), & \delta=1,
\end{cases}
\end{equation}
where $\varphi$ does not depend on $x$ and it is a positive function.  We let $\Phi^{\bmpi}(t,\bfc)$ be defined as 
\begin{equation}\label{eq:obj_fun}
\Phi^{\bmpi}(t,\bfc):=\bmpi^\top\left(\bmmu-\1r\right)-\dfrac{1}{2}\bmpi^\top\left(\delta\bfSigma\bfSigma^\top+\bfe(t,\bfc)\odot\bfD\bfD^\top\right)\bmpi.
\end{equation}
Then, we can rewrite the equation \eqref{eq:HJB} as follows.
\begin{itemize}
\item[(i)] If $\delta\in(0,1)\cup(1,+\infty)$, we get that \eqref{eq:HJB} is equivalent to 
\begin{equation}\label{eq:HJB_eq_CRRA}
\begin{cases}
\partial_t\varphi(t,\bfc)+\mcLC\varphi(t,\bfc)+\varphi(t,\bfc)\left[(1-\delta)r+(1-\delta)\displaystyle\max_{\bmpi\in\mathbb{R}^d}\Phi^{\bmpi}(t,\bfc)\right]=0,&(t,\bfc)\in[0,T]\times\mcD,\\
\varphi(t,\bfc)=1,&\bfc\in\mcD.
\end{cases}
\end{equation}
\item[(ii)] If $\delta=1$, corresponding the logarithmic case, the equation \eqref{eq:HJB} becomes:
\begin{equation}\label{eq:HJB_eq_LOG}
\begin{cases}
\partial_t\varphi(t,\bfc)+r+\mcLC \varphi(t,\bfc)+\displaystyle\max_{\bmpi\in\mathbb{R}^d}\Phi^{\bmpi}(t,\bfc)=0,&(t,\bfc)\in[0,T]\times\mcD,\\
\varphi(T,\bfc)=0,\quad \bfc\in\mcD.
\end{cases}
\end{equation}
\end{itemize}
We let $\bmpi^{\star}:=\argmax \Phi^{\bmpi}(t,\bfc)$. Taking the gradient and the Hessian of $\Phi^{\bmpi}$ with respect to $\bmpi$ we get that 
\begin{align}
	\nabla_{\bmpi}\Phi^{\bmpi}(t,\bfc)&=\left(\bmmu-\1r\right)-\left(\delta\bfSigma\bfSigma^\top+\bfe(t,\bfc)\odot\bfD\bfD^\top\right)\bmpi,\\
	\text{Hess}_{\bmpi}\Phi^{\bmpi}(t,\bfc)&=-(\delta\bfSigma\bfSigma^\top+\bfe(t,\bfc)\odot\bfD\bfD^\top).
\end{align}
Then, setting $\nabla_{\bmpi}\Phi^{\bmpi}(t,\bfc)=\mathbf{0}$, provides the candidate optimal strategy $\bmpi^{\star}(t, \bfc)$ given by \eqref{eq:sol_CRRA}. Moreover, since $\text{Hess}_{\bmpi}\Phi^{\bmpi}(t,\bfc)$ is negative definite for every $\bmpi$, this ensure that $\bmpi^\star(t, \bfc)$ in equation \eqref{eq:sol_CRRA} is the well defined global maximiser. Next we show that $\bmpi^\star(t, \bfc)$ is an admissible control. We note that 
\begin{align}
\|\bmpi^\star(t,\bfc)\|&\le\max\left(\text{Sp}\left(\left(\delta\bfSigma\bfSigma^\top+\bfe(t,\bfc)\odot\bfD\bfD^\top\right)^{-1}\right)\right)\|\bm{\mu}-\1r\|\\
	&\le\dfrac{\|\bm{\mu}-\1r\|}{\min\left(\text{Sp}\left(\delta\bfSigma\bfSigma^\top+\bfe(t,\bfc)\odot\bfD\bfD^\top\right)\right)}\\
	&\le\dfrac{\|\bm{\mu}-\1r\|}{\min\left(\text{Sp}\left(\delta\bfSigma\bfSigma^\top\right)\right)},
\end{align}
for every $(t,\bfc)\in[0,T]\times\mathcal{D}$, where $\text{Sp}(\cdot)$ denotes the spectrum of a matrix. Consequently, for every $\Xi\ge\frac{\|\bm{\mu}-\1r\|}{\min\left(\text{Sp}\left(\delta\bfSigma\bfSigma^\top\right)\right)}$, it holds that 
\begin{equation}
\mathbb{E}^{\mathbb{P}^M}\left[\int_0^T\|\bmpi^\star_s\|^2\de s\right]\le\mathbb{E}^{\mathbb{P}^M}\left[\int_0^T\Xi^2\de s\right]=\Xi^2T\less\infty,
\end{equation}
Hence, condition \eqref{eq:adm_cond} is satisfied. Replacing $\bmpi$ with $\bmpi^\star$ in equations \eqref{eq:HJB_eq_CRRA} and \eqref{eq:HJB_eq_LOG} yields the Cauchy problem in equation \eqref{eq:PDE}. If $\varphi$ is a classical solution of \eqref{eq:PDE}, then $w$ is also regular and $|w(t,x,\bfc)|\le c_1(1+|x| + |x|^{1-\delta})$ and solves $(2.5)$, hence the result follows from Theorem \eqref{thm:ver_thm}. 
\section{Proofs of some technical results of Section \ref{sec:examples}}\label{sec:proofHJB}
\subsection{Proof of Lemma \ref{cor:CIR_P}}\label{sect:proof_CIR_P}
Let $S\in[0,T)$,  $\mcR=(\frac{1}{R},R)^d$, for some $R>1$, $Q=[0,S)\times\mcR$ and consider the PDE with final condition 
\begin{equation}\label{cauchy_prob}
\begin{cases}
\partial_tv(t,\bfc)+ \mcLC v(t,\bfc)+H(t,\bfc) v(t,\bfc)= f(t,\bfc),\qquad&\mbox{ in }Q,\\
v(t,\bfc)= \varphi(t,\bfc)  ,\qquad&\mbox{ in }\partial_0Q,
\end{cases}	
\end{equation}
$\partial_0Q$ denoting the parabolic boundary of $Q$. Under the assumption that $\alpha_i(t),\,\bar{C}_i(t)$ are Lipschitz continuous for every $i=1, \dots, d$, it holds that $H$ and $f$ are Lipschitz continuous and bounded; then, they are in particular $p$-Hölder continuous for all $p\in(0,1)$, so the coefficients of the PDE \eqref{cauchy_prob} satisfy in $Q$ all the assumptions of, e.g. Theorem 9 and Corollary 2 in Chapter 3, Sec. 4 in \cite{AF_book}. Therefore a unique bounded solution $v\in C^{1,2}([0,S)\times\mcR)\cap C([0,S]\times\bar{\mcR})$, with $\bar \mcR=[\frac{1}{R},R]^d$, exists.
We let $\tau_\mcR$ be the first exit time of the process $\bfC$ starting in $(t,\bfc)\in[0,S)\times\mcR$ from the set $\mcR$, i.e. 
\[\tau_\mcR=\inf\{s>t: \bfC_s\notin \mcR\}.\]
We define, for every $t < s$ 
$$Z_s:=\exp\Big( \int_t^{s\wedge \tau_\mcR} H(u,\bfC_u)\de u \Big)v(s,\bfC_s) - \int_t^{s\wedge \tau_\mcR} \exp\Big( \int_t^u H (r,\bfC_r)\de r \Big) f (u,\bfC_u)\de u,$$
then $\{Z_s\}_{s \in [t, T\wedge \tau_\mcR]}$ is a martingale. Moreover, it holds that $(S\wedge \tau_\mcR,\bfC_{S\wedge \tau_\mcR})\in \partial_0 Q$, then, 
\begin{align*}
	v(t,c) &=\E^{t,\bfc}(Z_t)=\E^{t,\bfc}(Z_{S\wedge \tau_\mcR}) \\
	=&\E^{t,c}\Big[\exp\Big( \int_t^{S\wedge \tau_\mcR} H(s,\bfC_s)\de s \Big) \varphi(S\wedge \tau_\mcR,\bfC_{S\wedge \tau_\mcR}) - \int_t^{S\wedge \tau_\mcR} \exp\Big( \int_t^s H (u,\bfC_u)\de u \Big) f (s,\bfC_s)\de s\Big].
\end{align*}
By the strong Markov property,
\begin{multline*}
	\varphi(S\wedge \tau_\mcR,\bfC_{S\wedge \tau_\mcR})\\
	=\E\Big[\exp\left( \int_{S\wedge \tau_\mcR}^T H(s,\bfC_s)\de s  \right)\ind[\delta \in (0,1)\cup(1,+\infty)]- \int_{S\wedge \tau_\mcR}^T \exp\Big( \int_t^s H (u,\bfC_u)\de u \Big) f (s,\bfC_s)\de s \,\big|\, \mathcal{F}_{S\wedge \tau_{\mcR}}\Big].
\end{multline*}
By replacing above, we conclude that $v$ coincides with $\varphi$ on $Q$, and by letting $S \uparrow T$ and $R \uparrow  \infty$, the result follows on the whole domain. Hence, we get that $\varphi(t, \bfc)$ is a solution of \eqref{eq:PDE}. \\

Next we show uniqueness. Assume that $2\kappa_i\inf_{t\in[0,T]}\bar{C}_i(t)\ge \lambda_i^2$ for all $i=1,\dots,d$. Let $v\in\mcC([0,T]\times (0,\infty)^n)$ denote a solution to \eqref{eq:PDE}. We prove that $v=\varphi$.
Let $S_n<T$ and let $\mcR_n=(\frac{1}{R_n}, R_n)^d$ denote a sequence of rectangles, such that $Q_n=[0,S_n)\times \mcR_n\uparrow[0,T)\times (0,\infty)^n$. Let $v_n$ be the unique solution to
\begin{equation}\label{eq:pde_sequence}
	\begin{cases}
		\partial_tv_n(t,\bfc)+ \mcLC v_n(t,\bfc)+H(t,\bfc) v_n(t,c)= f(t,\bfc),\qquad&\mbox{ in }Q_n,\\
		v_n(t,\bfc)= v(t,\bfc)  ,\qquad&\mbox{ in }\partial_0Q_n.
	\end{cases}	
\end{equation}    
Since $v$ trivially solves the above PDE problem \eqref{eq:pde_sequence}, by uniqueness, we get that $v_n=v$ and
\begin{align}
	v(t,\bfc)= & \E^{t,x,y}\Big[\exp\Big( \int_t^{S_n\wedge \tau_{\mcR_n}} H(s,\bfC_s)\de s \Big)v(S_n\wedge \tau_{\mcR_n},\bfC_{S_n\wedge \tau_{\mcR_n}}) \\&- \int_t^{S_n\wedge \tau_{\mcR_n}} \exp\Big( \int_t^s H (u,\bfC_u)\de u \Big) f (s,\bfC_s)\de s\Big].
\end{align}
By Feller condition we get that, for a process $\bfC$ starting at $t$ at value $\bfC_t=\bfc$, with $c_i>0$ for every $i =1, \dots d$, $\P(\bfC_s\in(0,\infty)^d, \text{ for all } s)=1$. Then,  taking $n\to\infty$, one has $\tau_{\mcR_n}\uparrow \infty$ and since $v$ is continuous, we obtain $v\equiv \varphi$.
\section{Proofs of some technical results of Section \ref{sect:locally_risk_minimizing}}\label{sec:proofHedging}
\subsection{Proof of Proposition \ref{prop:FS_decomp_PE}}\label{sect:proof_FS_decomp_PE}
Using equations \eqref{eq:value}, \eqref{eq:B}, \eqref{eq:U}, and applying It\^{o}'s product rule, we compute:
\begin{equation}
	\de \E^{\mathbb{Q}^*}[G^{PE} \mid {\mathcal{G}}_t] 
	=B_{t-} \de U_t +   U_{t-} \de B_t,\label{eq:product}
\end{equation}
where the equality follows from independence of the financial market and the mortality risk under $\mathbb Q^*=\mathbb Q^M \times \mathbb P^I$; hence, $\langle B, U \rangle_t = 0$ and $\sum_{s\le t} \Delta B_s \Delta U_s = 0$, for every $t \in [0,T]$. Using \eqref{eq:U} and the martingale representation \eqref{eq:B_integral},
we get that equation \eqref{eq:product} becomes
	\[
	\de \E^{Q^\ast}[G^{PE}\mid\mathcal G_t]
	=
	B_{t-}\bm \beta_t^\top \de \widetilde{ \bf S}_t
	+
	B_{t-}\,\de A_t
	-
	U_{t-}e^{-\int_t^T\gamma(u)\,\de u}
	\bigl(\de H_t-(1-H_t)\gamma(t)\,\de t\bigr).
	\]
	Therefore,
	\[
	\E^{Q^\ast}[G^{PE}\!\mid\mathcal G_t]
	=
	G_0^{PE}
	+
	\int_0^t B_{s-}\bm \beta_s^\top \de \widetilde{\bm S}_s
	+
	\int_0^t\! B_{s-}\,\de A_s
	-
	\int_0^t\! U_{s-} e^{-\int_s^T\!\gamma(u)\,\de u}
	\bigl(\de H_s-(1-H_s)\gamma(s)\,\de s\bigr),
	\]
	where
	\(
	G_0^{PE}
	=
	U_0e^{-\int_0^T\gamma(s)\,ds}.
	\)
Hence,  $\{ B_{t-} \bm{\beta}_t\}_{t \in [0,T]}$ belongs to $\Theta({\mathbb{G}})$. Indeed, $0 \le B_t \le 1$,  for each $t \in [0,T]$, and $\bm \beta\in \Theta(\mathbb F))$. Moreover, define the process $O^{PE}=\{O_t^{PE}\}_{t \in [0,T]}$ as
\begin{equation}
	O_t^{PE}:= \int_0^t B_{s-}\,\de A_s
	-
	\int_0^t U_{s-}e^{-\int_s^T\gamma(u)\,\de u}
	\bigl(\de H_s-(1-H_s)\gamma(s)\,\de s\bigr), \quad t \in [0,T],
\end{equation}
we obtain
	\[
	G^{PE}
	=
	G_0^{PE}
	+
	\int_0^T B_{s-}\bm \beta_s^\top \de \widetilde{\bm S}_s
	+
	O_T^{PE},
	\qquad \P\text{-a.s.}
	\]
Then, $O^{PE}$ is a square-integrable $(\mathbb G,\mathbb P)$-martingale strongly orthogonal to the
$\mathbb G$-martingale part of $\widetilde \bfS$.
Taking $t=T$ gives the F\"ollmer-Schweizer decomposition of $G^{PE}$ and this concludes the proof.

\subsection{Proof of Proposition \ref{Prop:characterization_beta_B_PE}}\label{Prop:proof_characterization_beta_B_PE}
First, we observe that the PDE \eqref{eq:pdeF} has a unique classical solution under very general conditions on the infinitesimal generator $\mathcal{L}^\bfC$, see, e.g. \cite{colaneri2021classical}. 
For convenience of notation, let $\widetilde\varphi:=({\bm{\beta}},{\zeta})$ denote the pseudo-optimal strategy for the payoff
	\(e^{-rT}\varphi(X_T^{\pi^\star})\) with respect to \(\mathbb F\). In view of \cite[Proposition 6.2]{ceci2015hedging} and equation \eqref{eq:FS_phi_i}, we get that the (normalized) value of the strategy $\widetilde\varphi$ is given by
\begin{align}
	V_t(\widetilde\varphi)& =\mathbb{E}^{\mathbb Q^M}\!\left[e^{-rT}\phi\!\left(X_T^{\pi^\star}\right)\middle|\mathcal F_t\right] =U_0 + \int_0^t {\bm \beta}_u \de \widetilde{\bf S}_u + A_t, \quad t \in [0,T],
\end{align}
where 
$$
{\bm \beta}_t=
\frac{\frac{\de \langle V(\widetilde{\varphi}), \widetilde{\bf S}\rangle_t}{\de t}}{\frac{\de \langle \widetilde{\bf S}\rangle_t}{\de t}}.
$$ 
On the other hand, if we denote $Y_t^{\bmpi^{\star}} = e^{-rt} X_t^{\bmpi^{\star}}$, for $t \in [0,T]$, the discounted fund value process $Y^{\bmpi^{\star}}=\{Y_t^{\bmpi^{\star}}\}_{t \in [0,T]}$ is also a $(\mathbb F,\mathbb{Q}^M)$-martingale.  Then, by \cite[Proposition 6.2]{ceci2015hedging} and the Galtchouk-Kunita-Watanabe decomposition of $e^{-rT}\phi\!\left(X_T^{\pi^\star}\right)$ with respect to $Y^{\bmpi^{\star}}$ under $\mathbb{Q}^M$ we have
$$
V_t(\widetilde{\varphi}) = \bar{U}_0 + \int_0^t \widetilde{\beta}_u \de Y_u^{\bmpi^{\star}} + \bar{A}_t,\quad t \in [0,T]
$$
where $$
\widetilde{\beta}_t=
\frac{\frac{\de \langle V(\widetilde{\varphi}), Y^{\bmpi^{\star}}\rangle_t}{\de t}}{\frac{\de \langle Y^{\bmpi^{\star}}\rangle_t}{\de t}},
$$ 
{ $\bar U_0 \in L^2(\mathcal{F}_0, \mathbb{Q}^M)$ and $\bar A = \{\bar A_t\}_{t \in [0,T]}$ is a square-integrable $(\mathbb{F}, \mathbb{Q}^M)$-martingale with $\bar A_0 = 0$, strongly orthogonal to $Y^{\bmpi^{\star}}$.}
It is easy to see that $U_0=\bar{U_0}$ as they both correspond to $e^{-rT}\mathbb{E}^{\mathbb Q^M}\!\left[\phi\!\left(X_T^{\pi^\star}\right)\right]$ and that $A_t=\bar{A}_t$, since any martingale that is orthogonal to  $\widetilde{\bf S}$ is also orthogonal to $Y^{\bmpi^\star}$. Finally, using the fact that $Y^{\bmpi^\star}$ can be expressed in terms of $\widetilde{\bf S}$ and the uniqueness of the Galtchouk-Kunita-Watanabe decomposition we get that  
\begin{equation}\label{eq:beta-prices}
	\bm\beta_t
	=\widetilde{\beta}_t Y_t^{{\bm\pi}^\star}\, \bm\pi^{\star, \top}_t \mathrm{diag}(\widetilde{\bf S}_t)^{-1}, \quad t \in [0,T].
\end{equation}
From the dynamics of the process $Y^{\bmpi^\star}$ under $\Q^M$, 
\[
\de Y^{\bmpi^\star}_t=Y^{\bmpi^\star}_t \bmpi_t^{\star, \top} \mathbf{\Sigma} \de\mathbf{\hat{Z}}_t,
\]
we immediately get that $\de \langle Y^{\bmpi^{\star}}\rangle_t= (Y^{\bmpi^\star}_t)^2 \bmpi_t^{\star, \top}\mathbf{\Sigma}\mathbf{ \Sigma}^\top \bmpi_t^{\star}$. 
Next, by Markovianity, we have 
\begin{align}
	V_t(\widetilde{\varphi}) =F(t,X_t^{\bm \pi^{\star}}, \bfC_t), \quad t \in [0,T]. 
\end{align}
Moreover, $V(\widetilde{\varphi})$ is a $(\mathbb{F}, \Q^M)$ martingale, which implies that the function $F(t,x, \bfc)$ solves the backward PDE \eqref{eq:pdeF}. Applying It\^o's formula to $F(t,X_t^{{\bm \pi}^\star},\mathbf C_t)$ we get that 
\begin{align}
	\de F(t,X_t^{{\bm \pi}^\star},\mathbf C_t) =&\Big({\frac{\partial F}{\partial t}(t,X_t^{{\bm \pi}^\star},\mathbf C_t)}+
	\frac{\partial F}{\partial x}(t,X_t^{{\bm \pi}^\star},\mathbf C_t) X_t^{{\bm \pi}^\star} r \\
	&+ \frac{1}{2}\frac{\partial^2 F }{\partial x^2}(t,X_t^{{\bm \pi}^\star},\mathbf C_t) (X_t^{{\bm \pi}^\star})^2 \bmpi^{\star,\top} \bfSigma \bfSigma^\top \bmpi^\star + \mathcal{L}^\bfC F(t,X_t^{{\bm \pi}^\star},\mathbf C_t)\Big) \de t\\
	&+\frac{\partial F}{\partial x}(t,X_t^{{\bm \pi}^\star},\mathbf C_t) X_t^{{\bm \pi}^\star} \bmpi^{\star,\top} \bfSigma \de \whbfZ_t  +{\nabla_{\bfc}F}(t,X_t^{{\bm \pi}^\star},\mathbf C_t) \de \mathbf C_t^m,
\end{align}
where the notation $\mathbf C_t^m$ is used to indicate the martingale part of $\mathbf C_t$ and $\nabla_{\bfc}F(t, x, \bfc)$ the gradient with respect to $\bfc=(c_1, \dots, c_d)$. Using \eqref{eq:pdeF} to eliminate the drift and taking the predictable covariation with respect to $Y^{\bmpi^\star}$
yields
\[
\frac{\de \langle V(\widetilde{\varphi}), Y^{\bmpi^{\star}}\rangle_t}{\de t} = \frac{\partial F}{\partial x}(t,X_t^{{\bm \pi}^\star},\mathbf C_t) X_t^{{\bm \pi}^\star} Y^{\bmpi^{\star}}_t \bmpi^{\star,\top}  \bfSigma \bfSigma^\top \bmpi_t^{\star}=  \frac{\partial F}{\partial x}(t,X_t^{{\bm \pi}^\star},\mathbf C_t) (X_t^{{\bm \pi}^\star})^2e^{-rt}  \bmpi_t^{\star,\top}  \bfSigma \bfSigma^\top \bmpi_t^{\star}
\]
and hence, the following representation for $\widetilde{\beta}$:
\begin{equation}\label{eq:beta-returns}
	\widetilde{\beta}_t
	= e^{rt} \frac{\partial F}{\partial x}\!\big(t,X_t^{\pi^\star},\mathbf C_t\big), \quad t \in [0,T].
\end{equation}
Equivalently, writing the decomposition with respect to $\widetilde{\bf S}$, (see equation \eqref{eq:beta-prices}) we get the claimed result. 

\begin{remark}
	Clearly, one could use the the Galtchouk-Kunita-Watanabe decomposition with respect to $\widetilde{\bf S}$ and compute directly the strategy ${\bm \beta}$ that turns out to be $${\bm \beta}=\nabla_{\widetilde{\bf s}} G(t, \widetilde{\bf S}, {\bf C}),$$ where the notation $\nabla_{\widetilde{\bf s}} G(t, \widetilde{\bf s}, {\bf c})$ indicates the gradient with respect to variables $\widetilde{\bf s}=(\widetilde{s}_1, \dots, \widetilde{s}_d)$ and the function $G$ solves the equation 
	\begin{equation}
		\frac{\partial G}{\partial t}(t, \widetilde{\bf s}, {\bf c}) + \frac{1}{2}\sum_{j, l=1}\frac{\partial^2 G }{\partial \widetilde{s_j}\widetilde{s_l}}(t, \widetilde{\bf s}, {\bf c})  \widetilde{s_j} \widetilde{s_l} (\bfSigma \bfSigma^\top)_{j,l}  + \mathcal{L}^\bfC G(t, \widetilde{\bf s}, {\bf c})=0.
	\end{equation}
	However, this approach is less convenient for the numerics as it would require the simulation of the paths of all stock prices and the computation of all the derivatives of $G$ with respect to the components $(\widetilde{s}_1, \dots, \widetilde{s}_d)$.
\end{remark}

\subsection{Proof of Proposition \ref{prop:TI}}\label{app:C_TI}
We aim to compute the pseudo-optimal strategy $(\bm{\eta}^{\star}, \zeta^{\star})$ for the term insurance contract whose payoff is given by \eqref{eq:FS_phi_i}. Similarly to the case of pure endowment contracts, the optimal value process $V(\bm{\eta}^{\star}, \zeta^{\star})=\{V_t(\bm{\eta}^{\star}, \zeta^{\star})\}_{t \in [0,T]}$ is characterized by 
	\[
	V_t(\bm \eta^\star,\zeta^\star)
	=
	\mathbb E^{\Q^\ast}\left[
	\int_0^T e^{-ru}\psi(u,X_u^{\bmpi^\star})\,\de H_u
	\Bigm| \mathcal G_t
	\right].
	\]
	Therefore,
	\[
	V_t(\bm \eta^\star,\zeta^\star)
	=
	\int_0^t e^{-ru}\psi(u,X_u^{\bmpi^\star})\,\de H_u
	+
	\mathbb E^{\Q^\ast}\left[
	\int_t^T e^{-ru}\psi(u,X_u^{\bmpi^\star})\,\de H_u
	\Bigm| \mathcal G_t
	\right],
	\qquad t\in[0,T].
	\]
Since the process $\{\psi(t, X^{\bmpi^{\star}}_{t})\}_{t\in(0,T]}$ is ${\mathbb{G}}$-predictable, we obtain:
	\[
	\mathbb E^{\Q^\ast}\left[
	\int_t^T e^{-ru}\psi(u,X_u^{\bmpi^\star})\,\de H_u
	\Bigm| \mathcal G_t
	\right]
	=
	\mathbb E^{\Q^\ast}\left[
	\int_t^T e^{-ru}\psi(u,X_u^{\pi^\star})(1-H_u)\gamma(u)\,\de u
	\Bigm| \mathcal G_t
	\right].
	\]
	Using the independence between the financial market and the mortality risk under
	$\Q^\ast=\Q^M\times \P^I$, we get
	\[
	V_t(\bm \eta^\star,\zeta^\star)
	=
	\int_0^t e^{-ru}\psi(u,X_u^{\bmpi^\star})\,\de H_u
	+
	\int_t^T U(u,t)\,B(u,t)\,\de u,
	\qquad t\in[0,T],
	\]
	where
	\[
	U(u,t):=
	\mathbb E^{\Q^M}\left[e^{-ru}\psi(u,X_u^{\bmpi^\star})\mid\mathcal F_t\right],
	\qquad
	0\le t\le u\le T,
	\]
	and
	$B(u)=\{B(u,t)\}_{t\in[0,u]}$ is given by \eqref{eq:Bu_expl}. For every \(u\in[0,T]\), we assume that the discounted payoff
	\(e^{-ru}\psi(u,X_u^{\bmpi^\star})\) admits the F\"ollmer--Schweizer decomposition with respect to
	\(\widetilde S\) and \(\mathbb F\), namely
\begin{align}\label{FS-Phi}
	e^{-ru}\psi(u, X^{\bmpi^{\star}}_{u}) = U(u, 0) + \int_0^u \bm{\beta}^{\top}_r(u) \de \widetilde{\bfS}_r + A_u(u), \quad  \P^M\text{-a.s.}, \quad  u \in [0,T],
\end{align}
where $U(u, 0) \in L^2(\mathcal{F}_0,  \P^M)$, $\bm{\beta}(u):=\{ \bm{\beta}_t(u)\}_{t\in[0,u]}\in\Theta(\mathbb F)$, and $A(u):=\{A_t(u)\}_{t\in[0,u]}$ is a square-integrable $(\mathbb{F}, \P^M)$-martingale with $A_0(u)=0$, strongly orthogonal to the martingale part of $\widetilde{\bfS}$. Since $\widetilde{\bfS}$ is an $(\mathbb{F}, \Q^M)$-martingale, for every $0 \le t \le u \le T$ we get that
\begin{align}
	U(u,t) = \E^{\mathbb{Q}^M} \left[e^{-ru} \psi(u, X^{\bmpi^{\star}}_{u}) \mid \mathcal{F}_t \right] 
	= U_0(u) + \int_0^t \bm{\beta}^{\top}_r(u) \de\widetilde{\bfS}_r + A_t(u), 
\end{align}
where $A(u)$ is a $(\mathbb{F}, \Q^M)$-martingale by definition of minimal martingale measure. Next, we assume some structure for the process $A(u)$. In particular, we suppose that there exists a process $\varsigma(u)=\{\varsigma_t(u)\}_{t\in[0,u]}$ for every $u \in [0,T]$ such that
\begin{equation}
	A_t(u) = \int_0^t \varsigma_r(u) \de A_r, \quad t \in [0,u],
\end{equation}
and that 
\begin{equation}
	\E^{\P^M} \left[ \int_0^T {\varsigma_t}^2(u) \de \langle A \rangle_t \right] < \infty,
\end{equation}
where $A$ is a square-integrable $(\mathbb{F}, \P^M)$-martingale, strongly orthogonal to the martingale part of $\widetilde{\bfS}$. This is an additional structural assumption on the family $\{A(u),  u\in[0,T]\}$, introduced to obtain a tractable representation of the orthogonal martingale term. By repeating the same arguments as in the pure endowment case, we get that 
$G^{TI}$ admits the F\"ollmer–Schweizer decomposition:
\begin{equation}\label{eq:GTI}
	G^{TI} = G^{TI}_0 + \int_0^T \left( \int_t^T B_{t-}(u)\bm{\beta}^{\top}_t(u)\,du \right) \de \widetilde{\bfS}_t + O_T^{TI} \quad {\mathbb P}\text{-a.s.},
\end{equation}
where $G^{TI}_0 = \E^{\mathbb{Q}^*}[G^{TI}]$. Moreover, the orthogonal martingale term is given by
\begin{align}
	O_t^{TI} &= \int_0^t  e^{-ru}\psi(u, X^{\bmpi^{\star}}_{u-})\,\big(\de H_u - (1-H_u) \gamma(u)\, \de u\big) \nonumber \\
	&\quad + \int_0^t  \left( \int_r^T U(u, r-)\xi_r(u)\,du \right) \big(\de H_r - (1-H_r)\gamma(r) \,\de r\big) \nonumber \\
	&\quad + \int_0^t  \left( \int_r^T B_{r-}(u)\varsigma_r(u)\,du \right) \de A_r,\quad t\in[0,T], \label{eq:K1}
\end{align}
with  $\xi_r(u)=\gamma(u) e^{-\int_r^u\gamma(s) \de s}$, $0\le r\le u\le T$. Hence, by \ref{prop:FS}, the pseudo-optimal strategy is given by
\[
\bm \eta_t^\star
=
\int_t^T B(u,t-)\beta_t(u)\,\de u,
\qquad t\in[0,T],
\]
and
\[
\zeta_t^\star
=
V_t(\bm \eta^\star,\zeta^\star)
-
\left(
\int_t^T B(u,t-)\beta_t(u)\,\de u
\right)^\top
\widetilde {\bm S}_t,
\qquad t\in[0,T].
\]
This concludes the proof.

\section{Numerics}\label{sec:numerics}
In this Section, we introduce a Monte Carlo framework to evaluate and hedge unit linked contracts on the optimized green investment fund. The methodology involves two key steps: first, the implementation of efficient simulation schemes for the fund and its underlying carbon intensity process, and second, the use of conditioning techniques. This approach allows for variance reduction, enabling the efficient pricing and hedging of contracts such as pure endowments, term insurances, and endowment insurances. 
\subsection{An efficient scheme for the Fund}
We assume that a given simulation scheme (exact or approximate) provides a discretized path for $\bfC$ on a uniform time grid $\{jh\mid j=0,\ldots,N\}$, where the time step is $h=T/N$ for some  $N\in\N^*$.  We denote the discretized path of $\bfC$ as $\whbfC=(\whC_1,\ldots,\whC_d)$.\footnote{The case of the CIR process is discussed at the end of this section.} Let $\bmpi^\star$ be the optimal control map, and define the approximated optimal weights over the same uniform time grid as
\begin{equation}\label{optimal_approx_weights}
	\widehat{\bmpi}_{jh} =\bmpi^\star(jh, \whbfC^N_{jh}) \quad \text{ for all }j\in\{0,\ldots,N\}.
\end{equation}
To introduce an efficient scheme for the fund $X^{\bmpi^\star}$, we use a conditioning technique which is  well-known in the literature of schemes for stochastic volatility processes (see, e.g., the Heston exact scheme by \cite{BK2006exact}, the SABR exact scheme by \cite{cai2017exact}, and the Heston second order scheme by \cite{AL2025SIFIN}). For hedging purposes, we consider the dynamics of the fund $X^{\bmpi^\star}$ under the minimal martingale measure  $\Q^M$, 
\begin{equation}\label{eq_optimal_portf_sec5}
	\de X_t^{\bmpi^\star}=r X_t^{\bmpi^\star}\de t+X_t^{\bmpi^\star}\bmpi_t^{\star,\top}\bfSigma\de\bfZ_t.
\end{equation}
It is easy to see that, for every $t,s>0$,
\begin{equation}\label{exact_scheme_X}
	X_{t+s}^{\bmpi^\star}=X^{\bmpi^\star}_t\exp\left(\int_t^{t+s} \Big(r-\frac 12 \bmpi^{\star,\top}_u\bfSigma\bfSigma^\top \bmpi_u^{\star}\Big)\de u + \int_t^{t+s} \bmpi_u^{\star,\top}\bfSigma\de \bfZ_u \right),
\end{equation}
Moreover, because of the independence between $\bfZ$ and $\bmpi^\star$, conditioning to the realization of $X_{t}^{\bmpi^\star}$ and $\{\bmpi^\star_u,u\in[t,{t+s}]\}$, the law of the ratio $X_{t+s}^{\bmpi^\star}/X_{t}^{\bmpi^\star}$ is $$ \log\mcN\left(\int_t^{t+s}\Big(r-\frac 12 \bmpi^{\star,\top}_u\bfSigma\bfSigma^\top \bmpi_u^{\star}\Big)\de u, \,\int_t^{t+s}\bmpi^{\star,\top}_u\bfSigma\bfSigma^\top \bmpi_u^{\star}\de u\right),$$ where $\log\mcN(a,b)$ stands for lognormal distribution of mean $a$ and variance $b$.
The latter implies that we can simulate any marginal of the process $X^{\bmpi^\star}$ given a starting point, a discretization of the optimal weights process $\bmpi^\star$ as the one in \eqref{optimal_approx_weights} and a rule to discretize the integral depending on $\bmpi^\star$ (we use the trapezoidal rule). Therefore, an approximation of $X^{\bmpi^\star}$ over all the grid $\{jh\mid j=0,\ldots,N\}$, given $\widehat{\bmpi}$, is 
\begin{align}
	\whX^N_{0} &= x, \nonumber\\
	\whX^N_{(j+1)h} &=\begin{multlined}[t]
		\whX_{jh}\exp\Bigg( rh  - \frac h4 (\whbmpi^\top_{jh}\bfSigma\bfSigma^\top \whbmpi_{jh}+  \whbmpi^\top_{(j+1)h}\bfSigma\bfSigma^\top \whbmpi_{(j+1)h}) \\ +\sqrt{ \frac h2 (\whbmpi^\top_{jh}\bfSigma\bfSigma^\top \whbmpi_{jh}+  \whbmpi^\top_{(j+1)h}\bfSigma\bfSigma^\top \whbmpi_{(j+1)h}})F_{j+1}\Bigg), \quad \text{ for all }j\in\{0,\ldots,N-1\},
	\end{multlined} \label{Nstep_whX}
\end{align}
where $(F_j)_{j=0, \dots, N-1}$ is a set of i.i.d. standard Gaussian random variable independent of $\whbmpi$. 
For some products, as the pure endowment contracts, one needs only to simulate $\whX$ at maturity $T$; in this case, one can save some computation time by using the following, equivalent in law, scheme
\begin{multline}\label{1step_whX}
	\whX^N_{T} = \whX^N_{0}\exp\Bigg( rT   - \frac h4 \sum_{j=0}^{N-1}(\whbmpi^\top_{jh}\bfSigma\bfSigma^\top \whbmpi_{jh} +  \whbmpi^\top_{(j+1)h}\bfSigma\bfSigma^\top \whbmpi_{(j+1)h}) \\+ \sqrt{ \frac h2 \sum_{j=0}^{N-1}(\whbmpi^\top_{jh}\bfSigma\bfSigma^\top \whbmpi_{jh}+  \whbmpi^\top_{(j+1)h}\bfSigma\bfSigma^\top \whbmpi_{(j+1)h})} F\Bigg).
\end{multline} 
where $F$ is a standard Gaussian random variable independent of $\whbmpi$.
\footnote{We can follow the same steps for the dynamic under the physical measure $\P^M$. The construction is the same, and the extra drift term in the dynamic boils down to an extra factor $\exp\big( \frac{h}{2}\sum_{l=0}^{j}(\whbmpi_{lh}+\whbmpi_{(l+1)h})^\top (\bmmu- r \1)\big)$ in the point $\whX^N_{(j+1)h}$ for every $j\in\{0,\ldots, N-1\}$.} A key advantage of these schemes is their efficiency. Indeed, rather than simulating the paths of all stocks, which would require $d$ sets of Gaussian random variables, one can use a single set of Gaussian random variables, or, for specific cases as for example the pure endowment, even one random variable.
\subsubsection*{A scheme for carbon intensity $\bfC$: the example of the CIR process}
In what follows, we present a numerical scheme for carbon intensity assuming that each component follows a Cox-Ingersoll-Ross (CIR) process, and we illustrate the efficient second-order scheme.
For all $i\in\{1,\ldots,d\}$ carbon intensity of each firm is
\begin{equation}\label{eq:carbon_emiss_sec5}
	\de C_{i,t}=\kappa_i\left(\bar{C}_i-C_{i,t}\right)\de t+\lambda_i\sqrt{C_{i,t}}\de W_{i,t},\quad C_{i,0}=c_i.
\end{equation}
Exact simulation schemes for the CIR process exist (e.g., based on CDF inversion or its representation as a Poisson weighted gamma mixture), but they are typically computationally intensive. This high cost makes them a suboptimal choice for our analysis, which relies on generating whole process trajectories.
Therefore, we implement a multidimensional version of the well-known Ninomiya-Victoir scheme for the CIR process (see \cite{alfonsi2010high} and \cite{alfonsi2024high} for convergence, regularity results and extensions). \footnote{Note that the scheme is well-defined if $\lambda_i^2\le 4\kappa_i \bar{C}_i$, which is weaker than the Feller's condition $\lambda_i^2\le 2\kappa_i\bar{C}_i$ for each $i$.}  We define the auxiliary maps
\begin{equation}\label{eta_i_maps}
	\eta_i(t,c,\omega)= \left(\kappa_i\bar{C}_i-\frac{\lambda_i^2}{4}\right)\psi_{\kappa_i}\left(\frac{t}{2}\right) +e^{-\kappa_i t/2}{\kappa_i}\left(\sqrt{(\kappa_i\bar{C}_i-\frac{\lambda_i^2}{4})\psi_{\kappa_i}\left(\frac{t}{2}\right) +e^{-\kappa_i t/2}c}+ \frac{\lambda_i}{2}\omega\right)^2,
\end{equation}
where $\psi_{\kappa}(t)=(1-e^{-\kappa t})/\kappa$ if $\kappa\neq 0$ and $\psi_{\kappa}(t)=t$ if $\kappa=0$. Then, the second-order scheme $\whbfC=(\whC_1,\ldots,\whC_d)$ over the uniform grid $\{jh\mid j=0,\ldots,N\}$ is defined as follows
\begin{align}\label{CI_scheme}
	\whC^N_{i,0} &= c_i, \\
	\whC^N_{i,(j+1)h} &= \eta_i(h,\whC^N_{i,jh},\sqrt{h}G_{i,j+1}), \quad \text{ for all }j\in\{0,\ldots,N-1\},
\end{align}
where $(G_{i,j})_{i=1,\ldots,d,\,j=1,\ldots,N}$ are i.i.d. standard Gaussian random variables.
	\begin{remark}
		A complete proof of the weak convergence of the proposed simulation scheme is beyond the scope of this paper and is left for future research. Nevertheless, the numerical construction is based on a quasi-exact scheme for the log-fund process, in which the integrated variance is approximated by the trapezoidal rule. Therefore, the overall scheme is expected to achieve weak second-order accuracy, provided that the variance process is itself simulated by means of an exact or weak second-order scheme. In our framework, the joint dynamics of the fund and the carbon-intensity process, 
		$(X^{\bmpi^\star},\,C)$, have the same structure as the log-Heston model. For this class of models, \cite{AL2025SIFIN} prove a second-order weak convergence result when the Cox-Ingersoll-Ross variance process is simulated either exactly or through a second-order scheme, without requiring the Feller condition.
\end{remark}
\subsection{Option evaluation and variance reduction formula}
We derive estimators for the efficient pricing and hedging of the three contracts introduced in Section \ref{sect:locally_risk_minimizing}. Although the proposed construction is quite general, it requires the specification of a mortality intensity model under which the quantities $\mathbb{P}^I(\tau\g t)$ and $\mathbb{E}^{P^I}\left[(1-H_t)\gamma_t\right]$ can be computed efficiently, either in closed form or through numerical approximation. Models with deterministic mortality intensity, such as the Gompertz-Makeham model (see equation \eqref{GM_intensity}), provide a natural benchmark setting.
\paragraph{Pure Endowment Contract.} We begin by considering a single pure endowment contract with payoff specified in equation \eqref{eq:PE_PAYOFF}, although the proposed methodology can be extended to more general payoff structures.\\
Our objective is to evaluate $\E^{\Q^*}_{0,x,\bfc}[e^{-rT}\phi(X^{\bmpi^\star}_T)\mathds{1}_{\{\tau>T\}}]$ and $\partial_x\E^{\Q^*}_{0,x,\bfc}[e^{-rT}\phi(X^{\bmpi^\star}_T)\mathds{1}_{\{\tau>T\}}]$. A straightforward approach is to approximate $X^{\bmpi^\star}$ by means of the discretization scheme \eqref{1step_whX} and exploit the independence between the financial market and the policyholder's lifetime, thus obtaining the candidate random variable
\begin{equation}\label{std_rv_PE}
	\Upsilon^{st}_{PE}(N)=e^{-rT}\P^I(\tau>T)\phi(\whX^N_{Nh}).\footnote{The corresponding standard Monte Carlo estimator is obtained by averaging $M$ independent samples.}
\end{equation}
We proceed in a different way by exploiting the conditional distribution of $\whX$ given $\whbmpi$ to construct a new random variable with the same expectation but substantially smaller variance (see panel $A$ of Table \ref{Table_VarReduc}), thereby improving the efficiency of the Monte Carlo estimator. The next result formalizes this idea. 
\begin{prop}\label{var_reduc_prop_PE}
	Let $\Phi_\mcN:\R\rightarrow(0,1)$ be the cumulative distribution function of a standard Gaussian random variable. For all $j\in\{1,\ldots,N\}$, let
	\begin{equation}\label{vNpi}
		v^j_\whbmpi = \frac h2 \sum_{l=0}^{j-1}(\whbmpi^\top_{lh}\bfSigma\bfSigma^\top \whbmpi_{lh}+  \whbmpi^\top_{(l+1)h}\bfSigma\bfSigma^\top \whbmpi_{(l+1)h}), 
	\end{equation}
	and
	\begin{equation}\label{abpi}
		a^j_{\whbmpi}(y)= \frac{\log(y/x)-rT+\frac12 v^j_\whbmpi}{\sqrt{v^j_\whbmpi}}, \quad \text{ and } \quad b^j_\whbmpi(y) = a^j_\whbmpi(y)-\sqrt{v^j_\whbmpi}.
	\end{equation}
	Then, given 
	\begin{equation}\label{vr_rv_PE}
		\Upsilon^{vr}_{PE}(N) = e^{-rT}\Big(k\Phi_\mcN(a^N_\whbmpi(k)) + \bar{k}\Phi_\mcN(-a^N_\whbmpi(\bar{k})) +xe^{rT}\big(\Phi_\mcN(b^N_\whbmpi(\bar{k}))-\Phi_\mcN(b^N_\whbmpi(k))\big)\Big)\P^I(\tau > T),
	\end{equation}
	one has
	\begin{equation}
		\E^{\Q^M}_{0,x,\bfc}[\Upsilon^{vr}_{PE}(N)]=\E^{\Q^M}_{0,x,\bfc}[\Upsilon^{st}_{PE}(N)] \quad\text{and}\quad
		\Var^{\Q^M}_{0,x,\bfc}[\Upsilon^{vr}_{PE}(N)]\le\Var^{\Q^M}_{0,x,\bfc}[\Upsilon^{st}_{PE}(N)]. 
	\end{equation}
\end{prop}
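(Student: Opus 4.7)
The plan is to recognize the random variable $\Upsilon^{vr}_{PE}(N)$ as the conditional expectation of $\Upsilon^{st}_{PE}(N)$ given the $\sigma$-algebra generated by the approximated weights $\widehat{\bmpi}$ (together with the independent mortality part), and then to apply the tower property and the law of total variance.

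First I would exploit the conditional structure of scheme \eqref{1step_whX}. Since the Gaussian $F$ is independent of $\widehat{\bmpi}$ (and, by construction, of everything coming from the insurance probability space), conditionally on $\widehat{\bmpi}$ the random variable $\widehat X^N_{Nh}$ is log-normally distributed with log-mean $\log x + rT - \tfrac12 v^N_{\widehat{\bmpi}}$ and log-variance $v^N_{\widehat{\bmpi}}$, where $v^N_{\widehat{\bmpi}}$ is defined in \eqref{vNpi}. Writing $\phi(y)=k\,\mathds{1}_{\{y\le k\}}+y\,\mathds{1}_{\{k<y<K\}}+K\,\mathds{1}_{\{y\ge K\}}$, a direct computation for a log-normal random variable yields
\begin{align}
\E^{\Q^M}\!\big[\phi(\widehat X^N_{Nh})\,\big|\,\widehat{\bmpi}\big]
&= k\,\Phi_{\mcN}\!\big(a^N_{\widehat{\bmpi}}(k)\big)+K\,\Phi_{\mcN}\!\big(-a^N_{\widehat{\bmpi}}(K)\big)\\
&\quad + x\,e^{rT}\big(\Phi_{\mcN}(b^N_{\widehat{\bmpi}}(K))-\Phi_{\mcN}(b^N_{\widehat{\bmpi}}(k))\big),
\end{align}
where $a^N_{\widehat{\bmpi}}, b^N_{\widehat{\bmpi}}$ are those given in \eqref{abpi}; the identity $\E[Y\mathds{1}_{\{k<Y<K\}}]=e^{m+v/2}(\Phi_{\mcN}(b(K))-\Phi_{\mcN}(b(k)))$ for the log-normal is the only non-trivial computation and it is standard (complete the square in the exponent).

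Next, using the independence between the financial market and the policy holder's lifetime built into the combined market in Section \ref{sect:locally_risk_minimizing}, I multiply the previous identity by $e^{-rT}\P^I(\tau>T)$ to recognize $\Upsilon^{vr}_{PE}(N)=\E^{\Q^M}_{0,x,\bfc}[\Upsilon^{st}_{PE}(N)\mid\widehat{\bmpi}]$. The tower property then immediately gives $\E^{\Q^M}_{0,x,\bfc}[\Upsilon^{vr}_{PE}(N)]=\E^{\Q^M}_{0,x,\bfc}[\Upsilon^{st}_{PE}(N)]$.

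Finally, I would invoke the law of total variance:
\begin{equation}
\Var^{\Q^M}_{0,x,\bfc}[\Upsilon^{st}_{PE}(N)]=\E^{\Q^M}_{0,x,\bfc}\!\big[\Var(\Upsilon^{st}_{PE}(N)\mid\widehat{\bmpi})\big]+\Var^{\Q^M}_{0,x,\bfc}\!\big[\E[\Upsilon^{st}_{PE}(N)\mid\widehat{\bmpi}]\big],
\end{equation}
the second term on the right-hand side being exactly $\Var^{\Q^M}_{0,x,\bfc}[\Upsilon^{vr}_{PE}(N)]$ and the first being non-negative, yielding the desired inequality. The only real obstacle is the closed-form evaluation of $\E[\phi(Y)\mid\widehat{\bmpi}]$ for the capped-floored log-normal, which amounts to a routine change of variables in the Gaussian integral; one should also check that $\Upsilon^{st}_{PE}(N)$ is square-integrable (so that both variances are well defined), which follows from $\phi$ being bounded by $K$ and the boundedness of the mortality factor.
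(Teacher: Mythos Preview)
Your proposal is correct and follows essentially the same route as the paper: recognize that, conditionally on the approximated weights, $\widehat X^N_{Nh}$ is log-normal, compute the capped-floored payoff in closed form to identify $\Upsilon^{vr}_{PE}(N)$ as the conditional expectation of $\Upsilon^{st}_{PE}(N)$, and then apply the tower property and the law of total variance. The only cosmetic difference is that the paper conditions on the scalar $v^N_{\widehat{\bmpi}}$ rather than on the full vector $\widehat{\bmpi}$, which is equivalent here since $\widehat X^N_{Nh}$ depends on $\widehat{\bmpi}$ only through $v^N_{\widehat{\bmpi}}$.
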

\begin{proof}
	From the definition of $v^N_\whbmpi$ and equation \eqref{1step_whX}, we obtain that  $\whX^{N}_T=x\exp(rT-\frac12v^N_\whbmpi+\sqrt{v^N_\whbmpi}F)$. Hence, 
	\begin{align*}
		\E^{\Q^M}[\phi(\whX^{N}_T)] &= \E^{\Q^M}\bigg[\min\Big(\bar{k},\max\big(k,x\exp(rT-\frac12v^N_\whbmpi+\sqrt{v^N_\whbmpi}F)\big)\Big)\bigg], \\
		&= \begin{multlined}[t]
			\E^{\Q^M}\bigg[ k\ind[F<a^N_\whbmpi(k)] + \bar{k}\ind[F>a^N_\whbmpi(\bar{k})] +x\exp(rT-\frac12v^N_\whbmpi+\sqrt{v^N_\whbmpi}F)\ind[a^N_\whbmpi(k)<F<a^N_\whbmpi(\bar{k})]  \bigg].
		\end{multlined}
	\end{align*}
	Next, conditioning respect to $v^N_\whbmpi$ and using the independence between $F$ and $v^N_\whbmpi$, we get 
	\begin{align}
		\E^{\Q^M}[\phi(\whX^{N}_T)]&=\begin{multlined}[t]
			\E^{\Q^M}\bigg[ \E^{\Q^M}\Big[ k\ind[F<a^N_\whbmpi(k)] + \bar{k}\ind[F>a^N_\whbmpi(\bar{k})] \\
			\qquad+x\exp(rT-\frac12v^N_\whbmpi+\sqrt{v^N_\whbmpi}F)\ind[a^N_\whbmpi(k)<F<a^N_\whbmpi(\bar{k})] \,\Big|\, v^N_\whbmpi \Big] \bigg],
		\end{multlined} \nonumber\\
		&=\E^{\Q^M}\Big[ k\Phi_\mcN(a^N_\whbmpi(k)) + \bar{k}\Phi_\mcN(-a^N_\whbmpi(\bar{k})) +xe^{rT}\big(\Phi_\mcN(b^N_\whbmpi(\bar{k}))-\Phi_\mcN(b^N_\whbmpi(k))\big)\Big], \label{PE_reduc_formula}
	\end{align}
	where the functions $a^N_\whbmpi(y)$ and $ b^N_\whbmpi$ are given in $\eqref{abpi}$ and $\Phi_\mcN$ denotes the cumulative distribution function of a standard Gaussian random variable. Now, by the law of total variance,
	$$\Var^{\Q^M}(\Upsilon^{st}_{PE}(N)) = \Var^{\Q^M}(\E[\Upsilon^{st}_{PE}(N)\mid v^N_\whbmpi]) +\underbrace{\E^{\Q^M}[\Var^{\Q^M}(\Upsilon^{st}_{PE}(N))\mid v^N_\whbmpi)]}_{\ge0},$$
	and we recognize  $\E[\Upsilon^{st}_{PE}(N)\mid v^N_\whbmpi]=\Upsilon^{vr}_{PE}(N)$, which completes the proof.
\end{proof}
\begin{remark}
	From Proposition \ref{var_reduc_prop_PE}, we get the following results.
	\begin{itemize}
		\item[(i)] Employing the chain rule and simple algebra, we get a simple and elegant formula for the $x$-derivative of $\Upsilon^{vr}_{PE}(N)$, given by 
		\begin{equation}\label{delta_vr_rv_PE}
			\partial_x \Upsilon^{vr}_{PE}(N) = e^{rT}(\Phi_\mcN(b^N_\whbmpi(\bar{k}))-\Phi_\mcN(b^N_\whbmpi(k)).
		\end{equation} 
		\item [(ii)] Expanding the term  $\Var^{\Q^M}_{0,x,\bfc}(\Upsilon^{st}_{PE}(N)\mid v^N_\whbmpi)$ and computing $\E^{\Q^M}_{0,x,\bfc}[\Upsilon^{st}_{PE}(N)^2\mid v^N_\whbmpi]$, the difference of the variance between the standard estimator and its variance reduced version is given by 
		\begin{align}
			&\Var^{\Q^M}[\Upsilon^{st}_{PE}(N)]-\Var^{\Q^M}[ \Upsilon^{vr}_{PE}(N)] \\
			&= e^{-2rt}\E^{\Q^M}\!\!\bigg[ k^2\Phi_\mcN(a^N_\whbmpi(k)) +\bar{k}^2\Phi_\mcN(-a^N_\whbmpi(\bar{k}))+ x^2e^{2rT}\exp\Big( v^N_\whbmpi\Big)\big(\Phi_\mcN(c^N_\whbmpi(\bar{k}))-\Phi_\mcN(c^N_\whbmpi(k))\big) \\
			\label{var_reduc_quantification}&\quad -\Big(k\Phi_\mcN(a^N_\whbmpi(k)) + \bar{k}\Phi_\mcN(-a^N_\whbmpi(\bar{k})) +xe^{rT}\big(\Phi_\mcN(b^N_\whbmpi(\bar{k}))-\Phi_\mcN(b^N_\whbmpi(k))\big)\Big)^2\bigg]\P^I(\tau  > T)^2,
		\end{align}
		where $c^N_\whbmpi(y) = a^N_\whbmpi(y) - 2\sqrt{v^N_\whbmpi}$.\footnote{For the sake of readability, we have dropped the dependence on $0,x,\bfc$ in expectations and variances in equation \eqref{var_reduc_quantification}.}
	\end{itemize}
\end{remark}
\paragraph{Term Insurance Contract.} For the term insurance contract, we consider the payoff function specified in equation \eqref{eq:TI_PAYOFF}. To evaluate $\E^{\Q^*}_{0,x,\bfc}[e^{-r\tau}\psi(\tau,X^{\bmpi^\star}_\tau)\ind[\tau<T]]$ and $\partial_x\E^{\Q^*}_{0,x,\bfc}[e^{-r\tau}\psi(\tau,X^{\bmpi^\star}_\tau)\ind[\tau<T]]$, we follow the same procedure as in the pure endowment. Setting $P_t=e^{-rt}\E^{\P^I}[(1-H_t)\gamma_t ]$, for every $t \in [0,T]$, the corresponding standard random variable is given by
\begin{equation}\label{st_rv_TI}
	\Upsilon^{st}_{TI}(N) =\frac{h}{2} \sum_{j=0}^{N-1} \big(\psi(jh,\whX^N_{jh})P_{jh}+\psi((j+1)h,\whX^N_{(j+1)h})P_{(j+1)h} \big),
\end{equation}
In the following proposition, we introduce a random variable with the same expected value as \eqref{st_rv_TI}, but with reduced variance (see panel $B$ of Table \ref{Table_VarReduc}).
\begin{prop}\label{var_reduc_prop_TI}
	Let $\Phi_\mcN:\R\rightarrow(0,1)$ be the cumulative distribution function of a standard Gaussian random variable, and let $a^j_\whbmpi$ and $b^j_\whbmpi$ be defined as in \eqref{abpi} for all $j\in\{1,\ldots,N\}$. Define
	\begin{multline}\label{vr_rv_TI}
		\Upsilon^{vr}_{TI}(N)=\frac{h}{2}\sum_{j=0}^{N-1} \left(k\Phi_\mcN(a^j_\whbmpi(k)) + \bar{k}\Phi_\mcN(-a^j_\whbmpi(\bar{k})) +xe^{ rjh}\big(\Phi_\mcN(b^j_\whbmpi(\bar{k}))-\Phi_\mcN(b^j_\whbmpi(k))\big)\right)P_{jh} \\
		+\left( k\Phi_\mcN(a^{j+1}_\whbmpi(k)) + \bar{k}\Phi_\mcN(-a^{j+1}_\whbmpi(\bar{k})) +xe^{ r(j+1)h}\big(\Phi_\mcN(b^{j+1}_\whbmpi(\bar{k}))-\Phi_\mcN(b^{j+1}_\whbmpi(k))\big)\right)P_{(j+1)h}.    
	\end{multline}
	Then
	$$\E^{\Q^M}_{0,x,\bfc}[\Upsilon^{vr}_{TI}(N)] = \E^{\Q^M}_{0,x,\bfc}[\Upsilon^{st}_{TI}(N)] \quad\text{and}\quad \Var^{\Q^M}_{0,x,\bfc}[\Upsilon^{vr}_{TI}(N)] \le \Var^{\Q^M}_{0,x,\bfc}[\Upsilon^{st}_{TI}(N)].$$
\end{prop}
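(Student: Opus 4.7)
The plan is to mimic the Rao--Blackwellization argument that underlies Proposition \ref{var_reduc_prop_PE}, by recognizing $\Upsilon^{vr}_{TI}(N)$ as the conditional expectation of $\Upsilon^{st}_{TI}(N)$ given the whole trajectory of the optimal weights, $\sigma(\whbmpi):=\sigma(\widehat{\bmpi}_{jh},\,j=0,\ldots,N)$ (equivalently, $\sigma(\whbfC^N)$, since $\widehat{\bmpi}_{jh}=\bmpi^\star(jh,\whbfC^N_{jh})$).

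First, I would exploit the fact that the Gaussian sequence $(F_j)_{j=1,\ldots,N}$ appearing in \eqref{Nstep_whX} is i.i.d.\ and independent of $\whbmpi$, to iterate \eqref{Nstep_whX} and obtain
\[
\log(\whX^N_{jh}/x) = rjh - \tfrac{1}{2}v^j_\whbmpi + \sum_{l=1}^{j}\sqrt{w^l_\whbmpi}\,F_l, \qquad w^l_\whbmpi := \tfrac{h}{2}\bigl(\whbmpi^\top_{(l-1)h}\bfSigma\bfSigma^\top\whbmpi_{(l-1)h} + \whbmpi^\top_{lh}\bfSigma\bfSigma^\top\whbmpi_{lh}\bigr).
\]
Since $\sum_{l=1}^{j} w^l_\whbmpi = v^j_\whbmpi$, the conditional marginal law of $\whX^N_{jh}$ given $\whbmpi$ is $\log\mcN(rjh-\tfrac12 v^j_\whbmpi,\,v^j_\whbmpi)$, exactly the setting used at maturity $T$ in the proof of Proposition \ref{var_reduc_prop_PE}.

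Next, I would compute the conditional expectation of each summand $\phi(jh,\whX^N_{jh})P_{jh}$ of $\Upsilon^{st}_{TI}(N)$ given $\whbmpi$. The factor $P_{jh}=e^{-rjh}\E^{\P^I}[(1-H_{jh})\gamma_{jh}]$ is deterministic (it is already an expectation under $\P^I$ and, by the product structure $\P=\P^M\times\P^I$, independent of the financial filtration), so it pulls out of the conditional expectation. Writing
\[
\phi(jh,y) = k\,\ind[y<k] + y\,\ind[k\le y\le K] + K\,\ind[y>K],
\]
the first and third indicators yield Gaussian CDF values at $a^j_\whbmpi(k)$ and $-a^j_\whbmpi(K)$, while the middle contribution is the classical Black--Scholes partial first moment
\[
\E^{\Q^M}\bigl[\whX^N_{jh}\,\ind[k\le \whX^N_{jh}\le K]\,\big|\,\whbmpi\bigr] = x\,e^{rjh}\bigl(\Phi_\mcN(b^j_\whbmpi(K))-\Phi_\mcN(b^j_\whbmpi(k))\bigr),
\]
obtained via the change of variable $b^j_\whbmpi = a^j_\whbmpi-\sqrt{v^j_\whbmpi}$ exactly as in the proof of Proposition \ref{var_reduc_prop_PE}, with $v^j_\whbmpi$ in place of $v^N_\whbmpi$. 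Summing the resulting expressions over $j=0,\ldots,N-1$ reproduces the trapezoidal sum in \eqref{vr_rv_TI}, so that
\[
\Upsilon^{vr}_{TI}(N) = \E^{\Q^M}_{0,x,\bfc}\!\left[\Upsilon^{st}_{TI}(N)\,\big|\,\sigma(\whbmpi)\right].
\]

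The two conclusions then follow immediately: equality of expectations by the tower property, and the variance inequality by the law of total variance,
\[
\Var^{\Q^M}_{0,x,\bfc}[\Upsilon^{st}_{TI}(N)] = \E^{\Q^M}_{0,x,\bfc}\!\bigl[\Var^{\Q^M}(\Upsilon^{st}_{TI}(N)\mid\whbmpi)\bigr] + \Var^{\Q^M}_{0,x,\bfc}[\Upsilon^{vr}_{TI}(N)] \ge \Var^{\Q^M}_{0,x,\bfc}[\Upsilon^{vr}_{TI}(N)].
\]
The one mild subtlety compared with the pure-endowment case is that $\Upsilon^{st}_{TI}(N)$ depends on the joint path $(\whX^N_{jh})_{j=0,\ldots,N}$, whose components share the Gaussian increments $F_l$ and are therefore correlated under the conditional law given $\whbmpi$. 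However, because $\Upsilon^{st}_{TI}(N)$ is \emph{linear} (a trapezoidal sum) in the terms $\phi(jh,\whX^N_{jh})P_{jh}$, only the conditional marginal distributions enter the computation of $\E^{\Q^M}[\Upsilon^{st}_{TI}(N)\mid\whbmpi]$; this reduces the argument to an application of the pure-endowment conditional moment formula term by term, so no additional technical obstacle arises.
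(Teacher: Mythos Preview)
Your argument is correct and follows essentially the same approach as the paper: the paper's own proof just recalls why $\Upsilon^{st}_{TI}(N)$ is the relevant discretization and then states that ``the reminder of the proof follows the same steps as in the proof of Proposition \ref{var_reduc_prop_PE}'', i.e., exactly the term-by-term conditioning and law-of-total-variance computation you spell out. Your explicit observation that linearity of the trapezoidal sum means only the conditional \emph{marginals} of $\whX^N_{jh}$ given $\whbmpi$ are needed (so the shared Gaussian increments do not cause any complication) is a useful clarification that the paper leaves implicit.
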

\begin{proof}
	We start by observing that, by using the compensator of $H_t$, Fubini's Theorem and the independence of $\tau$ from $X^{\bmpi^\star}$ and $\bfC$, one has 
	\begin{align}
		\E^{\Q^*}[e^{-r\tau}\psi(\tau,X^{\bmpi^\star}_\tau)\ind[\tau<T]]&=\E^{\Q^*}\bigg[\int_0^Te^{-rt}\psi(t,X^{\bmpi^\star}_t) \de H_t\bigg]=\E^{\Q^*}\bigg[\int_0^Te^{-rt}\psi(t,X^{\bmpi^\star}_t)(1-H_t)\gamma_t \de t\bigg]\\
		&=\int_0^T e^{-rt}\E^{\Q^M}[\psi(t,X^{\bmpi^\star}_t)]\E^{\P^I}[(1-H_t)\gamma_t ]\de t. \label{Fubini_TI}
	\end{align}
	Now, using equation \eqref{st_rv_TI}, discretizing $X^{\bmpi^\star}$ through the scheme in equation \eqref{Nstep_whX} on the uniform grid $\{jh \mid j=0,\ldots,N\}$, and approximating the Lebesgue integral in \eqref{Fubini_TI} by the trapezoidal rule, we obtain the approximation of \eqref{Fubini_TI} given by:
	\begin{align}
		\int_0^T\E^{\Q^M}[\psi(t,X^{\bmpi^\star}_t)]P_{t}\de t&\approx\frac{h}{2}\sum_{j=0}^{N-1}\E^{\Q^M}[\psi(jh,\whX^N_{jh})]P_{jh}+\E^{\Q^M}[\psi((j+1)h,\whX^N_{(j+1)h})]P_{(j+1)h}\\
		&=\E^{\Q^M}[\Upsilon^{st}_{TI}(N)]. 
	\end{align}
	where $P_t=e^{-rt}\E^{\P^I}[(1-H_t)\gamma_t]$. The conclusion then follows by repeating the same conditioning argument used in the proof of \ref{prop:FS_decomp_PE}.
\end{proof}
\paragraph{Endowment Insurance Contract.} We next consider an endowment insurance contract whose payoff combines discounted versions of the payoffs defined in equations \eqref{eq:disc_PE_PAYOFF} and \eqref{eq:disc_TI_PAYOFF}. As in the pure endowment and term insurance settings, we aim to approximate
$$\E^{\Q^*}_{0,x,\bfc}\left[\tilde\psi(\tau,X^{\bmpi^\star}_\tau)\ind[\tau<T] +\tilde\phi(X^{\bmpi^\star}_T)\ind[\tau\ge T]\right],$$ and its derivative $\partial_x\E^{\Q^*}_{0,x,\bfc}\left[\tilde\psi(\tau,X^{\bmpi^\star}_\tau)\ind[\tau<T] +\tilde\phi(X^{\bmpi^\star}_T)\ind[\tau\ge T]\right]$. Combining the results of Propositions \ref{var_reduc_prop_PE} and \ref{var_reduc_prop_TI} yields the following standard and variance reduced random variables:  
\begin{equation}
	\Upsilon^{st}_{EI}(N) = \varrho \Upsilon^{st}_{TI}(N) + \Upsilon^{st}_{PE}(N),
	\qquad
	\Upsilon^{vr}_{EI}(N) = \varrho \Upsilon^{vr}_{TI}(N) + \Upsilon^{vr}_{PE}(N).
\end{equation}
\subsection*{Variance reduction Test} 
We perform a simulation exercise to assess the variance reduction achieved by the proposed methodology, using the parameter set described in Section \ref{sec:hedging}. For each contract, we compare the standard and variance reduced estimators and report the resulting improvement in computational efficiency. The results are summarized in Table \ref{Table_VarReduc}.\\
Panel $A$ of Table \ref{Table_VarReduc} reports the results for the pure endowment contract. Comparing the variance of $\Upsilon^{st}_{PE}(N)$ with that of $\Upsilon^{vr}_{PE}(N)$, we obtain an average reduction of $99.9535$\%. Equivalently, the same statistical error achieved with $M$ samples of $\Upsilon^{st}_{PE}(N)$ can be obtained using only about $M/2150$ samples of $\Upsilon^{vr}_{PE}(N)$, resulting in a substantial reduction in computational time.\\
Panel $B$ of Table \ref{Table_VarReduc} presents the corresponding results for the term insurance contract. The variance reduction remains very significant, with an average value of $97.4639$\%. In practical terms, this implies that the estimator $\Upsilon^{vr}_{TI}(N)$ requires only about one-fortieth of the number of simulations needed by the standard estimator $\Upsilon^{st}_{TI}(N)$ to attain the same statistical error.\\
Finally, panel $C$ of Table \ref{Table_VarReduc} summarizes the results for the endowment insurance contract. These are close to those of the pure endowment for maturities $T=\{5,10,20\}$, while for $T=30$ they become more similar to those of the term insurance contract. This is consistent with the structure of the contract, which resembles a pure endowment at shorter maturities and progressively behaves more like a term insurance contract as the maturity increases.\\
Overall, the variance reduction results are highly satisfactory. Even in the least favorable case, namely the term insurance contract with maturity $T=30$, the variance reduction is still equal to $94.7999$\%. This means that only about $5$\% of the simulations required in the standard case are needed to achieve the same level of statistical error.
\begin{table}[H]
	\centering
	\begin{tabular}{ccccc}
		\Xhline{1.2pt}
		\rowcolor{gray!20}
		\multicolumn{5}{c}{Panel $A$: Pure endowment}                                  \\
		\Xhline{1.2pt}
		& $T=5$          & $T=10$         & $T=20$         & $T=30$        \\
		\hline
		$\Var(\Upsilon^{st}_{PE}(N))$  & $2.3325$e$-02$ & $4.2705$e$-02$ & $3.7802$e$-02$ & $4.6313$e$-03$  \\
		$\Var(\Upsilon^{vr}_{PE}(N))$  & $7.3157$e$-06$ & $2.0879$e$-05$ & $2.1528$e$-05$ & $2.2617$e$-06$  \\
		\hline
		Variance reduction             & $99.9686$\%    & $99.9511$\%    & $99.9431$\%    & $99.9511$\%   \\
		\Xhline{1.2pt}
		\rowcolor{gray!20}
		\multicolumn{5}{c}{Panel $B$: Term insurance}                                  \\
		\Xhline{1.2pt}
		& $T=5$          & $T=10$         & $T=20$         & $T=30$        \\
		\hline
		$\Var(\Upsilon^{st}_{TI}(N))$  & $6.4640$e$-06$ & $4.5424$e$-05$ & $4.2306$e$-04$ & $1.1124$e$-03$  \\
		$\Var(\Upsilon^{vr}_{TI}(N))$  & $2.5549$e$-08$ & $5.7513$e$-07$ & $1.3889$e$-05$ & $5.7844$e$-05$  \\
		\hline
		Variance reduction             & $99.6047$\%    & $98.7339$\%    & $96.7170$\%    & $94.7999$\%   \\
		\Xhline{1.2pt}
		\rowcolor{gray!20}
		\multicolumn{5}{c}{Panel $C$: Endowment insurance}                             \\
		\Xhline{1.2pt}
		& $T=5$          & $T=10$         & $T=20$         & $T=30$        \\
		\hline
		$\Var(\Upsilon^{st}_{EI}(N))$  & $2.3378$e$-02$ & $4.2908$e$-02$ & $3.9107$e$-02$ & $5.9768$e$-03$  \\
		$\Var(\Upsilon^{vr}_{EI}(N))$  & $8.1701$e$-06$ & $2.8083$e$-05$ & $6.8198$e$-05$ & $8.1028$e$-05$  \\
		\hline
		Variance reduction             & $99.9651$\%    & $99.93454$\%   & $99.8256$\%    & $98.6443$\%\\  
		\Xhline{1.2pt}
	\end{tabular}
	\caption{Sample variances of the standard and variance reduced estimators for the three insurance contracts, together with the corresponding percentage variance reduction, for maturities $T=\{5,10,20,30\}$. Panel $A$ refers to the pure endowment contract, Panel $B$ to the term insurance contract, and Panel $C$ to the endowment insurance contract. Number of samples: $10^6$. Discretization steps: $N=N(T)=5T$.}
	\label{Table_VarReduc}
\end{table}
\end{document}